\def\doi{9(1:03)2013}
\begin{document}

\title{Counting \texorpdfstring{\CTL}{CTL}}

\author[F.~Laroussinie]{Fran\c{c}ois Laroussinie\rsuper a}

\address{{\lsuper{a,c}}LIAFA, Universit\'e Paris Diderot -- Paris 7 \& CNRS UMR
  7089, France}
\email{\{Francois.Laroussinie, Eudes.Petonnet\}@liafa.jussieu.fr}

\author[A.~Meyer]{Antoine Meyer\rsuper b}

\address{{\lsuper b}LIGM, Universit\'e Paris Est -- Marne-la-Vall\'ee \& CNRS UMR
  8049, France}
\email{Antoine.Meyer@univ-mlv.fr}

\author[E.~Petonnet]{Eudes Petonnet\rsuper c}

\address{\vskip-6 pt}

\ACMCCS{[{\bf Software and its Engineering}]: Software creation and
  management---Software verification and validation---Formal software
  verification; [{\bf Theory of computation}]: Logic---Modal and
temporal logics \&  Verification by model checking}
\subjclass{F.4.1, D.2.4}

\keywords{branching time, counting, constraints, satisfiability,
  complexity}

\begin{abstract}
  This paper presents a range of quantitative extensions for the
  temporal logic \CTL. We enhance temporal modalities with the ability
  to constrain the number of states satisfying certain sub-formulas
  along paths. By selecting the combinations of Boolean and arithmetic
  operations allowed in constraints, one obtains several distinct
  logics generalizing \CTL. We provide a thorough analysis of their
  expressiveness and succinctness, and of the complexity of their
  model-checking and satisfiability problems (ranging from \P-complete
  to undecidable). Finally, we present two alternative logics with
  similar features and provide a comparative study of the properties
  of both variants.
\end{abstract}

\maketitle


\section{Introduction}

Among the existing approaches to the formal verification of
automated systems, model checking \cite{clarke81b,queille82} aims at
automatically establishing the validity of a certain formal
specification (modeled as a formula in a suitable logic) over the
system under study (modeled for instance as a finite transition system).
This set of techniques is now well established and
successful, with several industrial applications.

To formalize the specification of temporal properties, for instance in
the case of reactive systems, temporal logics (TL) were proposed
thirty years ago~\cite{pnueli77} and widely studied since. They are
today used in many model-checking tools. There exists a wide variety
of temporal logics, differing for instance by the models over which
formulas are interpreted or by the kind of available temporal
modalities. Two well-known examples are \LTL in the linear-time
framework (where formulas are interpreted over infinite runs) and \CTL
for the branching-time case (where formulas are interpreted over
states of Kripke structures). See~\cite{emerson90} for a survey of
classical temporal logics for systems specification.


Temporal logics have been extended in various ways in order to
increase their expressive power.  For example, while \LTL and \CTL
only handle future operators, it is also possible to consider
past-time modalities to express properties of the past of a run. One
can also extend temporal logics with regular expressions (see for
instance ~\cite{wolper83b, emerson97b}). Other extensions were
proposed to handle \emph{quantitative} aspects of systems. For
example, some logics can contain timing constraints to specify that an
event, say $P_1$, has to occur less than $10$ time units before
another event $P_2$. This kind of temporal logics, such as
\TCTL~\cite{alur93, emerson92b}, have been particularly studied in the
framework of timed model checking.  Another quantitative extension
consists in \emph{probabilistic} logics where one can specify
probability bounds over the truth of some property (see for
instance~\cite{bianco95}).


We propose several extensions of \CTL with constraints over the number
of states satisfying certain sub-formulas along runs.  For example,
considering a model for an ATM, we can express the property ``whenever
the PIN is locked, at least three erroneous attempts have been made''
by: $\non \EFs{\nb \texttt{error} \leq 2} \texttt{lock}$ (one cannot
reach a state where the PIN is locked but less than two errors have
occurred).  Similarly, $\non \EFs{\nb \texttt{error} \geq 3}
\texttt{money}$ states that three mistakes forbid cash retrieval.  We
put a subscript on the temporal modality (as in \TCTL) to constrain
the runs over which the modality holds. Note that most properties of
this kind can also be expressed in \CTL\ by nesting $\Ex\_\U\_$
modalities, but the resulting formulas may be too large to be
conveniently handled by the user of a model checker. This is discussed
in more detail in Section \ref{sec:express}, where we study the
expressiveness of each of our fragments compared to \CTL. In some
cases, there exist natural translations into equivalent \CTL formulas,
implying that there is no strict gain in expressiveness. However,
these translations are often at best \emph{exponentially larger} than
the original formula. In other cases, we show that our extensions
strictly increases the expressive power of \CTL.

We consider the model checking problem for various sets $\calC$ of
constraints.
We show that polynomial-time algorithms exist when considering Until
modalities with constraints of the form\footnote{Unless stated
  otherwise, complexity results always assume a binary encoding of
  constants.} $(\sum_i \nb \vfi_i) \sim c$ with $\sim\, \in
\{<,\leq,=,\geq,>\}$ and $c\in \Nat$. Additionally allowing Boolean
combinations of such constraints or integer coefficients in the sum
(or both) makes model checking $\DDP$-complete. We also consider the
case of ``diagonal'' constraints $(\nb\vfi-\nb\psi) \sim c$ and their
more general form $(\sum_i \pm \nb \vfi_i) \sim c$ with $c \in
\mathbb{Z}$ and show that model checking can still be done in
polynomial time. However, allowing Boolean combinations of such
constraints leads to undecidability. We also investigate the
complexity of the satisfiability problem, which is 2-\EXPTIME-complete
for all fragments without subtraction and undecidable
otherwise. Finally, in order to investigate alternative definitions of
counting logics generalizing \CTL, we define another semantics for our
logics (called cumulative semantics) and a logic with explicit
variables. In both cases, we show that it induces a complexity blow-up
for model checking, which becomes \PSPACE-complete without subtraction
and undecidable otherwise. The asymptotic complexity of satisfiability
remains however 2-\EXPTIME-complete in all decidable cases.



Several existing works provide related results.
In~\cite{lar10}, we presented a preliminary version of the current
article. Proofs and constructions were since considerably refined, and
are provided here in greater detail. This paper also provides new
satisfiability results.
In~\cite{lmp10ltl}, we provided a similar study of counting extensions
of \LTL and \CTLs.
In~\cite{emerson97b}, an extension of \LTL with a kind of regular
expressions containing quantitative constraints over the number of
occurrences of sub-expressions is presented. This extension yields
algorithms whose time complexity is exponential in the size of
formulas and the \emph{value} of integer constants.
In~\cite{emerson99}, extensions of \CTL including parameters in
constraints are defined. One of these formalisms, namely {\sf GPCTL},
allows one to express properties with constraints defined as positive
Boolean combinations of sums of the form $\sum_i P_i \leq c$ where
every $P_i$ is an atomic proposition. Model-checking $\Ex\_\U\_$
formulas with such a constraint is shown to be \NP-complete and a
polynomial algorithm is given for a restricted logic (with
parameters).
Another interesting specification language is Sugar/PSL~\cite{psl03},
which defines many additional operators above \LTL and \CTLs. These
include in particular a kind of counting constraints used together
with regular expressions, but to our knowledge, there is no accurate
study of lower complexity bounds for these extensions~\cite{bfh05}.
In~\cite{yang97}, a branching-time temporal logic with general
counting constraints (using a variant of freeze variables) is defined
to specify event-driven real-time systems. To obtain decidability, the
authors restrict their analysis to systems verifying some bounded
progress condition.
%
In~\cite{bouajjani95b,bouajjani95}, extensions of \LTL and \CTL with
Presburger constraints over the number of states satisfying a
formula are considered, for a class of infinite state processes.
The complexity of these problems is much higher than the cases we are
concerned with. Finally there also exist timed extensions of \CTL
interpreted over Kripke structures (see for instance~\cite{emerson92b}).


The paper is organized as follows.  In Section~\ref{sec:def}, we
introduce the definitions of the main formalisms we will use. In
Section~\ref{sec:express}, we show that several of our proposed extensions
are not more expressive than classical \CTL, yet exponentially
more succinct.  In Section~\ref{sec:mc}, we address the model-checking
problem and provide exact complexity results for almost all the logics we
introduce. In Section~\ref{sec:satisf} we study the complexity of the
satisfiability problem. Finally we present in Section~\ref{sec:exten} a different
logic with explicit counting variables, as well as an alternative semantics for
  our logics, together with the complexity of the related model-checking
problems.

\section{Definitions}
\label{sec:def}

\subsection{Models}

Let $\AP$ be a set of atomic propositions.  In branching-time temporal
logics, formulas are generally interpreted over states of Kripke structures.

\begin{defi}
  A \emph{Kripke structure} (or KS) $\calS$ is a tuple
  $\tuple{Q,R,\ell}$ where $Q$ is a finite set of states, $R\subseteq
  Q \times Q$ is a total\footnotemark{} transition relation and
  $\ell: Q \fleche 2^\AP$ is a labelling of states with atomic
  propositions.
\end{defi}

\footnotetext{By \emph{total} relation, we mean a relation $R
  \subseteq Q \times Q$ such that $\forall p \in Q, \exists q \in Q,
  (p,q) \in R$.}

A run $\rho$ of $\calS$ is an infinite sequence of states $q_0 q_1 q_2
\ldots$ such that\ $(q_i,q_{i+1})\in R$ for every $i$. We use
$\rho(i)$ to denote state $q_i$, $\rho_{|_i}$ to denote the prefix
$q_0 \cdots q_i$ of $\rho$, and $\epsilon$ to represent the empty
prefix. Notice that $\rho_{|_{-1}} = \epsilon$, but $\rho_{|_0} = q_0
\neq \epsilon$. $\Exec(q)$ denotes the set of runs starting from some
state $q \in Q$ and $\Exec(\calS)$ (resp.\ $\Execf(\calS)$) the set of
all runs (resp. finite prefixes of runs) in $\calS$. The length
$|\sigma|$ of a finite run prefix $\sigma$ is defined as usual
(i.e. $|\sigma| = 0$ if $\sigma = \epsilon$ and $|\sigma| = i+1$ if
$\sigma = q_0 \ldots q_i$). Note in particular that for any run
$\rho$, $|\rho_{|_i}| = i+1$.
We write $\sigma \leq \rho$ when $\sigma$ is a prefix of $\rho$.

We will also consider \emph{durational Kripke structures} (DKS), where
an integer duration is associated with every transition. A DKS $\calS
= \tuple{Q,R,\ell}$ is defined similarly to a KS, except that
$R\subseteq Q \times \mathbb{Z} \times Q$. The duration of a
transition is also called a \emph{weight} or a \emph{cost}, especially
when negative values are used to label a transition. We use $\DKSo$ to
denote the class of DKS in which every weight is $1$, $\DKSzo$ when
the weights belong to $\{0,1\}$, and $\DKSozo$ when they belong to
$\{-1,0,1\}$. The notion of weight is additively extended to finite
runs of DKS. The existence of a transition of weight $k$ between
states $p$ and $q$ is sometimes denoted as $p \lra_R^k q$, that of a
finite run of weight $k$ as $p \Lra_R^k q$. $R$ may be omitted when it
is clear from the context. The weight of a finite run $\rho$ is also
denoted as $\wght{\rho}$.

\subsection{Counting \texorpdfstring{\CTL}{CTL}}

We define several extensions of \CTL able to express constraints over the
number of times certain sub-formulas are satisfied along a run. 

\begin{defi}
  \label{def:cctl}
  Given a set of atomic propositions $\AP$, we define the logic
  $\CCTL$ as the set of formulas
  \[
  \vfi,\psi \grameg P \gramou \vfi \et \psi \gramou \non \vfi \gramou
  \Ex \vfi \Us{C} \psi \gramou \All \vfi \Us{C} \psi
  \]
  where $P\in \AP$ and $C$ is a constraint of the form
  \[
  C  \grameg  (\sum_{i=1}^{m}\alpha_i\cdot\nb\vfi_i) \sim k
  \]
  where $\vfi_i \in \CCTL$, $\alpha_i, k \in \Nat$ and $\sim\
  \in\{<,\leq,=,\geq,>\}$.
\end{defi}

We make use of the standard abbreviations $\ou, \impl, \equivaut,
\bot, \top$, as well as the additional modalities $\EFs{C} \vfi \egdef
\Ex \top \Us{C} \vfi$, $\AFs{C} \vfi \egdef \All \top \Us{C} \vfi$,
and their duals $\AGs{C} \vfi \egdef \non \EFs{C} \non \vfi$
and $\EGs{C} \vfi  \egdef \non \AFs{C} \non \vfi$.
Any formula occurring in a constraint $C$ associated with a modality
in $\Phi$ is considered a sub-formula of $\Phi$. The size $|\Phi|$
of $\Phi$ thus takes into account the size of these constraints and
their sub-formulas, assuming that integer constants are encoded in
\emph{binary} (unless explicitly stated otherwise). The DAG-size of
$\Phi$ is the number of distinct sub-formulas of $\Phi$.  As
model-checking algorithms may be implemented in such a way that the
truth value of each sub-formula is computed only once, for instance
using dynamic programming, this is generally more relevant to the
complexity of model-checking.

We also introduce several variants and extensions of \CCTL: 
\begin{iteMize}{$\bullet$}
\item $\CCTLa$ is the restriction of \CCTL where every coefficient
  $\alpha_i$ occurring in the constraints equals $1$.  Thus the constraints
  are of the form $(\sum_{i} \nb\vfi_i)\sim k$. For example, 
  $\EFs{\nb P + \nb P' = 10} P'' $ belongs to $\CCTL_1$.
\item $\CCTLpm$ is an extension of \CCTL with coefficients
  $\alpha_i$ in $\mathbb{Z}$. The formula $\EFs{\nb P - 3\cdot\nb
    P'=10} P''$ belongs to $\CCTLpm$.
\item $\CCTLb$ extends \CCTL by allowing Boolean combinations in the
  constraints. For example, $\EFs{\nb P < 4 \et \nb P' > 8}$ is in \CCTLb.\smallskip
\end{iteMize}

\noindent We can combine the previous variants and define the logics \CCTLpma,
\CCTLba, \CCTLbpm and \CCTLbpma. The semantics of our logics are
defined over Kripke structures as follows:

\begin{defi}
  \label{def:cctl-sem}
  The following clauses define the conditions for a state $q$ of some
  KS $\calS=\tuple{Q,R,\ell}$ to satisfy a formula $\vfi$ (written $q
  \sat_\calS \vfi$) by induction over the structure of $\vfi$ :
  \begin{alignat*}{2}
    & q \sat_\calS P & & \mbox{\quad iff \quad} P \in \ell(q) 
    \\
    & q \sat_\calS \lnot \vfi & & \mbox{\quad iff \quad} q \not\sat_\calS \vfi
    \\
    & q \sat_\calS \vfi \lor \psi & & \mbox{\quad iff \quad} q \sat_\calS \vfi
    \text{ or } q \sat_\calS \psi
    \\
    & q \sat_\calS \Ex \vfi \Us{C} \psi & & \mbox{\quad iff \quad} \exists \rho
    \in \Exec(q), \rho \sat_\calS \vfi \Us{C} \psi
    \\
    & q \sat_\calS \All \vfi \Us{C} \psi & & \mbox{\quad iff \quad} \forall \rho
    \in \Exec(q), \rho \sat_\calS \vfi \Us{C} \psi
  \end{alignat*}
  where $\rho \sat_\calS \vfi \Us{C} \psi \mbox{ iff } \exists i\geq 0, \
    \rho(i) \sat_\calS \psi, \; \rho_{|i-1} \sat_\calS C \mbox{ and }
    \forall 0 \leq j < i, \: \rho(j) \sat_\calS \vfi$.

    For every finite run prefix $\sigma = q_0 \ldots q_i$, the meaning
    of $\sigma \sat_\calS C$ is based on the interpretation of
    $\nb\vfi$ over $\sigma$, which is the number of states among $q_0,
    \ldots, q_i$ verifying $\vfi$, denoted by $ \csize{\sigma}{\vfi}$
    and defined as:
    $\csize{\sigma}{\vfi} \egdef |\{ j \mid 0 \leq j \leq i \et
    \sigma(j) \sat_\calS \vfi \}|$.
    Given these values, $C$ is evaluated as an ordinary equation or
    inequation over integer expressions.
\end{defi}

In the following we omit the subscript $\calS$ for $\sat$ when no
confusion occurs.  We use $\equiv$ to denote the standard
equivalence between formulas.

\begin{rem}
  \label{rem:g-semantics}
  It can be derived from the above definitions that formula $\EFs{C}
  \vfi$ holds from $q$ if and only if there is a run $\rho$ from $q$
  and an index $i$ such that $\rho(i) \sat \vfi$ and $\rho_{|_i-1|}
  \sat C$. Similarly, $\EGs{C} \vfi$ holds if and only if there exists
  a run $\rho$ such that, whenever a finite prefix of $\rho$ satisfies
  $C$, the next state must satisfy $\vfi$ (in other words, for all $i
  \geq 0$, $\rho_{|_{i-1}} \sat C \implies \rho(i) \sat \varphi$).
\end{rem}

\begin{rem}
  \label{rem:strict-prefix}
  The above semantics imply that the truth value of a constraint only
  depends on the \emph{strict} prefix of the run leading to (but not
  including) the current state. This is not an essential feature, and
  another definition would also be valid.
  However, this choice is consistent with the semantics of existing
  logics (in particular \TCTL \cite{alur93}). It also allows us to
  express the classical \X\ (or \emph{next}) operator as $\Ex \X \vfi =
  \Ex \Fs{\nb\top=1} \vfi$. Moreover, under this semantics the
  formulas $\Ex \vfi \U \psi$, $\Ex \top \Us{\nb \lnot \vfi = 0} \psi$
  and $\Ex \Fs{\nb \lnot \vfi = 0} \psi$ are all equivalent.
\end{rem}

\begin{rem}
  \label{rem:until}
  In all logics allowing Boolean connectives inside constraints, the
  modality $\F$ is sufficient to define $\U$. Indeed, $\Ex \phi \Us{C}
  \psi \equiv \Ex \Fs{C \et \nb(\non\phi)=0} \psi$ (and similarly for
  $\All$-quantified formulas). Thus every such logic can also be built
  from atomic propositions using Boolean operators and modalities
  $\EFs{C}\vfi$ and $\AFs{C}\vfi$ (or $\EGs{C}\vfi$).  Note that all
  these translations are succinct (linear in the size of formulas) and
  thus do not have any impact on complexity results.
\end{rem}

\begin{rem}
  The related temporal logic \TCTL, whose semantics is defined over
  \emph{timed} models (in particular durational Kripke structures),
  allows one to label temporal modalities with duration
  constraints. For instance, one may write $\All \varphi \U_{< k}
  \psi$ to express the fact that $\varphi$ is consistently true until,
  before $k$ time units have elapsed, $\psi$ eventually holds.

  When all transitions in a DKS have duration 1 (i.e. the duration of
  any run is equal to its length), \TCTL (or \RTCTL\
  in~\cite{emerson92b}) formulas can be directly expressed in any
  variant of \CCTL using only the sub-formula $\top$ inside
  constraints. A similar coding is also possible when one uses a
  proposition \emph{tick} to mark the elapse of time as
  in~\cite{lst-TCS2001}.
\end{rem}


\subsection{Examples of \texorpdfstring{\CCTL}{CCTL} formulas.} 
\label{sec:examples-cctl}

We now give several examples of natural quantitative properties that
can be easily expressed with \CCTL-like logics.

\begin{enumerate}[(1)]
\item First consider an engine or plant that has to be controlled
  every 10000 cycles. Suppose a warning is activated whenever the
  number of elapsed cycles since the last control belongs to the
  interval $[9900;9950]$, and is maintained until the next control is
  done. Moreover, an alarm is raised when the number of cycles is
  above 10100 (unless a control was performed in-between) and is
  maintained until the next control. Such a specification could be
  expressed in \CCTL as follows:
  \begin{enumerate}[(a)]
  \item Either a control or a warning must occur in every period of
    9950 cycles:
    \[
    \AG \big( \AFs{\nb \texttt{cycle} \leq 9950} (\texttt{control} \:
    \ou \: \texttt{warning}) \big)
    \]
    where \texttt{cycle} (resp.\ \texttt{warning}, \texttt{control})
    labels states corresponding to the end of a cycle (resp.\ a
    warning, a control action).

  \item A warning cannot occur before 9900 cycles after a control: 
    \[
    \AG \big( \texttt{control} \;\impl\; \non \EFs{\nb \texttt{cycle}
      < 9900} \texttt{warning} \big).
    \]

  \item A control or an alarm occurs in every period of 10100 cycles:
    \[
    \AG \big( \AFs{\nb \texttt{cycle} \leq 10100} (\texttt{control} \:
    \ou \: \texttt{warning}) \big).
    \]

  \item An alarm cannot occur strictly before 10100 cycles after a control: 
    \[
    \AG \big( \texttt{control} \;\impl\; \non \EFs{\nb \texttt{cycle}
      < 10100} \texttt{alarm} \big).
    \]

  \item The warning and the alarm are maintained: 
    \[
    \AG (\texttt{warning} \; \impl \; \All \: \texttt{warning} \: \U
    \: (\texttt{alarm} \: \ou \: \texttt{control}) \big)
    \]
    and
    \[
    \AG (\texttt{alarm} \; \impl \; \All \: \texttt{alarm} \: \Uw \:
    \texttt{control} \big).
    \]
    Note that we use a \emph{weak} Until modality in the latter
    formula because we cannot ensure the occurrence of a control.
  \end{enumerate}

\item Consider a model for an ATM, whose atomic propositions include
  {\tt money}, {\tt reset} and {\tt error}, with the obvious meaning.
  To specify that it is not possible to get money after three mistakes
  were made in the same session (\ie with no intermediate reset), we
  can use the \CCTLba formula 
  \[ 
  {\AG \big( \non \EFs{\nb\texttt{error}\geq3 \et \nb\texttt{reset}=0}
    \texttt{money} \big)},
  \]
  or the \CCTLa formula
  \[
  \AG \big( \non \Ex (\non \texttt{reset}) \Us{\nb\texttt{error}\geq3}
  \texttt{money} \big).
  \]
\item Consider a mutual exclusion algorithm with $n$ processes trying
  to reach their critical section (CS). We can express a bounded
  waiting property with bound 10 (\ie\ when a process $P$ tries to
  reach its CS, then at most 10 other processes can reach theirs
  before $P$ does) by the $\CCTLba$ formula 
  \[
  \AG \ET_{i \in [1,n]} \big( \texttt{request}_i \impl \non
  \EFs{\sum_{j\not= i} \nb\texttt{CS}_{j} > 10 \et \nb\texttt{CS}_i=0}
  \top \big).
  \]
  As in the previous case, this can also be expressed in $\CCTLa$
  using $\U$ instead of $\F$.

\item In a model for a communicating system with events for the
  emission and reception of messages, the \CCTLpma formula $\AGs{\nb
    \texttt{send} - \nb \texttt{receive} < 0} \bot$ states that along
  any finite run, the number of {\tt receive} events cannot exceed the
  number of {\tt send} events.

\item Quantitative constraints can also be useful for fairness
  properties. For example the \CCTLba formula $\AG \: \AFs{\ET_i 5
    \leq \nb\vfi_i \leq 10} \top$ states that each $\vfi_i$ occurs
  infinitely often along every run (as does the \CTL formula $\ET_i
  (\AG \: \AF \: \vfi_i)$) but also ensures some constraint on the
  number of states satisfying formulas $\vfi_i$ along every execution:
  for example, it is not possible to have a sub-run where $\vfi_1$
  holds in 11 states and $\vfi_2$ in only 4 states.

\item Note that $\CCTLpm$ can express properties about the ratio
  between the number of occurrences of two kinds of states along a
  run. For example, $\EFs{100\cdot\nb\texttt{error} -\nb\top<0}P$ is
  true when there is a run leading to some state satisfying $P$ along
  which the rate of \texttt{error} states is less than 1
  percent. In fact any constraint of the form $\frac{\nb P}{\nb
    P'}\sim k$ can be expressed in this logic.

\item Finally note that we can use any temporal formula inside a
  constraint (and not only atomic propositions). For example, $\AG (
  \EFs{\nb(\EX \texttt{alarm})\leq 5} \texttt{init})$ states that it
  is always possible to reach \texttt{init} with a path along which at
  most 5 states have a successor satisfying \texttt{alarm}.
\end{enumerate}

\noindent Note that expressing these properties is rather
straightforward using counting constraints. When considering a
classical temporal logic, such properties cannot easily be expressed
directly. Unfolding the formula as it is done in the next section to
prove expressiveness results cannot be achieved in practice even when
the integer constraints are small: the formula would most of the time
become too long and too complex to be handled. A possible pragmatic
solution to avoiding counting constraints would be to add one or
several counters to the model and to use additional atomic
propositions to mark states (or rather, in such an extended model,
configurations) where the constraints over the values of counters are
satisfied. First note that this method may be less convenient or even
inapplicable in some cases, as it requires modifying the model under
verification. Moreover, this approach is difficult to use when
counting constraints do not only refer to atomic propositions, but
deal with nested temporal logic formulas (as in the last example
above) or even other counting properties, as this would require even
more drastic modifications to the model.

These examples illustrate the ability of our logics to state
properties over the portion of a run leading to some state. A similar
kind of properties could also be expressed with past-time modalities
(like $\S$ or $\Fm$), but unlike these modalities our constraints
cannot easily describe the ordering of events in the past: they
``only'' allow to count the number of occurrences of formulas. We will
see in the next sections that our extensions do not always induce a
complexity blow-up, while model-checking $\CTL+\Fm$ is known to be
\PSPACE-complete~\cite{laroussinie98}.

\section{Expressiveness and succinctness}
\label{sec:express}

When comparing two logics, the first question which comes to mind is
the range of properties they can be used to define, in other words
their \emph{expressiveness}. When they turn out to be equally
expressive, a natural way to distinguish them is then to ask \emph{how
  concisely} each logic can express a given property. This is referred
to as \emph{succinctness}, and is also relevant when studying the
complexity of model-checking for instance, since it may considerably
influence the size of a formula required to express a given property,
hence the time required to model-check it. 
In this section we study the expressiveness of the various logics
defined in the previous section, and provide results and comments
about their respective succinctness with respect to \CTL.

\subsection{Expressiveness}

We first show that only allowing Boolean combinations does not allow
our logics to express more properties than \CTL.

\begin{prop}
  \label{prop:csbb-to-ctl}
  Any $\CCTLb$ formula $\Phi$ can be translated into an equivalent
  \CTL formula of DAG-size $2^{O(|\Phi|^2)}$.
\end{prop}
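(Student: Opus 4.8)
The plan is to eliminate counting constraints from the inside out, replacing each constrained modality by an ordinary \CTL\ formula over an enlarged "counter" indexing, and to track carefully how the DAG-size grows.

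\paragraph{Overall strategy.}
First I would process the sub-formulas of $\Phi$ in a bottom-up order, so that by the time I handle a modality $\Ex\vfi\Us{C}\psi$ (or its $\All$-variant, or an $\F$-form), the sub-formulas $\vfi$, $\psi$ and all the $\vfi_i$ appearing in $C$ have already been translated into equivalent \CTL\ formulas. By Remark~\ref{rem:until} it suffices to handle $\EFs{C}$ (and $\AFs{C}$, or equivalently $\EGs{C}$), since the $\U$-to-$\F$ translation is linear. So the core of the argument is: given a \emph{Boolean} constraint $C$ built from atoms of the form $(\sum_i \nb\vfi_i)\sim k$ (no coefficients, no subtraction), with the $\vfi_i$ already in \CTL, express $\EFs{C}\psi$ in \CTL.

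\paragraph{Handling a single constrained modality.}
Fix the constraint $C$, and let $\vfi_1,\dots,\vfi_m$ be the counted sub-formulas occurring in it, with $N$ the largest integer constant occurring in $C$ (so $N \le 2^{|\Phi|}$, from binary encoding). The key observation is that along any run, the truth of $C$ on a prefix depends only on the vector of counts $(\csize{\sigma}{\vfi_1},\dots,\csize{\sigma}{\vfi_m})$, and more precisely only on each count \emph{capped at $N+1$} — once a count exceeds every constant it is compared to, its exact value is irrelevant. Since the $\vfi_i$ need not be mutually exclusive, the relevant "mode" of a single state is the subset $S\subseteq\{1,\dots,m\}$ of counted formulas it satisfies; there are $2^m$ such subsets. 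I would therefore introduce, for each capped-count vector $\vec{v}\in\{0,\dots,N+1\}^m$, a fresh proposition-like \CTL\ formula that is "true at a state $q$ reached from the evaluation point" exactly when the prefix so far has capped-count vector $\vec{v}$. These are definable by mutual recursion: $\vec{v}$ at the next state is obtained from $\vec{v}'$ at the current state by incrementing (with capping) the coordinates in the subset $S$ that the current state realizes, and the subset $S$ is itself \CTL-definable as a conjunction $\bigwedge_{i\in S}\vfi_i \wedge \bigwedge_{i\notin S}\non\vfi_i$. Concretely, I would build a product-style \CTL\ formula: $\EFs{C}\psi$ becomes an $\Ex(\dots\U\dots)$ assertion that we can move through states while updating $\vec v$ consistently until we reach a state where $\psi$ holds and the accumulated $\vec v$ satisfies $C$. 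This is the standard encoding of a small counter automaton into \CTL; the number of "control states" of this automaton is $O((N+2)^m)$, and the resulting \CTL\ formula has DAG-size polynomial in $(N+2)^m$ and in the sizes of the already-translated $\vfi_i$ and $\psi$ — i.e.\ $2^{O(m\log N)} = 2^{O(|\Phi|\cdot|\Phi|)} = 2^{O(|\Phi|^2)}$ for one modality (using $m\le|\Phi|$ and $\log N\le|\Phi|$).

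\paragraph{Controlling the blow-up under nesting.}
The delicate point is that this construction is applied once per constrained modality in $\Phi$, and the translation of an \emph{outer} modality takes the translations of the inner ones as building blocks, so a naive bound multiplies the per-level blow-ups and could give a tower. The fix — and I expect this bookkeeping to be the main obstacle — is to be careful about what is reused. When translating modality $M$, the sub-formulas $\vfi_i$ plugged into the counter construction are \CTL\ formulas already computed; in a DAG representation they are shared, not copied, so the DAG-size after handling $M$ is (DAG-size before) $+$ $2^{O(|M|^2)}$ new nodes, where $|M|$ counts $M$'s own constraint and modality but \emph{not} the full sub-formulas beneath it. Summing over at most $|\Phi|$ modalities, the total DAG-size is $|\Phi|\cdot 2^{O(|\Phi|^2)} = 2^{O(|\Phi|^2)}$. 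I would state this additive accounting as the central invariant of an induction on the structure of $\Phi$: "after translating $\Phi$, the output \CTL\ formula has DAG-size at most $2^{c|\Phi|^2}$ for a fixed constant $c$," and verify the inductive step for each syntactic case (atoms and Booleans are trivial; the constrained modalities use the counter encoding above). A minor additional subtlety is the interaction with the $\All$-quantifier: for $\AFs{C}\psi$ one dualizes, or directly encodes the universal branching by replacing $\Ex(\cdot\U\cdot)$ with $\All(\cdot\U\cdot)$ in the product — one must check that the counter-update guards are placed so that \emph{every} path is forced to update correctly, which is routine but needs care with the "strict prefix" convention of Remark~\ref{rem:strict-prefix}.
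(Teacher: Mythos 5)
Your proposal is correct and follows essentially the same route as the paper: both unfold each constrained modality into a family of \CTL\ formulas indexed by the bounded ``residual state'' of the constraint (you index by capped accumulated count vectors, the paper by copies of the constraint with decremented constants --- the same bookkeeping in different coordinates, giving the same $K^{M}\cdot\mathrm{poly}$ count of distinct sub-formulas), and both rely on additive DAG accounting over the at most $|\Phi|$ modalities to avoid a tower of exponentials. The one caveat is that you restrict the atoms to coefficient-free sums, which only covers $\CCTLba$, whereas $\CCTLb$ (being $\CCTL$ plus Boolean combinations) allows coefficients $\alpha_i\in\Nat$; your capping argument extends verbatim (once a count exceeds the largest constant, every atom in which it occurs with positive coefficient has a determined truth value), so this is a matter of stating the right scope rather than a flaw in the method.
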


\proof A naive translation, using nested $\Ex\_\U\_$ and $\All\_\U\_$
modalities to precisely count the number of times each subformula
inside a constraint is satisfied, is sufficient to show the
result. However the size of a translated formula would in general be
exponential in the \emph{value} of all integer constants and in the
DAG size of the original formula. We thus propose a more concise (yet
more involved) translation, whose size will be useful later on.

\medskip

Let $\Phi$ be a $\CCTLb$ formula. The proof is done by structural
induction over $\Phi$. The basic and Boolean cases are direct. By
Remark \ref{rem:until}, we only need to consider the cases $\Phi
= \EFs{C} \vfi$ and $\Phi = \AFs{C} \vfi$. Assume $C$ contains $m$
atomic constraints of the form $(\sum_{j \in [1,n_i]} \alpha^i_j
\nb\vfi^i_j) \sim k_i$ for $i \in [1,m]$. We translate $\Phi$ to $\CTL$
by building a family of formulas whose intended meaning is as follows:
\begin{iteMize}{$\bullet$}
\item If constraint $C$ holds with $\nb\vfi^i_j = 0$ for all $j,i$,
  then $\vfi$ may be true immediately.
\item Otherwise, successively check for every $j,i$ whether $\vfi^i_j$
  holds in the current state, and if so then update $C$ by decreasing
  the constant $k_i$ by $\alpha^i_j$.
\item Once all $\vfi^i_j$ have been scanned, proceed to the next state
  and re-evaluate $C$ for the new values of the constants.
\end{iteMize}
Let $\decr{C,i,j}$ denote the constraint obtained from $C$ by
replacing $k_i$ by $k_i - \alpha^i_j$. Note that in contrast with the
formal definition of \CCTLb constraints, we allow the $\mathit{decr}$
operation to result in negative constants in the right-hand sides of
atomic constraints. 

\medskip

Let $\bot$ and $\top$ be two special constraints satisfied by no
(resp. any) finite path in any Kripke structure, we also define the
constraint $\simp{C}$ obtained from $C$ by replacing any trivially
true atomic constraint (such as $S \geq 0$ or $S > -3$) by $\top$ and
any trivially false one (such as $S < 0$ or $S \leq -1$) by $\bot$,
and normalizing the obtained constraint in the usual way ($C\ou \bot
\rightarrow C$, \ldots).
%
%
Note that due to this simplification step, $\simp{C}$ is either
reduced to $\top$ or $\bot$, or it does not contain $\top$ or $\bot$
as a sub-formula. Also note that $C$ and $\simp{C}$ are equivalent
(i.e. satisfied by the same finite runs).

\medskip

We now turn to the formal \CTL translation $\tr{\Phi}$ of formula $\Phi$,
which is defined inductively on the structure of $\Phi$. Boolean
combinations and negation are left unchanged. In the case where $\Phi
= \EFs{C} \vfi$, we proceed by unfolding the $\EF$ modality as follows:
\[
\tr{\EFs{C} \vfi} = \left\{
  \begin{array}{ll}
    \bot & \text{ if } \simp{C} = \bot
    \\
    \EF \tr{\vfi} & \text{ if } \simp{C} = \top
    \\
    \Ex \big(\ET_{i,j} \lnot \tr{\vfi^i_j}\big) \U \big(\tr{\vfi} \lor
    \Psi\big) & \text{ if } \epsilon \models C
    \\
    \Ex \big(\ET_{i,j} \lnot \tr{\vfi^i_j}\big) \U \Psi & \text{ if
    } \epsilon \not\models C
  \end{array}
\right.
\]
where $\Psi$ is a \CTL formula designed to be true in states where
both $\EFs{C} \phi$ and at least one formula $\vfi^i_j$ hold. Indeed
if $C$ is trivially false, then $\EFs{C} \vfi$ is clearly not
satisfiable. If $C$ is trivially true, it is sufficient to check that
$\vfi$ eventually holds without any further checks on sub-formulas
$\vfi^i_j$. The third case states that if $C$ holds on the empty path
then $\EFs{C} \vfi$ holds if, after a path prefix not affecting the
satisfaction of $C$, either $\vfi$ holds or some $\vfi^i_j$ holds and
we need to update $C$ again. The last case is identical except that it
does not check for $\vfi$ in the current state.
It then only remains to define $\Psi$. More generally, we describe a
family of $\CTL$ formulas $\Psi_{C,i,j,C'}$, where $i \in [1,m]$, $j
\in [1,n_i]$ with $m$ and $n_i$ as above, and $C, C'$ are $\CCTLb$
constraints. For all $1 \leq i \leq m, 1 \leq j \leq n_i$, let
\begin{equation}
  \Psi_{C,i,j,C'} = (\tr{\phi^i_j} \et \Psi_{C,i,j+1,\decr{C',i,j}})
  \ou (\non \tr{\phi^i_j} \et \Psi_{C,i,j+1,C'}).\label{eq:5}
\end{equation}
For all $1 \leq i < m$,
\begin{equation}
  \Psi_{C,i,n_i+1,C'} = \Psi_{C,i+1,1,C'}.\label{eq:3}
\end{equation}
Finally
\begin{equation}
  \Psi_{C,m,n_m+1,C'} = \left\{ 
    \begin{aligned}
      & \bot & \text{ if } C = C',
      \\
      & \EX \tr{\EFs{\simp{C'}} \vfi} & \text{ otherwise.}
    \end{aligned}
  \right. \label{eq:1}
\end{equation}
We then set $\Psi$ to denote $\Psi_{C,1,1,C}$. 
Formula $\Psi_{C,i,j,C'}$ implicitly assumes that, in the current
state, a certain (potentially empty) subset of the formulas $\vfi^1_1$
up to (but not including) $\vfi^i_j$ holds, and that $C'$ is the
constraint obtained by updating $C$ with respect to these
formulas. Then, it evaluates (the \CTL translation of) formula
$\vfi^i_j$, updating $C'$ if necessary and moving on to the next
sub-formula (Eq. \eqref{eq:5}). Whenever the scanning of sub-formulas
$\{\vfi^i_1, \ldots, \vfi^i_{n_i}\}$ relevant to the $i$-th atomic
constraint is finished, we proceed with the next one
(Eq. \eqref{eq:3}). Finally, once all sub-formulas have been scanned
and the constraint updated (Eq. \eqref{eq:1}), if no progress was made
at all (witnessed by the fact that $C = C'$), the formula is simply
deemed false. Otherwise we move to the next state along a possible run
using modality $\EX$, and develop the translation of formula
$\EFs{\simp{C'}} \vfi$ with the last updated $C'$.

\medskip

The above recursive definition characterizes finite formulas.  Indeed,
consider a formula $\tr{\EFs{\simp{C'}} \vfi}$ occurring as a sub-formula
of $\tr{\EFs{\simp{C}} \vfi}$, and $f$ the injection mapping each
right-hand-side constant $k'$ in $C'$ to the corresponding constant in
$C$. By definition, we have $f(k') \leq k'$ for every $k'$ in $C'$,
and either $f$ is not surjective (meaning that some constant $k$ in
$C$ no longer appears in $C'$ due to the simplification step in
Eq. \eqref{eq:1} above) or there exists $k'$ such that $f(k') <
k'$. This is guaranteed by the fact that developing $\Psi_{C,1,1,C}$
according to its definition into a formula containing
$\tr{\EFs{\simp{C'}} \vfi}$ resorts to at least one $\mathit{decr}$
operation followed by a simplification operation. Since any negative
constant appearing after a decrement is eliminated by the next
simplification step, this process cannot repeat indefinitely and must
therefore terminate.

The translation of $\AFs{C} \vfi$ is obtained by replacing each
occurrence of the path quantifier $\Ex$ by $\All$ in the above.  The
correctness of the translation can be shown by induction on the
nesting depth of until modalities in $\tr{\Phi}$ and quantities $m$
and $n_i$.

\medskip

We now turn to the worst-case DAG-size of the translation of the whole
\CCTL formula $\Phi$. Let $K$ be the largest integer constant in
$\Phi$, $M$ the maximal number of atomic constraints in any constraint
in $\Phi$ and $N$ the maximal number of counting expressions in any
atomic constraint in $\Phi$. The number of distinct $\Psi_{C,i,j,C'}$
formulas involved in the translation of any sub-formula $\EFs{C} \vfi$
or $\AFs{C} \vfi$ of $\Phi$ is bounded by $K^{M} \cdot M \cdot N \cdot
K^{M}$. This construction is repeated as many times as there are
temporal modalities in $\Phi$, which amounts to at most $|\Phi| \cdot
K^{M} \cdot M \cdot N \cdot K^{M}$ distinct sub-formulas (this
pessimistic upper bound clearly covers the case of Boolean
connectives, whose translation is much simpler). Since $M, N \in
O(|\Phi|)$ and $K \in O(2^{|\Phi|})$,
we get a total DAG-size for $\tr{\Phi}$ in $O\big(|\Phi| \cdot
(2^{|\Phi|})^{|\Phi|} \cdot |\Phi| \cdot |\Phi| \cdot
(2^{|\Phi|})^{|\Phi|}\big) = O(|\Phi|^3.2^{2|\Phi|^2}) \subseteq
2^{O(|\Phi|^2)}$. \qed

\begin{exa}
  \label{exa:trans}
  For any integer $k$ and formula $\vfi$, we look at the translation
  of $\Phi_k = \EFs{C_k} \vfi$ where $C_k$ denotes the constraint
  $\nb p_1 + \nb p_2 = k$ and $\vfi$ is any formula:
  \begin{equation}
    \tr{\Phi_k} = \left\{
      \begin{aligned}
        & \textstyle \Ex \big(\ET_i \lnot p_i\big) \U \vfi & & \text{
          if } k = 0
        \\
        & \textstyle \Ex \big(\ET_i \lnot p_i\big) \U \Psi_k & &
        \text{ otherwise}
      \end{aligned}
    \right. \label{eq:trans1}
  \end{equation}
  \begin{multline*}
    \text{with \qquad} \Psi_k = \Big( p_1 \land \big( (p_2 \land \EX \tr{\simp{(\Phi_{k-2})}})
    \lor ( \lnot p_2 \land \EX \tr{\simp{(\Phi_{k-1})}}) \big) \Big)\\ \lor
    (\lnot p_1 \land p_2 \land \EX \tr{\simp{(\Phi_{k-1})}}),
  \end{multline*}
  where $\simp{(\Phi_{k})} = \Phi_{k}$ if $k \geq 0$ and $\bot$
  otherwise. Note that some simplifications were performed in this
  translation: namely, $(\vfi \lor \Psi_0)$ is replaced by $\vfi$ in
  the first case of Eq. \eqref{eq:trans1} since $\Psi_0 \equiv \bot$,
  and a conjunct containing $\bot$ is removed from $\Psi_k$.
\end{exa}

Note that we provided a parametric upper bound for the above
translation which can be interpreted for all variants of \CCTL below
$\CCTLb$. In contrast to this result, introducing subtractions in
constraints yields a strict increase in expressiveness.

\begin{prop}
  The $\CCTLpma$ formula $\varphi = \AGs{\nb A - \nb B < 0} \bot$
  cannot be translated into \CTL.
\end{prop}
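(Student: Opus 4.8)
The plan is to exhibit a family of finite, \emph{deterministic} Kripke structures on which no \CTL formula can agree with $\varphi$, thereby reducing the problem to a statement about regular languages of finite words. First I would spell out what $\varphi$ asserts. Unfolding the abbreviations, $\varphi = \AGs{\nb A - \nb B < 0}\bot \equiv \non\,\EFs{\nb A - \nb B < 0}\top$, so by Remark~\ref{rem:g-semantics} a state $q$ satisfies $\varphi$ exactly when \emph{no} finite prefix $\sigma$ of a run from $q$ satisfies the constraint, i.e.\ when $\csize{\sigma}{A} \geq \csize{\sigma}{B}$ for every finite prefix $\sigma$ of every run starting at $q$.

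Next, for all integers $n,m \geq 1$ I would consider the Kripke structure $\calS_{n,m}$ consisting of a chain of $n$ states labelled $\{A\}$, followed by $m$ states labelled $\{B\}$, followed by a single sink state labelled $\emptyset$ carrying a self-loop (so that $R$ is total); let $q_{n,m}$ be the first state of the chain. Writing $a = \{A\}$, $b = \{B\}$ and $c = \emptyset$, the unique run from $q_{n,m}$ reads the ultimately periodic word $w_{n,m} = a^n b^m c^\omega$. From the description of $\varphi$ one checks at once that $q_{n,m} \sat \varphi$ iff $n \geq m$: the only prefixes $\sigma$ that can violate the constraint are those reaching the $b$-block, where $\csize{\sigma}{A} = n$ and $\csize{\sigma}{B} \leq m$. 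The key observation is that $\calS_{n,m}$ is deterministic, so $\Exec(q)$ is a singleton at every state and $\Ex$ and $\All$ coincide everywhere. Hence, for any \CTL formula $\Phi$, letting $\Phi'$ be the \LTL formula obtained by replacing every $\Ex\,\psi\U\chi$ and $\All\,\psi\U\chi$ by $\psi\U\chi$, a straightforward structural induction on $\Phi$ gives $q_{n,m}\sat_{\calS_{n,m}}\Phi$ iff $w_{n,m}\sat\Phi'$.

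Now suppose towards a contradiction that $\Phi \equiv \varphi$ for some \CTL formula $\Phi$. Then $\Phi'$ satisfies $w_{n,m}\sat\Phi'$ iff $n\geq m$, for all $n,m\geq 1$. Since every \LTL formula defines an $\omega$-regular language, fix a B\"uchi automaton $\mathcal B$ for $\mathcal L(\Phi')$ over the alphabet $\{a,b,c\}$. By standard means one obtains a finite automaton recognising $\{u \mid u\,c^\omega \in \mathcal L(\Phi')\}$ — namely $\mathcal B$ with its accepting set replaced by the set of states from which $c^\omega$ is accepted — so this set of finite words is regular; intersecting it with the regular language $a^+b^+$ shows that $\{a^nb^m \mid n\geq m\geq 1\}$ is regular, which contradicts the pumping lemma (pumping down the $a$-block of $a^p b^p$ yields a word with strictly fewer $a$'s than $b$'s). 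Therefore no \CTL formula is equivalent to $\varphi$.

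The point requiring the most care — and the one I would write out in full — is this last step: justifying that restricting the $\omega$-regular language of an \LTL formula to words of the fixed shape $a^\ast b^\ast c^\omega$ genuinely yields a regular language of finite words. A more self-contained alternative that avoids automata is to prove directly, by an Ehrenfeucht--Fra\"iss\'e / stuttering argument, that a \CTL formula of temporal nesting depth $d$ cannot distinguish $\calS_{n,m}$ from $\calS_{n',m'}$ once $n,n',m,m' > d$ (the classical ``threshold counting'' limitation of \LTL), and then compare $\calS_{2d,d}$, where $\varphi$ holds, with $\calS_{d,2d}$, where it fails; this, however, requires setting up the game machinery explicitly.
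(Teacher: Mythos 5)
Your proof is correct, but it takes a genuinely different route from the paper's. The paper argues at the level of arbitrary branching models: it invokes the result of Bernholtz--Vardi--Wolper that every \CTL formula's set of models is recognized by a finite alternating tree automaton, projects that automaton onto an alternating word automaton accepting the finite branch prefixes of models of $\varphi$, and observes that the resulting language (prefixes with at least as many $A$'s as $B$'s) is not regular. You instead restrict attention to the single-run structures $\calS_{n,m}$, exploit the collapse of $\Ex$ and $\All$ on deterministic Kripke structures to turn any candidate \CTL formula into an \LTL formula over the words $a^nb^mc^\omega$, and then contradict regularity of $\{a^nb^m \mid n \geq m \geq 1\}$ via the pumping lemma. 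Your separating family is well chosen ($q_{n,m} \sat \varphi$ iff $n \geq m$ is verified correctly against the strict-prefix semantics), the \CTL-to-\LTL induction on deterministic structures is sound, and the extraction of the regular language $\{u \mid u\,c^\omega \in \mathcal L(\Phi')\}$ from a B\"uchi automaton is the standard construction you identify as the delicate step. What each approach buys: the paper's is shorter granted the alternating-tree-automaton characterization of \CTL, and shows non-regularity of the full prefix language rather than of one slice of it; yours is more elementary and self-contained, needing only the $\omega$-regularity of \LTL and the classical pumping lemma, and it pins down an explicit finite family of witnesses, which is also what your alternative EF-game sketch would exploit. One cosmetic point: your \LTL translation should also cover $\EX/\AX$ (mapped to $\X$), though this changes nothing in the argument.
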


\begin{proof}[(sketch)]
  Formula $\varphi$ (already seen in Sec. \ref{sec:examples-cctl} with
  different atomic propositions) states that the number of $B$-labeled
  states cannot exceed the number of $A$-labeled states along any
  path.
  As shown by \cite{bvw94} and also presented in \cite{wilke99}, the
  set of models of any \CTL formula can be recognized by a finite
  alternating tree automaton.
  Suppose there exists a \CTL formula $\varphi'$ equivalent to
  $\varphi$, and let $\mathcal{A}$ be the alternating tree automaton
  accepting its set of models. From $\mathcal{A}$, one can easily
  build a finite alternating automaton on words over $2^{\{A,B\}}$,
  whose accepted language is the set of all finite prefixes of
  branches in models of $\varphi$, namely words whose prefixes contain
  at most as many $B$'s as $A$'s. Since this language is clearly not
  regular, this leads to a contradiction.
\end{proof}

\subsection{Succinctness}

Our extensions of CTL come with three main potential sources of
concision, which appear to be orthogonal: the encoding of constants in
binary, the possibility to use Boolean combinations in constraints,
and the use of sums. However, only the first two turn out to yield an
exponential improvement in succinctness. First we consider the case of sums:

\begin{prop}
  \label{prop:cs1-to-ctl}
  For every formula $\Phi \in \CCTL$ with unary encoding of
  integers, there exists an equivalent \CTL formula of DAG-size
  polynomial in $|\Phi|$.
\end{prop}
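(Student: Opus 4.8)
The plan is to translate a $\CCTL$ formula $\Phi$ with unary-encoded integers into an equivalent $\CTL$ formula by the same unfolding idea used in Proposition~\ref{prop:csbb-to-ctl}, but exploiting the fact that constraints here are single atomic constraints of the form $(\sum_i \nb\vfi_i)\sim k$ with no Boolean combinations. Proceeding by structural induction and using $\EFs{C}$, $\AFs{C}$ as the only temporal modalities (Remark~\ref{rem:until}), it suffices to handle $\Phi=\EFs{(\sum_{i=1}^m \nb\vfi_i)\sim k}\,\varphi$; the dual case replaces $\Ex$ by $\All$. The key observation is that, since there is just one counter value to track, the ``state'' of the translation during the scan of a prefix is described by a single integer $c\in[0,k]$ recording how many of the counted sub-formulas have been seen so far (capped at $k$, since once the count exceeds $k$ the future behaviour of the constraint is determined). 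I would therefore introduce $\CTL$ formulas $\mathrm{Good}_{\sim k}(c)$, true exactly when the current accumulated count $c$ makes $(\sum_i\nb\vfi_i)\sim k$ hold, and build a family of auxiliary formulas, parameterised by $c$, that unfold the run one step at a time.

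Concretely, for each $c\in\{0,\dots,k\}$ I would define by induction on $c$ (downward, say, or by a well-founded measure on the remaining ``budget'') a $\CTL$ formula expressing ``there is a run from here along which $\varphi$ eventually holds, and the strict prefix before that point brings the total count from $c$ to a value satisfying $\sim k$''. Each such formula says: either the current count $c$ already satisfies the constraint and $\tr{\varphi}$ holds now; or we scan the $m$ sub-formulas $\vfi_1,\dots,\vfi_m$ in the current state — a $\CTL$ formula of size $O(m)$ built from $\tr{\vfi_i}$ and a disjunction over which subset of the $\vfi_i$ holds, each disjunct updating $c$ to $\min(k,c+(\text{number matched}))$ — and then take one $\EX$ step to the successor with the updated count. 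Because each $\EX$ step strictly increases $c$ unless no $\vfi_i$ ever holds again (in which case the constraint's truth is frozen and no recursion is needed), the recursion terminates after at most $k+1$ levels.

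For the DAG-size bound I would count distinct sub-formulas: there is one family indexed by the counted-state $c\in[0,k]$ for each temporal modality of $\Phi$, and within one unfolding level the auxiliary scanning formula has size $O(m\cdot 2^m)$ if written as an explicit disjunction over subsets — which would be exponential in $m$ and must be avoided. The main obstacle, and the place the argument needs care, is exactly this: the scan of the $m$ counted sub-formulas in a single state must itself be encoded succinctly, not by branching over all $2^m$ subsets. This is handled by scanning the $\vfi_i$ one at a time, as in the proof of Proposition~\ref{prop:csbb-to-ctl}: introduce intermediate formulas $\Psi_{c,i}$ meaning ``$c$ counts the matches among $\vfi_1,\dots,\vfi_{i-1}$ in the current state'', with $\Psi_{c,i}=(\tr{\vfi_i}\et\Psi_{\min(k,c+1),i+1})\ou(\non\tr{\vfi_i}\et\Psi_{c,i+1})$ and $\Psi_{c,m+1}=\EX(\text{next-step formula with count }c)$, together with the early-exit disjunct for $c$ already satisfying $\sim k$. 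This gives $O(k\cdot m)$ distinct $\Psi$-formulas per modality, hence $O(|\Phi|\cdot k\cdot m)$ in total; with $k\in O(|\Phi|)$ under unary encoding and $m\in O(|\Phi|)$, the translation has DAG-size $O(|\Phi|^3)$, which is polynomial. Correctness follows by induction on the until-nesting depth of $\tr{\Phi}$ together with the remaining budget $k-c$, showing $\Psi_{c,i}$ faithfully tracks the value $\csize{\sigma}{\cdot}$ of Definition~\ref{def:cctl-sem}.
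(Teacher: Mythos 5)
Your proposal is correct and follows essentially the same route as the paper, which obtains this proposition directly by instantiating the general translation of Proposition~\ref{prop:csbb-to-ctl} with $M=1$ (a single atomic constraint) and $K\leq|\Phi|$ (unary encoding); your specialised construction with a single bounded counter $c$ and a one-at-a-time scan of the $\vfi_i$ is precisely that construction restated. The only detail to make explicit is that, as in the paper's case $C=C'$, the clause $\Psi_{c,m+1}$ must yield $\bot$ when no $\vfi_i$ matched in the current state (runs of non-counting states being absorbed by the left argument of the surrounding $\U$), since otherwise the unfolding with unchanged $c$ would refer to itself and not be well-founded.
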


\proof This proposition is a direct consequence of the DAG-size
computation presented in the proof of Prop. \ref{prop:csbb-to-ctl}
where $M$, the number of atomic constraints in a constraint in $\Phi$,
is set to 1 to reflect the absence of Boolean connectives inside
constraints, and where $K$, the maximal constant in $\Phi$, is bounded
by $|\Phi|$ due to the unary encoding. \qed

We now look at the succinctness gap due to the binary encoding of
constants\footnote{Note that for real-time logics, it is already known
  that the binary encoding of integer constants induces a complexity
  blow-up for the decision procedures~\cite{alur93b,alur94b}.}:

\begin{prop}
  $\CCTLa$ can be exponentially more succinct than $\CTL$.
\end{prop}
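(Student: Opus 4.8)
The plan is to exhibit a family $(\Phi_n)_{n\geq 1}$ of \CCTLa formulas with $|\Phi_n|=O(n)$ whose shortest equivalent \CTL\ formulas have DAG-size at least $2^n$. The natural witness is $\Phi_n \egdef \EFs{\nb\top = 2^n} P$, which by Remark~\ref{rem:g-semantics} holds at a state $q$ exactly when some run from $q$ satisfies $P$ at its $2^n$-th state; it is equivalent to the \CTL\ formula $\Ex\X\,\Ex\X\cdots\Ex\X\,P$ with $2^n$ nested copies of $\Ex\X$. As $2^n$ is written in binary, $|\Phi_n|=O(n)$, so it is enough to establish a $2^n$ lower bound on the DAG-size of every \CTL\ formula equivalent to $\Phi_n$.

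For the lower bound I would restrict attention to two ``linear'' test structures. For $i\in\{2^n,2^n+1\}$, let $L_i$ consist of a simple path $q_0\to q_1\to\dots\to q_i$ with $P$ labelling only $q_i$, followed by a $\emptyset$-labelled sink $s$ (with $q_i\to s$ and $s\to s$) and with every state other than $q_i$ labelled $\emptyset$; $L_i$ is finite ($i+2$ states), its transition relation is total, and its unique run from $q_0$ traces the $\omega$-word $w_i\egdef\emptyset^i\{P\}\emptyset^\omega$ over $2^{\{P\}}$. Over such structures path quantifiers are vacuous, so any \CTL\ formula $\Psi$ translates into an \LTL\ formula $\widetilde\Psi$ by deleting every occurrence of $\Ex$ and $\All$; this never increases the DAG-size, leaves the maximal nesting depth of temporal modalities unchanged, and a straightforward induction shows $L_i,q_0\models\Psi \iff w_i\models\widetilde\Psi$. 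Since $L_{2^n},q_0\models\Phi_n$ while $L_{2^n+1},q_0\not\models\Phi_n$, any $\Psi\equiv\Phi_n$ yields a $\widetilde\Psi$ that distinguishes $w_{2^n}$ from $w_{2^n+1}$.

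It then suffices to prove that every \LTL\ formula (over $\X$ and $\U$) that distinguishes $w_{2^n}$ from $w_{2^n+1}$ has temporal-modality nesting depth at least $2^n$: a formula of nesting depth $d$ contains $d$ nested temporal-operator occurrences forming a directed path of pairwise distinct nodes in its (acyclic) syntax DAG, so $\widetilde\Psi$ --- hence $\Psi$ --- has DAG-size at least $2^n$, against $|\Phi_n|=O(n)$. I would obtain the nesting-depth bound from the standard Ehrenfeucht--Fra\"iss\'e game for \LTL: the words $w_{2^n}$ and $w_{2^n+1}$ differ only by one extra $\emptyset$ inserted around position $2^n$, so after any Until move of Spoiler the Duplicator can realign the two chosen positions so that either their suffixes coincide exactly or the play falls back onto the ``diagonal'' without getting any closer to position $2^n$; hence Spoiler is forced to spend $2^n$ successive $\X$-moves before reaching the first position where the two words disagree, and the Duplicator wins every shorter game. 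The main obstacle is carrying out this game analysis rigorously in the presence of Until, i.e.\ verifying the Duplicator's realignment for every possible shape of Spoiler's move; this is by now standard, and all remaining steps are routine.
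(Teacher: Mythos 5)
Your proof is correct, but it takes a genuinely different route from the paper's. The paper obtains the lower bound by citation: \cite{lst-TCS2001} shows that the \TCTL formulas $\EF_{<n} A$ and $\EF_{>n} A$, of size $O(\log n)$, admit no equivalent \CTL formula of temporal height (hence size) less than $n$, and these are directly expressible in $\CCTLa$ as $\EFs{\nb \textsl{tick} < n} A$ and $\EFs{\nb \textsl{tick} > n} A$. You instead prove the height bound from scratch for the witness $\EFs{\nb\top=2^n}P$: collapsing \CTL to \LTL over your deterministic linear structures is sound (path quantifiers are indeed vacuous there, and the erasure preserves temporal nesting depth while not increasing DAG-size), any equivalent formula must then separate $\emptyset^{2^n}\{P\}\emptyset^\omega$ from $\emptyset^{2^n+1}\{P\}\emptyset^\omega$, and a chain of nested temporal operators gives a path of distinct nodes in the syntax DAG, so depth bounds DAG-size from below. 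The one step you defer --- that separating these two words requires temporal depth at least $2^n$ --- is genuinely standard and true: the two words differ only by stuttering one letter inside a homogeneous block of length $2^n$, so until-moves cannot help Spoiler and only the $\X$-depth matters; by the Etessami--Wilke game or the generalized stuttering theorem, $\X$-depth $d$ cannot look past a block of identical letters of length greater than $d$. Your deferral of this lemma is thus no worse than the paper's reliance on \cite{lst-TCS2001}. What your approach buys is a self-contained argument over very simple models; what the paper's buys is brevity and the observation (which it reuses elsewhere) that the gap already arises for the weaker $<$ and $>$ constraints. Both arguments ultimately exploit the same phenomenon: reachability at a fixed, binary-encoded distance forces \CTL temporal height proportional to that distance.
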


\begin{proof}
  In \cite{lst-TCS2001}, it is shown that the logic \TCTL, when
  interpreted over Kripke structures with a special atomic proposition
  \textsl{tick} used to mark the elapsing of time, can be
  exponentially more succinct than \CTL\footnote{This was also
    observed in~\cite{emerson92b} for the logic \RTCTL over
    $\DKSo$.}. More precisely, the \TCTL formulas $\EF_{<n} A$ and
  $\EF_{>n} A$, which are of size $O(\log(n))$ since $n$ is encoded in
  binary, do not admit any equivalent \CTL formula of temporal height
  (and hence also size) less than $n$. These formulas express the
  existence of a path where $A$ eventually holds and less (resp. more)
  than $n$ clock ticks are seen until then. They are respectively
  equivalent to the $O(\log(n))$-size $\CCTLa$ formulas $\EFs{\nb
    \textsl{tick} < n} A$ and $\EFs{\nb \textsl{tick} > n} A$.
\end{proof}

Note that the proof of the previous proposition only uses the simplest
kind of constraint: we do not need sums (and coefficients) or Boolean
combinations in the constraints.

This exhibits a first aspect in which \CCTL logics can be
exponentially more succinct than \CTL. However, as expressed in the
next proposition, another orthogonal feature of the logic may yield a
similar blow-up.

\begin{prop}
  $\CCTLba$ with unary encoding of integers can be exponentially
  more succinct than $\CTL$.
\end{prop}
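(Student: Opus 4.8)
The plan is to exhibit a family $(\Phi_n)_{n\geq 1}$ of $\CCTLba$ formulas whose size is linear in $n$ --- and which use \emph{no} sums, \emph{no} coefficients and \emph{no} constant larger than $1$, so that the unary encoding costs nothing --- yet for which every equivalent $\CTL$ formula has size $2^{\Omega(n)}$. Concretely I propose, over atomic propositions $p_1,\dots,p_n$,
\[
\Phi_n \;=\; \EFs{\bigwedge_{i=1}^{n}\nb p_i \geq 1}\top .
\]
By Remark~\ref{rem:g-semantics}, $\Phi_n$ holds at a state $q$ if and only if there is a run from $q$ along which each $p_i$ is satisfied at least once; in other words $\Phi_n$ is equivalent to the $\CTLs$ formula $\Ex(\bigwedge_{i=1}^n \F p_i)$. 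Thus the single feature under scrutiny is exactly the use of a Boolean combination inside a constraint, orthogonally to the binary-encoding phenomenon of the preceding proposition.

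Before the lower bound, one should record that, \emph{unlike} the subtraction case treated above, $\Phi_n$ \emph{is} expressible in $\CTL$, only at exponential cost. Setting $\psi_\emptyset \egdef \top$ and, for $\emptyset \neq S \subseteq \{1,\dots,n\}$,
\[
\psi_S \;\egdef\; \EF\Big(\bigvee_{i\in S}(p_i \et \psi_{S\setminus\{i\}})\Big),
\]
a straightforward induction on $|S|$ shows that $\psi_S$ holds at $q$ iff some run from $q$ visits a $p_i$-state for every $i\in S$ (for the backward direction, pick the first state on the run that satisfies some $p_i$ with $i\in S$, and apply the induction hypothesis to the corresponding suffix). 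Hence $\psi_{\{1,\dots,n\}} \equiv \Phi_n$ is a $\CTL$ formula of DAG-size $2^{O(n)}$, consistently with Proposition~\ref{prop:csbb-to-ctl}: no gain in expressiveness is claimed here, only in succinctness.

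The core of the argument is then to prove that \emph{every} $\CTL$ formula equivalent to $\Phi_n$ has size $2^{\Omega(n)}$. I see two natural routes. The first reuses the automaton characterization already invoked in Section~\ref{sec:express}: a $\CTL$ formula of size $s$ is recognized by an alternating tree automaton with $O(s)$ states (\cite{bvw94,wilke99}), so it suffices to show that the tree language ``some branch meets all of $p_1,\dots,p_n$'' requires $2^{\Omega(n)}$ states even for \emph{alternating} tree automata. The second, more self-contained route is a formula-size (Ehrenfeucht--Fra\"iss\'e-style) game: for each $n$ one builds a ``positive'' Kripke structure $\calS_n^+ \models \Phi_n$ and a ``negative'' one $\calS_n^- \not\models \Phi_n$ which no $\CTL$ formula of size below $2^{\Omega(n)}$ can tell apart, whence any $\CTL$ formula equivalent to $\Phi_n$ must have size at least $2^{\Omega(n)}$. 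The natural candidates are variants of the ``subset'' Kripke structure --- states record the set of propositions among $p_1,\dots,p_n$ collected so far along the current path, and $\calS_n^-$ is obtained from $\calS_n^+$ by a small, deeply buried edit that forbids ever collecting the full set --- together with a Duplicator strategy surviving $2^{\Omega(n)}$ rounds.

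The hard part is exactly this lower bound, and it is also what makes the gap exponential rather than polynomial. Intuitively, $\CTL$ --- in contrast with $\CTLs$ --- cannot bind a single path and impose several $\F$-requirements on it at once; to verify $\Phi_n$ along the path it explores, a $\CTL$ formula must in effect remember which of the $2^n$ subsets of $\{p_1,\dots,p_n\}$ it has already witnessed, and these memory states cannot be compressed because the propositions may be encountered in any order and at unbounded depth. Turning this intuition into a proof --- and in particular ruling out that alternation, or the branching of the models, affords a shortcut --- is the main obstacle; it is the same phenomenon underlying the known succinctness gap between $\CTL$ and $\CTLs$. Finally one stresses that, since $\Phi_n$ contains neither subtraction, nor sums, nor large constants, this blow-up is genuinely attributable to Boolean connectives in constraints alone.
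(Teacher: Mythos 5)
Your witness family is exactly the one the paper uses: $\Phi_n = \EFs{\bigwedge_i \nb p_i \geq 1}\top$, equivalent to the $\CTL^+$ formula $\Ex(\F p_1 \et \ldots \et \F p_n)$, and your explicit $\CTL$ upper bound $\psi_S$ is a correct (if optional) sanity check. The problem is that your argument stops exactly where the proposition's content begins: the exponential lower bound on every equivalent $\CTL$ formula is never established. You offer two candidate routes (alternating tree automata, or an Ehrenfeucht--Fra\"iss\'e-style formula-size game on ``subset'' structures) but carry out neither, and you explicitly flag the lower bound as ``the main obstacle.'' As written, this is not a proof of the proposition but a reduction of it to an unproved claim.

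That unproved claim is, however, precisely a known theorem: Wilke and Adler--Immerman \cite{wilke99,ai03} showed that any $\CTL$ formula equivalent to $\Ex(\F p_1 \et \ldots \et \F p_n)$ must have length exponential in $n$ (Adler--Immerman's argument is essentially the game you sketch in your second route). The paper's entire proof consists of citing this result and observing the equivalence with $\Phi_n$. So the gap in your proposal is closed by a citation rather than by new mathematics --- but you should be aware that the lower bound you defer is genuinely nontrivial (it is the main technical content of those two papers), and that gesturing at ``the known succinctness gap between $\CTL$ and $\CTLs$'' is not enough: you need the lower bound for this specific $\CTL^+$ formula, stated against \emph{all} equivalent $\CTL$ formulas, which is exactly what \cite{wilke99,ai03} provide.
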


\proof It was shown by \cite{wilke99,ai03} that any $\CTL$ formula
$\varphi$ equivalent to the $\CTL^+$ formula $\psi = \Ex(\F P_0 \land
\ldots \land \F P_n)$ must be of length exponential in $n$. It turns
out $\psi$ is equivalent to the $\CCTLba$ formula $\psi' =
\EFs{\bigwedge_i \nb P_i \geq 1} \top$, which entails the result. Note
that $\psi'$ only contains the constant 1, which means that this gap
cannot be imputed to the binary encoding. \qed

The intuitive reason for this blow-up is that a \CTL formula
expressing the property that atomic propositions $P_1$ to $P_n$ are
each seen at least once along a path would have to keep track of all
possible interleavings of occurrences of $P_i$'s.

To summarize, we showed that two different aspects of the extensions
of \CTL presented in this paper, while not increasing the overall
expressiveness of the logic, may yield exponential improvements in
succinctness.  It would remain to study the succinctness of remaining
$\CCTL$ fragments with respect to each other, in particular when these
aspects are combined.

\section{Model checking}
\label{sec:mc}

\subsection{Polynomial-time model-checking}

Even though, as we discussed in the previous section, diagonal
constraints lead to strictly more expressive logics than \CTL, it
turns out that model-checking $\CCTLpma$ is asymptotically not
more difficult than model checking \CTL itself. As a preliminary
result of independent interest, we show that the existence of a
polynomial-time algorithm for the model-checking of the logic \TCTL
over $\DKSzo$, as shown in \cite{lst-TCS2001}, remains true when
considering more general weighted graphs, namely DKS's with weights in
$\{-1,0,1\}$. This result will be used to establish the complexity of
model-checking for $\CCTLpma$, and as a corollary also for all
weaker fragments.

\begin{prop}
  \label{prop:tctl-ozo}
  The model-checking problem for $\TCTL$ over $\DKSozo$ is
  \P-complete.
\end{prop}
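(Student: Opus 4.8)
The plan is to establish the two directions of \P-completeness separately. Hardness is immediate: \CTL model-checking over ordinary Kripke structures is already \P-hard, and a \DKSozo\ is just a Kripke structure with integer weights on transitions, so the same reduction applies verbatim (weights can be ignored, or set to $0$). The substance of the statement is therefore the \P-membership, i.e.\ designing a polynomial-time model-checking algorithm for \TCTL\ over \DKSozo, extending the $\{0,1\}$-weight algorithm of \cite{lst-TCS2001} to the presence of $-1$ weights.

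For membership, I would follow the usual bottom-up \CTL-style labelling: process sub-formulas by increasing size, at each step assuming the truth values of all proper sub-formulas are already computed and stored as fresh atomic propositions. The only non-trivial cases are the subscripted modalities $\All \vfi \U_{\sim c} \psi$ and $\Ex \vfi \U_{\sim c} \psi$; by the standard \TCTL\ dualities it suffices to handle, say, $\Ex \vfi \U_{\le c} \psi$, $\Ex \vfi \U_{\ge c} \psi$ (and the strict variants, which are obtained by shifting $c$), plus their universal counterparts. The first key step is a \emph{boundedness observation}: because weights lie in $\{-1,0,1\}$, the accumulated cost along any prefix changes by at most $1$ per step, so on any run witnessing $\Ex\vfi\U_{\sim c}\psi$ one may assume the cost stays within $[-|Q|-c,\,c+|Q|]$ — any excursion outside a window of width $O(|Q|+c)$ can be short-circuited, since a cost value repeated at two states with the same "mode" yields a pumpable loop that can be removed (if it strictly helps or is neutral) or, for lower-bound constraints where one \emph{wants} large cost, traversed enough times but still boundedly often. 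This bounds the relevant "cost register" by a value polynomial in $|Q|$ and $c$; since $c$ is given in binary this is not yet polynomial, so the second key step is to argue that only polynomially many cost values actually matter — concretely, by replacing the exact cost by its "distance to the threshold $c$, capped", one needs to track the register only in a window of size $O(|Q|)$ around $c$ (for $\le$/$<$) or around $0$ (for $\ge$/$>$), exactly as in \cite{lst-TCS2001}. Having fixed a polynomial-size register, I would form the product of the \DKSozo\ with this register — a graph of size $O(|Q|\cdot(|Q|))$ — and reduce $\Ex\vfi\U_{\sim c}\psi$ to ordinary \CTL\ reachability / $\Ex\_\U\_$ on this product, and $\All\vfi\U_{\sim c}\psi$ to the corresponding \CTL\ property (using that the universal case is decided by analysing, for every run, the first prefix satisfying the constraint, which again is a finite-state condition on the product). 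Each such step is polynomial, and there are at most $|\Phi|$ steps, giving the bound.

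The main obstacle I expect is the treatment of the $\ge$/$>$ (lower-bound) constraints together with the $\All$ quantifier and with $-1$ weights simultaneously: with negative weights present, the cost is no longer monotone along a run, so "the first prefix whose cost reaches $c$" is a genuinely different object than in the $\{0,1\}$ case, and for $\All$-formulas one must be careful that a run may dip below $0$, climb back up, and that both the $\vfi$-invariant and the moment of reaching $\psi$ interact with these oscillations. Handling this cleanly is where the pumping/windowing argument has to be stated precisely — in particular showing that for membership it is sound to cap the register at $0$ from below and at $c+|Q|$ from above without changing the truth value of the modality. Once that capping lemma is in place, the rest is a routine product construction plus an appeal to polynomial-time \CTL\ model-checking on the product graph, so I would organise the write-up as: (i) reduction to the four basic modality cases; (ii) the capping/boundedness lemma for \DKSozo; (iii) the product construction and its size; (iv) correctness, by induction on formula structure; (v) the hardness remark.
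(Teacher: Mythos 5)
Your hardness direction is fine, and your treatment of the one-sided constraints $\leq,<,\geq,>$ points in a workable direction: as in the paper's cases for $\Ex \vfi \U_{\leq k} \psi$ and $\Ex \vfi \U_{\geq k} \psi$, a witness either follows a walk whose cost stays within $[-|Q|,|Q|]$ or exploits a reachable negative (resp.\ positive) cycle, which the paper detects with Floyd--Warshall rather than a product construction. But there are two genuine gaps. First, \TCTL\ as used here also contains the equality-constrained modalities $\Ex \vfi \U_{=k} \psi$ and $\All \vfi \U_{=k} \psi$, and these are \emph{not} reducible to the $\leq$ and $\geq$ cases (the two witnesses need not coincide at the same position of the same run), so your opening reduction ``to the four basic modality cases'' silently drops the hardest part of the statement. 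Your windowing/capping device cannot handle $=k$ either: to recognise that a prefix has cost \emph{exactly} $k$ with $k$ in binary, a register confined to a window of size $O(|Q|)$ around $0$ or around $k$ is insufficient (the cost starts at $0$, and once capped you can no longer tell when, or with what value, it re-enters the window), while the uncapped register over $[0,k+|Q|]$ makes the product exponential. The paper's way around this is not a product at all but repeated squaring of reachability relations: it computes the weight-$0$ reachability relation $R_0$ as a fixed point, then $R_k=R_{\lfloor k/2\rfloor}\cdot R_{\lceil k/2\rceil}$ in $O(\log k)$ relational compositions, which is exactly what makes the binary encoding harmless. Some idea of this kind is indispensable and is absent from your plan.

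Second, you correctly flag the interaction of $\All$, lower bounds and negative weights as the main obstacle, but you leave it unresolved, and this is where almost all of the paper's work lies: for $\All \vfi \U_{=0} \psi$ it builds a three-copy Kripke structure (copies $Q$, $Q^+$, $Q^-$ recording the sign of the accumulated weight, glued by a ``strict'' zero-weight reachability relation $R_0^s$) and reduces the \emph{negation} of the formula to a plain \CTL\ weak-until property on it, with a two-lemma correctness argument; the general $\All \vfi \U_{\sim k} \psi$ cases then need further relations (weight-$k$ reachability with all prefixes of weight $<k$), again computed by dichotomy. Saying that the universal case ``is a finite-state condition on the product'' does not describe an algorithm here, precisely because with weights in $\{-1,0,1\}$ the relevant prefix is not determined by any polynomially bounded register. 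As it stands, the proposal establishes hardness and the two easy existential cases, but not the proposition.
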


\P-hardness is inherited from \CTL (see \cite{phs-aiml02} for a proof
of the \P-hardness of \CTL). For membership in \P, we consider a DKS
$\calS = \tuple{Q,R,\ell}$ with $R\subseteq Q \times \{-1,0,1\} \times
Q$, a state $q \in Q$ and a \TCTL formula $\Phi$, and show that
deciding whether $q \models \Phi$ can be done in polynomial time. As
usual, we inductively assume the set of states satisfying all strict
sub-formulas of $\Phi$ to be known, and proceed from there. We
distinguish several cases:

\begin{enumerate}[(1)]
\item \label{case:eleq} $\Phi = \Ex \vfi \U_{\leq k} \psi$: We first
  determine the subset of states $\Quntil$ from which the $\CTL$
  formula $\Ex \vfi \U \psi$ holds, and consider the restriction
  $\calS'$ of $\calS$ to $\Quntil$ in which outgoing edges of states
  labeled by $\psi \et \non \vfi$ are removed. $\Phi$ holds over some
  state $q$ in $\calS$ if and only if $q \in \Quntil$ and there exists
  a path of weight at most $k$ in $\calS'$ from $q$ to some other
  state $q'$ where $\psi$ holds. Considered paths are either simple,
  or composed of a prefix from $q$ to some state $q''$, a
  negative-weight cycle from $q''$ to itself repeated a certain number
  of times, and a suffix from $q''$ to $q'$.

  Even though finding a \emph{simple} path of weight less than $k$ in
  a graph containing negative cycles is \NP-complete, this is not
  exactly what we are considering since we allow paths containing
  repeated states. Our problem can thus be tested in polynomial time
  using the classical Floyd-Warshall algorithm (to compute all-pairs
  shortest paths) over $\calS'$. The matrix $(\alpha)_{q,q'}$ of
  shortest-path weights computed by this algorithm gives us sufficient
  information: to decide whether a state $q$ satisfies $\Phi$: $q\sat
  \Phi$ one simply need to check whether there exists $q'$ satisfying
  $\psi$ such that either $\alpha_{q,q'} \leq k$, or there exists
  $q''$ such that $\alpha_{q,q''} < \infty$, $\alpha_{q'',q''} < 0$
  and $\alpha_{q'',q'} < \infty$.

\medskip
\item $\Phi = \Ex \vfi \U_{\geq k} \psi$: We build the DKS $\calS'$ as
  in the previous case, and a new DKS $\calS''$ isomorphic to $\calS'$
  but with opposite weights. Then, $\Phi$ is satisfied from $q$ in
  $\calS'$ (and thus also $\calS$) if and only if the formula $\Ex
  \vfi \U_{\leq -k} \psi$ is satisfied from $q$ in $\calS''$.

\medskip
\item $\Phi = \Ex \vfi \U_{= k} \psi$: We build the DKS $\calS'$ as in
  case \ref{case:eleq}, and compute the relation 
  \[
  R_k = \{ (q, q') \in \Q{\vfi \land \Ex \vfi \U \psi} \times \Quntil
  \mid q \Lra^{k}_R q' \}.
  \]
  For $k = 0$, $R_0$ can be seen as $\bigcup_{i \geq 0} X_i$ with:
  \[
  \left\{
  \begin{aligned}
    X_0 &\ = (\lra^{0}_R)^*
    \\
    X_{i+1} &\ = X_{i} \cup (X_i \cdot \lra^{1}_R \cdot X_i \cdot \lra^{-1}_R \cdot
    X_i) \cup (X_i \cdot \lra^{-1}_R \cdot X_i \cdot \lra^{1}_R \cdot X_i)
  \end{aligned}
  \right.
  \]
  which can be obtained by a simple fixed-point computation requiring
  at most $|Q|^2$ iterations (since $|R_0| \leq |Q|^2$). For $k = 1$,
  we simply have $R_1 = R_0 \cdot \lra^{1}_R \cdot R_0$.
  For greater values of $k$, we use dichotomy to express this relation
  in terms of $R_0$ and $R_1$ in $O(\log(k))$ steps (\ie $O(|\Phi|)$,
  since $k$ is encoded in binary), by writing
  \[
  R_k = R_{\floor{k/2}} \cdot R_{\ceil{k/2}}.
  \]
  Each of these relational compositions requires time at most cubic in
  the size of $Q$. It then suffices to test whether $(q, q') \in R_k$
  for some $q'$ verifying $\psi$.

\medskip
\item $\Phi = \All \vfi \U_{= 0} \psi$: The procedure consists in
  defining a standard Kripke structure $\calS'$ and a classical \CTL
  formula $\Psi$ such that $\calS'$ satisfies $\Psi$ if and only if
  $\calS$ does not satisfy $\Phi$.

  Using fixed-point computations over $Q \times Q$, we compute the
  relations $\Rp{0}$ and $\Rm{0}$ as the respective least solutions of
  \[
  \left\{
  \begin{aligned}
    X_0 &\ = (\lra^{0}_R)^*
    \\
    X_{i+1} &\ = X_{i} \cup (\lra^{1}_R \cdot X_i \cdot \lra^{-1}_R)
  \end{aligned}
  \right.
  \text{\quad and \quad}
  \left\{
  \begin{aligned}
    X_0 &\ = (\lra^{0}_R)^*
    \\
     X_{i+1}  &\ = X_{i} \cup (\lra^{-1}_R \cdot X_i \cdot \lra^{1}_R).
  \end{aligned}
  \right.
  \]
  $\Rp{0}$ and $\Rm{0}$ respectively express the reachability relation
  in $\calS$ along paths of weight $0$ with no prefix of strictly
  negative (resp. positive) weight.
  We also define the relation $R_0^s$ (where $s$ stands for
  \emph{strict}) as:
  \[
  R_0^s =\ \lfl{0} \cup\ (\lfl{1} \cdot \Rp{0} \cdot \lfl{-1}) \cup
  (\lfl{-1} \cdot \Rm{0} \cdot \lfl{1}),
  \] 
  which expresses reachability in $\calS$ by 0-weight paths
  such that no intermediate state (other
  than the initial one) is reached with weight 0.
  Let $Q^+, Q^-$ be two isomorphic copies of $Q$ (and $q^+$, $q^-$
  denote the copies in $Q^+$ and $Q^-$ of some state $q \in Q$), we
  can now construct $\calS' = (Q',R',\ell')$ with $Q' = Q \union Q^+
  \union Q^-$, $\ell'(q) = \ell(q)$ if $q \in Q$ and $\ell'(q^\pm) =
  \ell(q) \cup \{{\sf ok}\}$, and
  \begin{align*}
    R' =~ & \{q_1 \rightarrow q_2 \mid (q_1, q_2) \in R_0^s \}
    \\
    \cup\ & \{q_1 \rightarrow q_2^+ \mid q_1 \lra^1_R q_2\} \cup
    \{q_1^+ \rightarrow q_2^+ \mid q_1 \lra^1_R q_2 \ou (q_1, q_2) \in
    \Rp{0} \}
    \\
    \cup~ & \{q_1 \rightarrow q_2^- \mid q_1 \lra^{-1}_R q_2\} \cup
    \{q_1^- \rightarrow q_2^- \mid q_1 \lra^{-1}_R q_2 \ou (q_1, q_2)
    \in \Rm{0} \}.
  \end{align*}
  In order to eliminate finite paths, we additionally complete
  $\calS'$ with a dummy state $q_\bot$ and transitions from every
  state to $q_\bot$ and a loop from $q_\bot$ to itself. We let
  $\ell'(q_\bot) = \{\psi\}$, which will be explained in detail later
  on.

  The set of states of $\calS'$ is divided into four subsets: states
  in $Q$ correspond to the states reachable with weight 0 in $\calS$,
  and states in $Q^+$ and $Q^-$ are the states reachable with weight
  strictly more or strictly less than 0. Paths in $\calS'$ ending in
  the dummy state may not correspond to actual paths in $\calS$, but
  they correspond to situations which are irrelevant to solving the
  problem. Since a path going from $Q$ to $Q^+$, and then from $Q^+$
  to $Q$ is captured by the relation $R_0^s$, we can omit transitions
  going back to $Q$ from $Q^+$ (and similarly for $Q^-$). Hence all
  runs of $\calS'$ either stay forever in $Q$, eventually reach $Q^+$
  or $Q^-$ and stay there forever, or reach the dummy state and stay
  there forever.

  We now define the \CTL formula $\Psi$ as $\Ex (\lnot \psi \ou {\sf
    ok}) \Uw (\lnot \vfi \et (\lnot \psi \ou {\sf ok}))$ 
  \footnote{$\Uw$ is called the \emph{weak until} modality, and
    $\vfi \Uw \psi$ holds along a path if either $\G \vfi$ or $\vfi \U
    \psi$ does.} and claim
  that $q \models_{\calS'} \Psi$ if and only if $q \models_{\calS}
  \lnot \Phi$.  The idea of the proof is to show that if $\Phi$ is not
  satisfied from some state $q$ in $\calS$ then one can find a path
  from $q$ in $\calS'$ satisfying $\Psi$, and conversely that finding
  a path satisfying $\Psi$ from $q$ over $\calS'$ is sufficient to
  disprove $\Phi$ from that state in $\calS$.
  
  \begin{lem}
    $q \sat_\calS \lnot \Phi \implies q \sat_{\calS'} \Psi$.
  \end{lem}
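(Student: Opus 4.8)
The statement to prove is the first direction of the correctness of the reduction for case $\Phi = \All\vfi\U_{=0}\psi$: namely, if $q\not\sat_\calS\Phi$, then $q\sat_{\calS'}\Psi$, where $\Psi = \Ex(\lnot\psi\ou{\sf ok})\Uw(\lnot\vfi\et(\lnot\psi\ou{\sf ok}))$. I would start by unfolding what $q\not\sat_\calS\All\vfi\U_{=0}\psi$ means: there is a run $\rho\in\Exec_\calS(q)$ such that $\rho\not\sat\vfi\U_{=0}\psi$, i.e.\ there is \emph{no} index $i$ with $\rho(i)\sat\psi$, $\wght{\rho_{|i-1}}=0$, and $\rho(j)\sat\vfi$ for all $j<i$. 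From this witnessing run I want to extract a run $\pi$ of $\calS'$ starting at $q$ that satisfies $\Psi$.

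**Construction of the witness run in $\calS'$.** The idea is to trace $\rho$ through the layered structure of $\calS'$ while tracking the running weight $w_i = \wght{\rho_{|i-1}}$ (so $w_0 = 0$). As long as $w_i = 0$ I stay in the $Q$-layer, but I must be careful: the $Q$-layer of $\calS'$ only has the edges of $R_0^s$, which jump directly between successive states of zero running weight \emph{without} intermediate zero-weight states. So I would define the subsequence of indices $0 = i_0 < i_1 < i_2 < \cdots$ at which the running weight returns to $0$; between consecutive such indices the excursion $\rho(i_\ell)\cdots\rho(i_{\ell+1})$ has strictly positive or strictly negative intermediate weights, hence is captured by an $R_0^s$-edge $\rho(i_\ell)\to\rho(i_{\ell+1})$ in $\calS'$ (using $\Rp{0}$ or $\Rm{0}$ for the interior). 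If $w_i$ eventually stays nonzero forever (say strictly positive from some point on), I follow the $Q\to Q^+$ edge at the appropriate moment and then stay in $Q^+$ forever, using the $\lra^1_R$ and $\Rp{0}$ edges inside $Q^+$ (symmetrically for $Q^-$); and if at some point no successor is available in the layered part I fall back to $q_\bot$. The key observation is that every state of $\calS'$ appearing in $\pi$ is a copy of some $\rho(i)$ with $\wght{\rho_{|i}} = 0$ exactly when it lies in $Q$ (plus possibly the dummy state), and that $\pi$ visits precisely (copies of) those $\rho(i)$ with running weight $0$ before them, i.e.\ the candidate positions for $\psi$ to "count".

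**Verifying that $\pi\sat_{\calS'}\Psi$.** Now I check the weak-until formula along $\pi$. The labelling is rigged so that a state in $Q$ carries $\lnot\psi\ou{\sf ok}$ iff it is not $\psi$-labelled in $\calS$ (no ${\sf ok}$), a state in $Q^+\cup Q^-$ always carries ${\sf ok}$ hence satisfies $\lnot\psi\ou{\sf ok}$, and $q_\bot$ carries $\psi$ but no ${\sf ok}$, hence does \emph{not} satisfy $\lnot\psi\ou{\sf ok}$ (this is the subtle point of the $\ell'(q_\bot)=\{\psi\}$ choice — the dummy state terminates the "weak until" by failing the invariant, which is only acceptable if we have already reached the release condition before it). I would argue: along the $Q$-prefix of $\pi$, every visited state is a copy of some $\rho(i_\ell)$ with running weight $0$; since $\rho$ is a counterexample, either some such $\rho(i_\ell)$ satisfies $\lnot\vfi$ (and it also satisfies $\lnot\psi$, otherwise it would witness $\vfi\U_{=0}\psi$ — wait, more carefully: if $\rho(i_\ell)\sat\vfi$ for all the counted positions then failure of $\vfi\U_{=0}\psi$ forces every zero-weight position to be $\lnot\psi$, so in particular we eventually either hit a $\lnot\vfi$ position among the counted ones, or all counted positions are $\vfi\et\lnot\psi$ forever), so the invariant $\lnot\psi\ou{\sf ok}$ holds throughout, and the release point $\lnot\vfi\et(\lnot\psi\ou{\sf ok})$ is reached at the first counted $\lnot\vfi$ position — or no such position exists, in which case $\G(\lnot\psi\ou{\sf ok})$ holds along $\pi$ and the weak-until is satisfied vacuously. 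The case where $\pi$ leaves $Q$ into $Q^\pm$ is handled by noting that from then on ${\sf ok}$ holds forever, so again $\G(\lnot\psi\ou{\sf ok})$ holds on the tail, and one only needs the invariant on the finite $Q$-prefix, which holds because those are zero-weight positions and the counterexample property forbids $\psi$ there unless some earlier position already violates $\vfi$.

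**Main obstacle.** The delicate part is the bookkeeping that matches "finite run prefixes of $\rho$ of weight $0$ in $\calS$" with "states in the $Q$-layer of $\calS'$ reached along $\pi$", i.e.\ proving that the $R_0^s$-edges (and the $\Rp{0},\Rm{0}$ relations inside the $\pm$-layers) faithfully compress exactly the zero-return structure of $\rho$ without losing or inventing candidate $\psi$-positions — together with getting the edge cases right: when the running weight never returns to $0$ after some point, when $\rho$ has infinitely many zero-returns, and when we must divert to $q_\bot$. I expect the cleanest route is to state a small bridging lemma that $\Rp{0}$ (resp. $\Rm{0}$, $R_0^s$) exactly captures the claimed reachability relations, prove it by the obvious induction on the fixed-point iterations, and then the main argument becomes a fairly mechanical case analysis on the asymptotic behaviour of the running-weight sequence $(w_i)_i$ of the counterexample run $\rho$.
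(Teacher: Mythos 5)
Your overall strategy --- tracing the counterexample run $\rho$ through the layered structure of $\calS'$, compressing zero-return excursions into $R_0^s$-edges and diverting into $Q^+$ or $Q^-$ when the running weight never returns to $0$ --- is the same as the paper's, and it correctly handles the cases where $\vfi\U_{=0}\psi$ fails because no $\psi$-state is ever reached at weight $0$. However, there is a genuine gap in the remaining case: the until may fail because a state satisfying $\lnot\vfi$ occurs at \emph{non-zero} running weight, strictly before a later $\psi$-state that \emph{is} reached at weight $0$. Your rule only sends the witness path into $Q^\pm$ ``if the weight eventually stays nonzero forever,'' so in this situation your path stays in the $Q$-layer, the offending $\lnot\vfi$-state is swallowed inside an $R_0^s$-edge and never visited, and the path eventually reaches the copy in $Q$ of the weight-$0$ $\psi$-state. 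That state satisfies $\psi$ and not ${\sf ok}$, so it violates the invariant $\lnot\psi\ou{\sf ok}$, and no release point $\lnot\vfi\et(\lnot\psi\ou{\sf ok})$ has been produced beforehand. The underlying error is the claim that if no weight-$0$ position is $\lnot\vfi$ then every weight-$0$ position is $\lnot\psi$: this is false because the clause $\forall j<i,\ \rho(j)\sat\vfi$ in the until ranges over \emph{all} earlier positions, not only the weight-$0$ ones. Your own parenthetical ``unless some earlier position already violates $\vfi$'' flags the problem but does not resolve it.

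The fix is the paper's case (a)(ii): let $\rho_1 q_1$ be the \emph{shortest} prefix with $q_1\sat\lnot\vfi$ and either $q_1\sat\lnot\psi$ or $\wght{\rho_1 q_1}\neq 0$ (such a prefix exists whenever the failure of the until is caused by $\lnot\vfi$). If $\wght{\rho_1 q_1}\neq 0$, decompose $\rho_1 q_1 = \rho_2 q_2 \rho_3 q_1$ with $\rho_2 q_2$ the longest weight-$0$ prefix, and build the $\calS'$-path that follows $R_0^s$-edges up to $q_2$ and then \emph{deliberately} enters $Q^+$ or $Q^-$ to reach $q_1^\pm$ --- even though the weight of $\rho$ later returns to $0$. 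There $q_1^\pm\sat\lnot\vfi\et{\sf ok}$ supplies the release condition, and any continuation (e.g.\ to $q_\bot$) completes the weak until. With this additional case, and with your (correct) observation that a first breaking point at weight $0$ must also satisfy $\lnot\psi$, the argument becomes complete and coincides with the paper's proof.
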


  \proof There are several ways in which $\Phi$ may fail to hold over
  $\calS$:
  \begin{enumerate}[(a)]
  \item \label{e:nonphi} There exists a path $\rho$ in $\calS$ along
    which a state $q_1 \sat \lnot \vfi$ appears strictly before the first
    state satisfying $\psi$ and reached with weight 0. Let $\rho_1
    q_1$ be the shortest prefix of $\rho$ such that $q_1 \sat \lnot
    \vfi$ and either $q_1 \sat \lnot \psi$ or $\wght{\rho_1 q_1} \neq
    0$.
    \begin{enumerate}[(i)]
    \item If $q_1 \sat \lnot \psi$ and $\wght{\rho_1 q_1} = 0$, then
      by definition of $R_0^s$ there must exist a path $\rho_1'$ from
      $q$ to $q_1$ in $\calS'$ whose intermediate states all satisfy
      $\lnot \psi$. Consequently, any infinite continuation of $\rho'$
      must satisfy $\lnot \psi \Uw (\lnot \vfi \et \lnot \psi)$, which
      implies that $q \sat_{\calS'} \Psi$.
    \item \label{ee:pm} If $\wght{\rho_1
        q_1} \neq 0$, then we can write $\rho_1 q_1 = \rho_2 q_2
      \rho_3 q_1$ where $\rho_2 q_2$ is the longest prefix of $\rho_1$
      of weight 0. By definition, $q_2 \rho_3 q_1$ starts with a non-0
      transition and has no prefix of weight 0, hence by definition of
      $\calS'$ there must exist a finite path $\rho' = \rho'_2 q_2
      \rho'_3 q_1^\pm$ in $\calS'$ such that all intermediate states
      of $\rho'_2 q_2$ satisfy $\lnot \psi$ and all intermediate
      states of $\rho'_3 q_1^\pm$ satisfy ${\sf ok}$. Hence any
      continuation of $\rho'$ must satisfy $(\lnot \psi \lor {\sf ok})
      \Uw (\lnot \vfi \et {\sf ok})$, which implies that $q
      \sat_{\calS'} \Psi$.
    \end{enumerate}
  \item There exists a path $\rho$ in $\calS$ along which no state
    satisfying $\psi$ ever appears at the end of a prefix of weight
    0. We assume that $\vfi$ consistently holds along the path,
    otherwise it comes down to the previous case. There are again two
    cases to consider:
    \begin{enumerate}[(i)]
    \item If $\rho$ has infinitely many prefixes of weight $0$, then
      by definition of $R_0^s$ there must exist an infinite path
      $\rho'$ in $\calS'$ whose intermediate states never leave the
      set $Q$ and all satisfy $\lnot \psi$. Therefore $\rho'$
      satisfies $\G \lnot \psi$, which implies that $q \sat_{\calS'}
      \Psi$.
    \item If $\rho$ has finitely many prefixes of weight $0$, then
      using ideas similar to the above, one can decompose it as
      $\rho_1 \rho_2$, with $\rho_1$ its longest finite prefix of
      weight 0, and $\rho_2$ an infinite path with no prefix of weight
      0. This implies the existence of a corresponding path $\rho'$ in
      $\calS'$ with a finite prefix remaining in $Q$ whose states all
      satisfy $\lnot \psi$ and an infinite suffix remaining in $Q^+$
      or $Q^-$ whose states all satisfy ${\sf ok}$. Therefore $\rho'$
      must satisfy $\G \lnot \psi \lor {\sf ok}$, which implies that
      $q \sat_{\calS'} \Psi$. \qed
    \end{enumerate}
  \end{enumerate}

  \begin{lem}
    $q \sat_{\calS'} \Psi \implies q \sat_\calS \lnot \Phi$.
  \end{lem}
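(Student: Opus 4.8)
The plan is to establish the contrapositive in a slightly different guise: show that any infinite path $\rho'$ of $\calS'$ witnessing $\Psi$ from $q$ can be ``projected'' to a path $\rho$ of $\calS$ from $q$ that witnesses $\lnot\Phi$, i.e. along which $\vfi$ fails before any prefix of weight $0$ ending in a $\psi$-state. First I would recall the structural analysis of runs of $\calS'$ from the proposition's text: every infinite run of $\calS'$ either stays forever in $Q$, or consists of a (possibly empty) prefix in $Q$ followed by an infinite suffix confined to $Q^+$ (or to $Q^-$), or eventually falls into the dummy state $q_\bot$. Since $\ell'(q_\bot)=\{\psi\}$ and $q_\bot$ has no ${\sf ok}$ label, any run reaching $q_\bot$ has, from that point on, $\psi\wedge\lnot{\sf ok}$ holding forever, so such a run cannot satisfy the release-style formula $\Psi = \Ex(\lnot\psi\lor{\sf ok})\Uw(\lnot\vfi\et(\lnot\psi\lor{\sf ok}))$ at or beyond $q_\bot$ — hence a witnessing path must fulfil the ``until'' part (or $\G(\lnot\psi\lor{\sf ok})$) strictly before entering the dummy state, and I only need to treat the genuine-path portion.

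Next I would case-split on where the witnessing prefix of $\rho'$ lies. If $\rho'$ satisfies $\G(\lnot\psi\lor{\sf ok})$: either it stays in $Q$ forever, in which case (by definition of $R_0^s$) the edges of $\rho'$ correspond to $0$-weight segments in $\calS$, so one reconstructs an infinite path $\rho$ of $\calS$ every prefix of weight $0$ of which ends in a $\lnot\psi$-state, disproving $\Phi$; or it eventually enters $Q^+$ (the $Q^-$ case being symmetric), in which case the $Q$-part reconstructs as before and, past the first ${\sf ok}$-state, every segment has strictly positive weight, so no later prefix has weight $0$ at all — again $\psi$ never holds at the end of a $0$-weight prefix. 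If instead $\rho'$ satisfies the proper until, let $\rho'_1 q'_1$ be the prefix up to the first state satisfying $\lnot\vfi\et(\lnot\psi\lor{\sf ok})$; by the same correspondence this lifts to a finite path $\rho_1 q_1$ in $\calS$ with $q_1\sat\lnot\vfi$, where either $q_1\sat\lnot\psi$ or $q_1$ is reached with nonzero weight (the latter when $q'_1\in Q^\pm$, using that any ${\sf ok}$-state has nonzero weight and that the $R_0^s$, $\Rp{0}$, $\Rm{0}$ relations preserve the weight-sign invariants). Extending $\rho_1 q_1$ arbitrarily to an infinite path of $\calS$ (totality of $R$) yields a run on which $\vfi$ fails before any $0$-weight prefix terminating in $\psi$, i.e. $q\sat_\calS\lnot\Phi$.

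The main obstacle I anticipate is the bookkeeping that makes the ``lifting'' faithful: one must check that the three auxiliary relations really capture exactly the intended weight-restricted reachability — in particular that a $\calS'$-edge inside $Q$ (coming from $R_0^s$) can always be realized in $\calS$ by a concrete $0$-weight segment none of whose intermediate states has weight $0$ from the current point and none of which satisfies $\psi$, and that once we pass into $Q^+$ the reconstructed $\calS$-segments keep a strictly positive running weight so that $0$ is never revisited. This amounts to an induction on the length of the $\calS'$-prefix, unfolding the fixed-point definitions of $R_0^s$, $\Rp{0}$ and $\Rm{0}$; the argument is routine once the invariants (``current position in $Q$'' $\Leftrightarrow$ ``weight $0$ in $\calS$ with no $0$-weight detour pending'', and the analogous statements for $Q^\pm$) are stated precisely, so I would front-load those invariants and then carry them through each case mechanically. $\qed$
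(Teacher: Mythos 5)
Your proposal is correct and follows essentially the same route as the paper's own argument: the same case split on whether $\Psi$ is witnessed by the $\G(\lnot\psi\lor{\sf ok})$ branch of the weak until or by a genuine until-position, the same sub-split on whether the relevant portion of the run stays in $Q$ or passes into $Q^+\cup Q^-$, and the same lifting of $\calS'$-edges back to weight-controlled segments of $\calS$ via the invariants ``position in $Q$ iff current weight $0$'' and ``once in $Q^\pm$ the weight never returns to $0$''. Your explicit dismissal of runs through $q_\bot$ and your appeal to totality of $R$ to extend the finite witness are details the paper leaves implicit, but they do not change the structure of the proof.
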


  \proof The proof is very similar to that of the previous lemma. Let
  us consider a path $\rho'$ in $\calS'$ satisfying $\Psi$. There are
  two main possibilities:
  \begin{enumerate}[(a)]
  \item The path $\rho'$ consistently satisfies $\lnot \psi \ou {\sf
      ok}$. We distinguish two cases.
    \begin{enumerate}[(i)]
    \item If $\rho'$ never leaves the set $Q$ (and thus consists only
      of edges representing the relation $R_0^s$), then there must
      exist a corresponding path $\rho$ in $\calS$ visiting at least
      the same states in the same order (since $R_0^s$ is a
      restriction of the reachability relation of $\calS$). Moreover,
      \emph{all} states reached with weight $0$ in $\rho$ must appear
      in $\rho'$ (by definition of $R_0^s$). Now whether or not the
      states in $\rho$ satisfy $\phi$, $\Phi$ cannot be satisfied in
      $\calS$ from $q$ since no state reached with weight $0$
      satisfies $\psi$ along $\rho$.
    \item If $\rho'$ eventually leaves the set $Q$, and since there
      are no transitions out of $Q^+$ and $Q^-$ except to the dummy
      state $q_\bot$ (which satisfies $\psi$ but not ${\sf ok}$), then
      necessarily $\rho'$ can be decomposed into $\rho'_1 q_1 \rho'_2$
      where $\rho'_1 q_1$ is a finite path in $Q$ necessarily
      satisfying $\G \lnot \psi$ and $\rho_2$ is an infinite path
      either in $Q^+$ or $Q^-$ necessarily satisfying $\G {\sf
        ok}$. As previously, this implies the existence of a
      corresponding path $\rho$ in $\calS$, where the part
      corresponding to $\rho'_1$ never visits a state satisfying
      $\psi$ with weight 0, and the part corresponding to $\rho_2$
      never reaches weight 0 again. Thus $\Phi$ cannot be satisfied
      from $q$ in $\calS$.
    \end{enumerate}
  \item The other possibility is that $\rho'$ can be written $\rho'_1
    q_1 \rho'_2$, where $\rho'_1$ satisfies $\G \lnot \psi \ou {\sf
      ok}$ and $q_1$ satisfies $\lnot \vfi \land (\lnot \psi \lor {\sf
      ok})$. Again there are two possible cases.
    \begin{enumerate}[(i)]
    \item If $q_1 \in Q$, then $\rho'_1$ only visits states satisfying
      $\lnot \psi$, and $q_1 \sat \lnot \vfi \land \lnot \psi$. As
      previously there must exist a corresponding path $\rho_1$ in
      $\calS$ visiting at least the same states in the same order. Now
      since by definition of $R_0^s$ all 0-weight prefixes of $\rho_1$
      end in states appearing in $\rho'$ and satisfying $\lnot \psi$,
      and since $q_1$ satisfies $\lnot \vfi \land \lnot \psi$, no
      continuation of $\rho_1 q_1$ in $\calS$ can satisfy $\vfi \U_{=
        0} \psi$.
    \item If $q_1 \not\in Q$, then necessarily $q_1 \in Q^+ \cup Q^-$
      (since $q_\bot \not \sat \lnot \psi \lor {\sf ok}$) and $q_1
      \sat \lnot \vfi \land {\sf ok}$. Consequently one can write
      $\rho'_1 = \rho'_2 q_2 \rho'_3$ such that $q_2 \in Q$, $\rho'_2$
      never leaves $Q$ and $\rho'_3$ never leaves either $Q^+$ or
      $Q^-$. Moreover, all states in $\rho'_2 q_2$ satisfy $\lnot
      \psi$. One can thus build in $\calS$ a finite path $\rho$ from
      $q$ to $q_1$ going through $q_2$, in which no state reached with
      weight 0 up to $q_2$ (and thus also up to $q_1$) satisfies
      $\psi$, and all states occurring after $q_2$ (in particular
      $q_1$) are reached with non-0 weight. Hence since $q_1 \sat
      \lnot \vfi$, this implies that no continuation of $\rho$ can
      satisfy $\vfi \U_{= 0} \psi$. \qed
    \end{enumerate}
  \end{enumerate}
  \newcounter{saveenumi}
  \setcounter{saveenumi}{\theenumi}
\end{enumerate}

\begin{enumerate}[(1)]
  \setcounter{enumi}{\value{saveenumi}}

\item $\Phi = \All \vfi \U_{= k} \psi$: This case is similar to the
  previous one with slight modifications of the construction. We first
  assume $k$ to be positive, otherwise we can replace $S$ by an
  identical structure in which all weights are inverted and solve the
  formula with parameter $-k$. We then inductively compute
  \[
  \Rm{k} = \Rm{\lfloor k/2 \rfloor} \cdot \Rm{\lceil k/2 \rceil}
  \text{\quad with \quad} \Rm{1} = \Rm{0} \cdot \lfl{1}
  \]
  and $\Rm{0}$ defined as previously. $\Rm{k}$ is the reachability
  relation in $G_{\calS}$ by paths of weight $k$ whose prefixes all
  have weight strictly less than $k$.  We also compute, using for
  instance a modified Floyd-Warshall algorithm in which all integers
  greater than or equal to $k$ are assimilated to $\infty$, the
  reachability relation $\Rm{<k} = \{ (q, q') \mid \exists \sigma = q
  \sigma' q', \forall \rho\leq\sigma,\ \wght{\rho} < k \}$.

  We now construct a Kripke structure $\calS'$ as in the previous
  case, except that $Q' = Q \union Q^{init} \union Q^+ \union Q^- \cup
  \{q_\bot\}$ (where $Q^{\mathit{init}}$ is yet another copy of $Q$
  and $q^{\mathit{init}}$ denotes the copy of $q$ in
  $Q^{\mathit{init}}$) and $R'$ also contains $\{q_1^{init}
  \rightarrow q_2 \mid q_1 \Rm{k} q_2\} \union \{q_1^{init}
  \rightarrow q_2^- \mid q_1 \Rm{<k} q_2\}$. We additionally label
  states in $Q^{\mathit{init}}$ with the atomic proposition ${\sf
    ok}$. With this new Kripke structure, we can show that $q
  \models_{\calS} \lnot \Phi$ if and only if $q^{init}
  \models_{\calS'} \Psi$.
  
\medskip
\item $\Phi = \All \vfi \U_{\sim k} \psi$ with $\sim \in \{\leq, <, >,
  \geq\}$: Let us first treat the case where $\sim$ is $\leq$. We
  assume $k$ to be greater than or equal to 0, otherwise we invert all
  weights in $\calS$ and solve the problem using the procedure for
  $\Phi = \All \vfi \U_{\geq -k} \psi$. We essentially use the same
  procedure as for the previous case ($= k$), with a few
  modifications:
  \begin{enumerate}[(a)]
  \item Relations $\Rm{x}$ have to be computed over the restricted set
    of states $Q' = \{q \in Q \mid q \sat \lnot \psi\}$, because we
    have to make sure that no ``hidden'' intermediate state reached
    after a path of weight less than $k$ satisfies $\psi$;
  \item States in $Q^-$ should no longer be labelled by atomic
    proposition ${\sf ok}$, because paths which ultimately remain in
    $Q^-$ may correspond to paths in $\calS$ satisfying $\Phi$, and
    thus should not satisfy $\Psi$ unlike previously;
  \item Similarly, we remove the label ${\sf ok}$ from states in
    $Q^{\mathit{init}}$, that is $\forall q \in Q,\
    \ell'(q^{init}) = \ell(q)$.
  \end{enumerate}
    
  \noindent
  In the case where $\sim$ is $\geq$, we simply need to re-label
  states in $Q^{\mathit{init}}$ and $Q^-$ with ${\sf ok}$, and remove
  ${\sf ok}$ from the labelling of $Q^+$. Cases where $\sim$ is $<$
  and $>$ are dealt with by adding the ${\sf ok}$ label on states in
  $Q$ in the constructions for $\leq$ and $\geq$.
\end{enumerate}

\noindent
This concludes the proof that deciding the satisfaction of a $\TCTL$
formula from a given state of a $\DKSozo$ is in \P. \qed

\begin{thm}
  The model-checking problem for $\CCTLpma$ is \P-complete.
  \label{thm:mc-CS2}
\end{thm}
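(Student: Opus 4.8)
\emph{Proof plan.} \P-hardness is inherited from \CTL, which is a syntactic fragment of \CCTLpma and whose model-checking problem is already \P-hard~\cite{phs-aiml02}. For the \P upper bound, the plan is to reduce model-checking of a \CCTLpma formula $\Phi$ over a KS $\calS = \tuple{Q,R,\ell}$, one subformula at a time, to model-checking \TCTL over $\DKSozo$, which is in \P by Prop.~\ref{prop:tctl-ozo}.

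Proceeding by structural induction over $\Phi$, I would assume the set of states satisfying each strict subformula of $\Phi$ --- including those occurring inside constraints --- already known. The Boolean and atomic cases are immediate, so the work lies in a formula $\Ex \vfi \Us{C} \psi$ (the $\All$-case is dual) with $C = (\sum_{i=1}^{m} \pm\nb\vfi_i) \sim k$ and $k \in \mathbb{Z}$. For a state $p$, let $\delta(p) := \sum_{i \,:\, p \sat \vfi_i} (\pm 1)$ be the net contribution of $p$ to the sum in $C$; since all coefficients are $\pm 1$, one has $|\delta(p)| \leq m \leq |\Phi|$, which is the crucial point. From $\calS$ I would then build a $\DKSozo$ $\calS'$: keep only the states satisfying $\vfi$ or $\psi$, mark the former --- together with all the fresh chain states introduced next --- with a new proposition $a$, and the latter with a new proposition $b$, and replace each surviving transition $p \to q$ by a fresh chain of $\max(|\delta(p)|,1)$ edges of weight $\mathrm{sign}(\delta(p))$ (a single $0$-weight edge when $\delta(p) = 0$). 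A finite path of $\calS'$ from $q$ to a $b$-state through $a$-states then corresponds exactly to a run prefix $\rho(0)\cdots\rho(i)$ of $\calS$ with $\rho(i) \sat \psi$ and $\rho(j) \sat \vfi$ for all $j < i$, and its weight equals $\sum_{j=0}^{i-1}\delta(\rho(j))$, which is precisely the value of the sum in $C$ evaluated on the strict prefix $\rho_{|i-1}$. Hence $q \sat_\calS \Ex \vfi \Us{C} \psi$ iff $q \sat_{\calS'} \Ex a \U_{\sim k} b$, once the degenerate case $i = 0$ (that is, $q \sat \psi$ together with $0 \sim k$ holding) is treated separately. Since $\calS'$ has $O(|\calS| \cdot |\Phi|)$ states and is computable in polynomial time, Prop.~\ref{prop:tctl-ozo} settles this subformula in polynomial time; iterating over the polynomially many subformulas of $\Phi$ yields the full polynomial-time algorithm.

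In truth, most of the algorithmic difficulty has already been absorbed into Prop.~\ref{prop:tctl-ozo}, so I do not expect a deep obstacle here: the theorem is essentially that proposition combined with the above encoding. The points needing care are (i) that $\delta(p)$ must be attached to the edges \emph{leaving} $p$, not those entering it, so as to match the convention that a constraint is evaluated on the \emph{strict} prefix up to but not including the witnessing $\psi$-state; (ii) the boundary case in which $\psi$ already holds at $q$; and (iii) the restriction to $\pm 1$ coefficients, which is exactly what keeps the values $\delta(p)$ small enough for the unit-weight chains to incur only a polynomial blow-up (larger coefficients would force exponentially long chains).
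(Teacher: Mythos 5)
Your overall strategy is the same as the paper's: reduce each subformula $\Ex \vfi \Us{C} \psi$ or $\All \vfi \Us{C} \psi$ to a \TCTL\ until over a $\DKSozo$ obtained by expanding each state into a chain of unit-weight edges carrying its net contribution $\delta(p)$ to the constraint, then invoke Prop.~\ref{prop:tctl-ozo}. Your points (i)--(iii) are exactly the right things to attend to, and your treatment of the existential case is correct.

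The one step that fails is the universal case, which you dismiss as ``dual''. Your construction deletes from $\calS$ every state satisfying neither $\vfi$ nor $\psi$. For $\Ex$ this is harmless, but for $\All$ it changes the set of runs being quantified over: if $q \sat \vfi$ has some successor $q'$ satisfying neither $\vfi$ nor $\psi$, then $q \not\sat \All \vfi \Us{C} \psi$ (the run through $q'$ violates the until), yet $q'$ is absent from $\calS'$, so $\All\, a \U_{\sim k} b$ may well hold at $q$ in $\calS'$. (The restriction can also leave deadlocked states, so $\calS'$ need not even be a legal DKS with a total transition relation.) The paper avoids this by keeping \emph{all} states of $\calS$, tagging the original ones with a fresh proposition ${\sf ok}$ (chain states are unlabelled), and translating the until as $({\sf ok} \impl \vfi) \U_{\sim k} ({\sf ok} \et \psi)$, which works uniformly for both path quantifiers. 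Adopting that relativization instead of restricting the state space repairs your argument with no other changes.
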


\proof
  As usual, \P-hardness is inherited from \CTL. Membership in \P\ is
  done by reduction to \TCTL model-checking over $\DKSozo$.

  We provide polynomial-time procedures to deal with the sub-formulas
  $\Ex \vfi \Us{C} \psi$ and $\All \vfi \Us{C} \psi$ with $C =
  \sum_{i=1}^\ell \alpha_i\nb\vfi_i \sim k$ where $\alpha_i \in
  \{-1,1\}$ and $k \in \mathbb{Z}$.
  Consider a Kripke structure $\calS = (Q,R,\ell)$, and
  inductively assume that the truth values of $\vfi$, $\psi$ and
  $\vfi_i$ over each state of $\calS$ are known: these sub-formulas
  will be seen as atomic propositions in the following.

  To each state $q$ occurring along a path, we associate a cost
  $\csize{q}{C} = \sum \{ \alpha_i \mid q \models \vfi_i \}$, and
  note that the \emph{value} of $\csize{q}{C}$ is in $O(|C|)$. This
  cost is additively extended to paths in the usual way. Deciding the
  truth value of the path formula $\vfi \Us{C} \psi$ 
  then amounts to checking whether
  there exists a finite prefix $\rho' q$ of $\rho$ such that
  $\csize{\rho'}{C} \sim k$, $q \models \psi$ and $\forall i \leq
  |\rho'|, \rho'(i) \models \vfi$.

  Given the type of our counting constraints, each state contributes
  to the cost of a path by a certain positive or negative number whose
  absolute value is bounded by $d = \max (\sum \{ \alpha_i \mid
  \alpha_i = 1 \}, \sum \{ |\alpha_i| \mid \alpha_i = -1 \})$. The
  idea is to build a durational Kripke structure with weights in
  $\{-1,0,1\}$, by adding (at most $d+1$) copies of each state in the
  original Kripke structure.

  Formally, we build from $\calS$ a $\DKSozo$ $\calS' = (Q',R',\ell')$
  as follows: for each state $q \in Q$ with $|\csize{q}{C}| = n$, $Q'$
  contains $n+1$ additional states $q_0, \ldots, q_n$. $R'$ is then
  defined as $\{ q \lfl{0} q_0 \mid q \in Q \} \cup \{ q_n \lfl{0} q'
  \mid (q, q') \in R, n = |\csize{q}{C}| \} \cup \{ q_i \lfl{\delta_q}
  q_{i+1} \mid q \in Q, i < |\csize{q}{C}| \}$ with $\delta_q = 1$ if
  $\csize{q}{C} >0$ and $\delta_q = -1$ otherwise. Finally, we set
  $\ell'(q_i) = \emptyset$ for all $q_i \in Q' \setminus Q$ and
  $\ell'(q) = \ell(q) \cup \{{\sf ok}\}$ for all $q \in Q' \cap Q$,
  where ${\sf ok}$ is a new atomic predicate.

  To each path $\rho = q \sigma$ in $\calS$, we associate the path
  $\tilde{\rho} = q q_0 \ldots q_n \tilde{\sigma}$ in $\calS'$. It can
  now be shown that $\rho$ satisfies $\vfi \Us{C} \psi$ if and only if
  $\tilde{\rho}$ satisfies the \TCTL path formula $({\sf ok} \impl
  \vfi) \Us{\sim k} ({\sf ok} \wedge \psi)$, and consequently that
  some state $q$ satisfies $\All \vfi \Us{C} \psi$ (resp. $\Ex \vfi
  \Us{C} \psi$) in $\calS$ if and only if it satisfies $\All ({\sf ok}
  \impl \vfi) \Us{\sim k} ({\sf ok} \wedge \psi)$ (resp. $\Ex ({\sf
    ok} \impl \vfi) \Us{\sim k} ({\sf ok} \wedge \psi)$) in $\calS'$.

  Suppose $\rho \models_{\calS} \vfi \Us{C} \psi$. We reason by
  induction on the least integer $i$ such that $\rho_{|i-1}
  \models_{\calS} C$, $\rho(j) \models_{\calS} \vfi$ for all $j < i$
  and $\rho(i) \models_{\calS} \psi$. If $i = 1$, then $\rho(1)
  \models_{\calS} \psi$ and thus $\tilde{\rho}(1) \models_{\calS'}
  \psi$ (recall that $\psi$ is seen as atomic). Otherwise, $\rho = q
  \rho'$ with $q \models_{\calS} \vfi$ and $\rho' \models_{\calS} \vfi
  \Us{C'} \psi$ with $C' = \sum_{i=1}^\ell \nb\vfi_i \sim k -
  \csize{q}{C}$, in other words $\rho'_{|i-2} \models_{\calS} C'$ and
  $\rho'(i-1) \models_{\calS} \psi$. By induction hypothesis, we have
  $\tilde{\rho}' \models_{\calS'} ({\sf ok} \thn \vfi) \Us{\sim k -
    \csize{q}{C}} ({\sf ok} \wedge \psi)$. Hence $\tilde{\rho} = q q_0
  \ldots q_{\csize{q}{C}} \tilde{\rho}' \models_{\calS'} ({\sf ok}
  \thn \vfi) \Us{\sim k} ({\sf ok} \wedge \psi)$.

  Conversely, consider a path $\rho$ in $\calS'$ starting with some
  state $q \in Q$ such that $\rho \models_{\calS'} ({\sf ok} \thn
  \vfi) \Us{\sim k} ({\sf ok} \wedge \psi)$, and as previously let $i$
  be the least integer such that $|\rho_{|i-1}| \sim k$, $\rho(j)
  \models_{\calS'} ({\sf ok} \thn \vfi)$ for all $j < i$ and $\rho(i)
  \models_{\calS'} ({\sf ok} \land \psi)$. By construction of
  $\calS'$, there must exist a unique path $\sigma$ in $\calS$ such
  that $\tilde{\sigma} = \rho$. We show by induction on $i$ that
  $\sigma \models_{\calS} \vfi \Us{C} \psi$. If $i = 1$, then $\rho(1)
  \models_{\calS'} {\sf ok} \wedge \psi$, in which case $\sigma(1)
  \models_{\calS} \psi$ holds in $\calS$. Otherwise by construction of
  $\calS'$ there must exist $q' \in Q$ such that $\rho = q q_0 q_1
  \ldots q_n q' \rho'$ and $q' \rho' \models_{\calS'} ({\sf ok} \thn
  \vfi) \Us{\sim k - \csize{q}{C}} ({\sf ok} \wedge \psi)$. Let
  $\sigma = q q' \sigma'$, by induction hypothesis we have $q' \sigma'
  \models_{\calS} \vfi \Us{C'} \psi$ with $C' = \sum_{i=1}^\ell
  \nb\vfi_i \sim k - \csize{q}{C}$. Hence $\sigma \models_{\calS} \vfi
  \Us{C} \psi$.  \qed

This result implies the following corollary on the complexity of
model-checking for all fragments of intermediate expressiveness:

\begin{cor}
  The model-checking problem for \CCTLa is \P-complete.
\end{cor}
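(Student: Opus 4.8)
The plan is to derive this as an almost immediate consequence of Theorem~\ref{thm:mc-CS2}, the only real work being to check that $\CCTLa$ nests syntactically inside $\CCTLpma$. An arbitrary $\CCTLa$ constraint has the shape $(\sum_i \nb\vfi_i) \sim k$ with $k \in \Nat$, which is precisely a $\CCTLpma$ constraint $(\sum_i \alpha_i \cdot \nb\vfi_i) \sim k$ in which every coefficient $\alpha_i$ equals $1 \in \{-1,1\}$ and the right-hand side happens to be non-negative. Hence every $\CCTLa$ formula \emph{is} a $\CCTLpma$ formula of the same DAG-size, so $\CCTLa$ model-checking is a special case of $\CCTLpma$ model-checking and therefore lies in $\P$ by Theorem~\ref{thm:mc-CS2}.

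For the matching lower bound, I would observe that $\P$-hardness is inherited from $\CTL$, exactly as for $\CCTLpma$. Every $\CTL$ formula can be rewritten in $\CCTLa$ in linear time: leave Boolean cases unchanged and replace each unconstrained until $\Ex \vfi \U \psi$ (resp.\ $\All \vfi \U \psi$) by $\Ex \vfi \Us{\nb\top \geq 0} \psi$ (resp.\ $\All \vfi \Us{\nb\top \geq 0} \psi$), the constraint $\nb\top \geq 0$ being satisfied by every finite run prefix. Since this embedding increases the (DAG-)size by at most a constant factor and $\CTL$ model-checking is $\P$-hard~\cite{phs-aiml02}, so is $\CCTLa$ model-checking.

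There is no genuine obstacle to overcome: the only points needing (minimal) attention are to confirm that the syntactic restrictions defining $\CCTLa$ and $\CCTLpma$ really do nest — in particular that $\CCTLa$ adds none of the features (integer coefficients, Boolean combinations, subtraction) whose absence makes Theorem~\ref{thm:mc-CS2} applicable — and that the $\CTL$-to-$\CCTLa$ embedding is size-preserving up to a constant, so that both bounds transfer verbatim.
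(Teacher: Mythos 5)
Your proposal is correct and follows exactly the paper's (implicit) argument: the paper states the corollary as a direct consequence of Theorem~\ref{thm:mc-CS2}, with the upper bound coming from the syntactic inclusion $\CCTLa \subseteq \CCTLpma$ and \P-hardness inherited from \CTL. Your explicit embedding of \CTL\ into $\CCTLa$ via the trivially true constraint $\nb\top \geq 0$ is a valid way to make the hardness transfer precise (the paper's Remark~\ref{rem:strict-prefix} offers the equivalent $\nb\lnot\vfi = 0$ encoding).
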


Note that this weaker fragment allows considerable
simplification of the proof presented above for $\CCTLpma$. 
Moreover, model-checking $\CCTLa$ can be done using the TCTL
model-checking algorithm provided in \cite{lst-TCS2001} instead of the
more involved construction used for Prop. \ref{prop:tctl-ozo}.

\subsection{Model-checking \texorpdfstring{\CCTLba, \CCTL and
    \CCTLb}{CCTL and boolean CCTL}}

We now establish the complexity of model-checking for the fragments
$\CCTLba$, $\CCTL$ and $\CCTLb$ and show that these problems are all
\DDP-complete.  Let us first recall the definition of the complexity
class \DDP, one of the classes of the polynomial hierarchy.

\begin{defi}
  $\DDP = \P^\NP$ is the class of problems solvable in polynomial time
  with access to an oracle for some \NP-complete problem.
\end{defi}

\noindent
We now prove \DDP-hardness of the model-checking problem
for $\CCTLba$.

\begin{thm}
  \label{thm:mc-BCS0}
  The model-checking problem for $\CCTLba$ is \DDP-hard.
\end{thm}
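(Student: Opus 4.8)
The plan is to prove \DDP-hardness by reducing from a canonical \DDP-complete problem, namely a problem of the form "given two instances, decide whether the first is a yes-instance of an \NP-complete problem \emph{and} the second is a no-instance," or more conveniently a single problem such as deciding, given a Boolean formula in a suitable normal form, the truth of a statement that combines an existential and a universal SAT query (e.g. the \textsc{Sat-Unsat} problem, or "the lexicographically maximal satisfying assignment of $\phi$ has its last variable set to $1$"). The key point is that \CCTLba constraints permit Boolean combinations of atomic counting constraints over sums $\sum_i \nb P_i \sim k$ with binary-encoded constants, which is exactly the expressive power needed to encode, inside a single temporal modality, both the search for a satisfying assignment (an \NP computation on a guessed path) and a negated \NP check, giving a \P-machine one oracle call of each polarity.

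First I would fix an \NP-complete problem --- \textsc{3Sat} is most convenient --- and recall that \DDP (equivalently $\P^{\NP}$, and in fact already $\P^{\NP[\log]}$ or even \textsc{Sat-Unsat} is \DDP-complete under the usual completeness notions, but \textsc{Sat-Unsat} suffices and is cleanest for a single reduction) has the complete problem: given $(\phi_1,\phi_2)$, is $\phi_1 \in \textsc{3Sat}$ and $\phi_2 \notin \textsc{3Sat}$? Then I would build, from $(\phi_1,\phi_2)$ over variables $x_1,\dots,x_n$, a fixed-size Kripke structure $\calS$ whose runs from a designated initial state first pass through a "gadget" of $n$ consecutive binary choices, each choice setting $x_j$ to true or false and being recorded by an atomic proposition $t_j$ (true chosen) visited or not along the path. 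After these $n$ diamond gadgets the run reaches a sink. The crucial encoding is: an assignment $\nu$ satisfies a clause $C = (\ell_a \lor \ell_b \lor \ell_c)$ iff a certain linear inequality over the counts $\nb t_j$ holds --- e.g. $C$ is satisfied iff $\sum_{x_j \text{ pos.\ in } C} \nb t_j + \sum_{x_j \text{ neg.\ in } C} (1 - \nb t_j) \ge 1$, which after moving constants to the right is an atomic \CCTLba constraint $\sum_j (\pm)\nb t_j \sim k_C$; note coefficients stay in $\{1\}$ up to sign, but since we are proving hardness of \CCTLba (which has no $\pm$) I would instead realize the negative literals by adding, in the gadget for $x_j$, a proposition $f_j$ visited exactly when $t_j$ is not, so that every clause condition becomes a genuine $\sum \nb(\cdot) \ge 1$ constraint with nonnegative coefficient $1$. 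Conjoining over all clauses of $\phi_1$ gives a \CCTLba constraint $C_{\phi_1}$ that a path satisfies iff the encoded assignment satisfies $\phi_1$; similarly $C_{\phi_2}$ for $\phi_2$.

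Next I would assemble the top-level formula. "$\phi_1$ is satisfiable" becomes $\Ex\big(\top \U_{C_{\phi_1}} \mathtt{sink}\big)$ evaluated at the initial state of $\calS$ (some path through the gadget both encodes a satisfying assignment and reaches the sink), and "$\phi_2$ is unsatisfiable" becomes $\lnot \Ex\big(\top \U_{C_{\phi_2}} \mathtt{sink}\big)$; the reduction outputs the state $q_0$, the structure $\calS$, and the \CCTLba formula $\Phi = \Ex(\top \U_{C_{\phi_1}} \mathtt{sink}) \land \lnot\,\Ex(\top \U_{C_{\phi_2}} \mathtt{sink})$. I would then verify that $\Phi$ holds at $q_0$ in $\calS$ iff $(\phi_1,\phi_2)$ is a yes-instance of \textsc{Sat-Unsat}, and that $\calS$, $\Phi$ are computable in polynomial time from $(\phi_1,\phi_2)$ --- the constants $k_C$ are bounded by $3n$ so their binary size is $O(\log n)$, and the structure has $O(n + |\phi_1| + |\phi_2|)$ states. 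This establishes \DDP-hardness.

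The main obstacle I anticipate is getting the counting-gadget bookkeeping exactly right so that the truth value of each clause is captured by a linear constraint over path-counts, in particular handling negative literals without using integer coefficients (forbidden in \CCTLba) --- the fix of duplicating each variable into complementary marker propositions $t_j,f_j$ inside a small diamond is the delicate part, and one must also ensure the gadget cannot "cheat" by visiting both $t_j$ and $f_j$ or neither (arranged by making each variable gadget a simple two-branch diamond so exactly one of $t_j,f_j$ is seen). A secondary subtlety is the semantics detail from Remark~\ref{rem:strict-prefix}: the constraint is evaluated on the strict prefix before the final state, so I must place the sink \emph{after} all variable gadgets and make sure the last gadget's marker is counted --- easily handled by inserting one extra padding state before $\mathtt{sink}$, or by using $\EFs{C}\mathtt{sink}$ and the equivalences in Remark~\ref{rem:g-semantics}. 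Once these gadgets are pinned down, correctness of the reduction is a routine unwinding of the semantics.
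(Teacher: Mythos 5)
There is a genuine gap, and it is at the very first step: the source problem. The paper defines $\DDP$ as $\P^{\NP}$ (i.e.\ $\Delta_2^p$), and the class for which \textsc{Sat-Unsat} is complete under the standard many-one reductions is $\mathrm{D}^p$ (the class of intersections of an \NP{} language with a co\NP{} language), not $\P^{\NP}$. These classes are believed to be distinct: $\mathrm{D}^p \subseteq \P^{\NP[\log]} \subseteq \P^{\NP}$, and if \textsc{Sat-Unsat} were many-one complete for $\P^{\NP}$ then $\P^{\NP}$, which is closed under complement, would equal $\mathrm{D}^p$, collapsing the Boolean hierarchy and hence the polynomial hierarchy. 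So your parenthetical claim that ``\textsc{Sat-Unsat} is \DDP-complete under the usual completeness notions'' is incorrect, and your reduction as written establishes only $\mathrm{D}^p$-hardness, which is strictly weaker than what the theorem asserts (modulo standard assumptions). Your gadgetry itself --- the diamond per variable with complementary markers $t_j, f_j$, the clause constraints $\sum \nb(\cdot) \geq 1$, the handling of the strict-prefix semantics --- is sound for encoding a single \textsc{3Sat} query, but a flat conjunction $\Ex(\cdots) \land \lnot\Ex(\cdots)$ over one path gadget can only combine a fixed, non-adaptive pair of queries.

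What is missing is the mechanism that captures the \emph{adaptive sequence} of \NP{} oracle calls that characterizes $\P^{\NP}$. The paper reduces from SNSAT, where each $z_i$ is defined as $\exists X_i.\,\phi_i(z_1,\dots,z_{i-1},X_i)$ and thus depends on the answers to all previous queries; correspondingly, its formula $\Psi_k = \EX\big(\Ex(\bar z \impl \non\Psi_{k-1})\Us{C_k} q_F\big)$ nests $\Psi_{k-1}$ inside $\Psi_k$ so that the guessed valuation of $z_\ell$ along the path is \emph{certified} against the actual value $v_\calI(z_\ell)$ by a recursive call. You did mention ``odd max sat''--style problems as an alternative starting point, and had you pursued one of those (or SNSAT) you would have been forced to introduce exactly this kind of nesting or chaining; your variable-diamond and clause-counting gadgets would then still be usable as the inner layer of such a construction. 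As it stands, though, the proof does not establish \DDP-hardness.
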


\proof
  %
We proceed by reduction from the \DDP-complete problem SNSAT
(sequentially nested satisfiability of propositional
logic)~\cite{fossacs2001-LMS}.

  Given $p$ families of variables~$X_1, \ldots X_p$ with $X_i =
  \{x_i^1, ..., x_i^m\}$ and a set $Z = \{z_1,\ldots, z_p\}$ of $p$
  variables, an instance $\calI$ of SNSAT is defined as a collection
  of $p$ propositional formulas $\phi_1, \ldots, \phi_p$ under
  $3$-conjunctive normal form (3-CNF), where each $\phi_i$ involves
  variables in $X_i \cup\{z_1,...,z_{i-1}\}$, and the value of each $z_i$
  is defined as $z_i = \exists X_i.\ \phi_i(z_1,...,z_{i-1},X_i)$.
  The instance $\calI$ is positive iff the value of $z_p$ is $\top$.
  We denote by $v_\calI$ the unique valuation of variables in $Z$
  induced by $\calI$.
   
  From $\calI$, we define the Kripke structure described in
  Figure~\ref{fig:snsat}.  Every state $z_i$ or $x_i^j$ is labeled by
  its name, every state $\bar{z}_i$ is labeled by some
  new atomic proposition $\bar{z}$ and every state of the form $q_i$
  is labeled by $q$.  We use $X$ to denote the set
  $X_1 \cup \cdots \cup X_p$ and $\calV$ for $X \cup Z$.
  A path $\rho$ from $q_{p}$ to $q_F$ describes the valuation $v_\rho$
  such that $v_\rho(y)=\top$ if $\rho$ visits state $y$ and $\bot$
  if it visits $\bar{y}$ for every variable $y$ in $\calV$.  We
  use a $\CCTLba$ formula to ensure that $v_\rho$ coincides with
  $v_\calI$ over $Z$, that is: $v_\rho(z_i)=\top$ iff $v_\calI(z_i)=\top$ for
  any $i \in \{1, \ldots, p\}$.

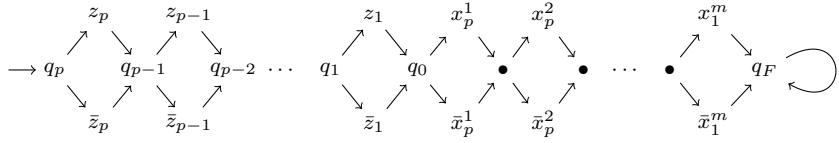
\begin{figure}[t]
  \centering
  \begin{tikzpicture}[every join/.style={->}]
    \scriptsize

     \matrix[row sep=2.5mm, column sep= 1mm] {
      & \node (zp) {$z_p$}; & & \node (zp-1) {\makebox[1em][c]{$z_{p-1}$}}; & & &
      & \node (z1) {$z_1$}; & & \node (xp1) {$x_p^1$}; &
      & \node (xp2) {$x_p^2$}; & & & & \node (x1m) {$x_1^m$}; \\

      \node[initial,initial text={}] (qp) {$q_{p}$}; & & \node (qp-1) {\makebox[1em][c]{$q_{p-1}$}}; 
      & & \node (qp-2) {\makebox[1em][c]{$q_{p-2}$}}; 
      & \node {$\cdots$}; &
      \node (q1) {$q_{1}$};  & & \node (q0) {$q_{0}$};  
      & & \node (q-1) {$\bullet$}; & & \node (q-2) {$\bullet$};
      & \node {$\cdots$}; &
      \node (q-3) {$\bullet$}; & & \node (qf) {$q_F$}; \\

      & \node (nzp) {$\bar{z}_p$}; & & \node (nzp-1) {\makebox[1em][c]{$\bar{z}_{p-1}$}}; & & &
      & \node (nz1) {$\bar{z}_1$}; & & \node (nxp1) {$\bar{x}_p^1$}; & 
      & \node (nxp2) {$\bar{x}_p^2$}; & & & & \node (nx1m) {$\bar{x}_1^m$}; \\
     };

     { [start chain] 
       \chainin (qp); \chainin (zp) [join]; \chainin (qp-1) [join];
       \chainin (zp-1) [join]; \chainin (qp-2) [join]; \chainin (q1);
       \chainin (z1) [join]; \chainin (q0) [join]; \chainin (xp1)
       [join]; \chainin (q-1) [join]; \chainin (xp2) [join]; \chainin
       (q-2) [join]; \chainin (q-3); \chainin (x1m) [join]; \chainin
       (qf) [join];
     }

     { [start chain] 
       \chainin (qp); \chainin (nzp) [join]; \chainin (qp-1) [join];
       \chainin (nzp-1) [join]; \chainin (qp-2) [join]; \chainin (q1);
       \chainin (nz1) [join]; \chainin (q0) [join]; \chainin (nxp1)
       [join]; \chainin (q-1) [join]; \chainin (nxp2) [join]; \chainin
       (q-2) [join]; \chainin (q-3); \chainin (nx1m) [join]; \chainin
       (qf) [join];
     }

     \path (qf) edge [->,in=-30,out=30,loop] (qf);

  \end{tikzpicture}
  \caption{Kripke structure associated to an SNSAT problem.}
  \label{fig:snsat}
\end{figure}

Let $\widetilde{\phi_i}$ be the formula $\phi_i$ where every
occurrence of the \emph{literal} $x$ is replaced by $\nb x \!=\!1$.
We define the $\CCTLba$ formula $\Psi_0$ as $\top$ and for every
$1\leq k \leq p$, $\Psi_k$ as $\EX \big(\Ex ({\bar{z}} \impl \non
\Psi_{k-1}) \Us{C_k} q_F\big)$, with $C_k \egdef \ET_{\ell\leq k}
\big((\nb z_\ell\!=\!1) \impl \widetilde{\phi_\ell}\big) \et \ET_{j=1}^{k}
\big( (\nb q \!=\!j) \impl \widetilde{\phi_j}\big)$.
The first part of the constraint $C_k$ aims at ensuring that
$v_\rho(z_\ell)=\top$ is witnessed by a valuation for
$\{z_1,\ldots,z_{\ell-1}\}\cup X_\ell$ satisfying $\phi_\ell$. The second part
ensures the formula $\phi_j$ is satisfied by $v_\rho$ when $\Psi_k$ is
interpreted from $z_j$ or $\bar{z}_j$ (\ie\ when the number of $q$'s
along the path leading to $q_F$ is $j$).
The formula $\Psi_j$ holds for a state $q_i$ with $i\leq j$ when
$v_{\calI}(z_i)$ is $\top$. The embedding of $\Psi_{j-1}$ inside
$\Psi_j$ is used to ensure that going through a $\bar{z}_m$ with $i
\geq m$ is always necessary w.r.t.\ $\calI$ (\ie\ there is no way to
satisfy the corresponding $\vfi_m$):

\begin{lem}
\label{lem-delta2}
For any $i=1,\ldots,p$ and  $i \leq j \leq p$, we have: 
$z_i \sat \Psi_j \equivaut v_{\calI}(z_i)=\top$ and 
$\bar{z}_i \not\sat \Psi_j \equivaut v_{\calI}(z_i)=\bot$
\end{lem}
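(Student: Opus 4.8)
The plan is to prove the lemma by induction on $i$: for each $i$ I establish that, for \emph{every} level $j$ with $i\le j\le p$, $z_i\sat\Psi_j$ iff $v_\calI(z_i)=\top$. The first, routine step is an unfolding: since $z_i$ and $\bar z_i$ both have $q_{i-1}$ as their unique successor, we get $z_i\sat\Psi_j\iff\bar z_i\sat\Psi_j\iff q_{i-1}\sat\Ex(\bar z\impl\non\Psi_{j-1})\Us{C_j}q_F$. So it suffices to analyse this last formula, and the second equivalence of the lemma then follows at once from the first.

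Next I would describe a witnessing path and spell out what $C_j$ says about it. Every path $\rho$ from $q_{i-1}$ reaches $q_F$ after running through the diamonds $z_{i-1}/\bar z_{i-1},\ldots,z_1/\bar z_1$ and then through one copy of each $x$-variable, and thus determines a valuation $v_\rho$ of $\calV$. On the prefix $\sigma$ leading to $q_F$ one has: $\nb z_\ell=1$ iff $\ell\le i-1$ and $\rho$ took the positive copy $z_\ell$; the values $\nb x\in\{0,1\}$ record $v_\rho$ on each $x$-variable; and, by the shape of the structure, $\nb q=i$. Since every $x$-variable occurs on $\sigma$ and $z_1,\ldots,z_{i-1}$ are all decided there, $\widetilde{\phi_\ell}$ holds on $\sigma$ exactly when $\phi_\ell(v_\rho)$ does, for any $\ell\le i$. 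Because $i\le j$, the constraint $C_j$ on $\sigma$ therefore amounts to: (i) $\phi_\ell(v_\rho)$ holds for every $\ell\le i-1$ with $v_\rho(z_\ell)=\top$, and (ii) $\phi_i(v_\rho)$ holds; and the left-hand formula $\bar z\impl\non\Psi_{j-1}$ of the $\Us{}$ adds (iii) $\bar z_\ell\not\sat\Psi_{j-1}$ for every $\ell\le i-1$ with $v_\rho(z_\ell)=\bot$, since among the states of $\sigma$ only the $\bar z_\ell$'s satisfy $\bar z$.

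The heart of the argument is then a sub-induction on $\ell=1,\ldots,i-1$ showing that any path $\rho$ witnessing $q_{i-1}\sat\Ex(\ldots)\Us{C_j}q_F$ satisfies $v_\rho(z_\ell)=v_\calI(z_\ell)$: if $v_\rho(z_\ell)=\bot$, then (iii) together with the main induction hypothesis at index $\ell$ and level $j-1$ (legitimate since $\ell\le i-1\le j-1$) yields $v_\calI(z_\ell)=\bot$; if $v_\rho(z_\ell)=\top$, then (i) gives $\phi_\ell(v_\rho)$, and as $v_\rho$ and $v_\calI$ already agree on $z_1,\ldots,z_{\ell-1}$ this exhibits an assignment of $X_\ell$ satisfying $\phi_\ell$, i.e.\ $v_\calI(z_\ell)=\top$. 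Hence $v_\rho$ agrees with $v_\calI$ on all of $Z$, so (ii) says $\phi_i$ is satisfied with its $z$-variables set to $v_\calI$, which witnesses $v_\calI(z_i)=\top$; this proves the ``only if'' direction. For ``if'', assuming $v_\calI(z_i)=\top$ with witness $w$ for $X_i$, I take the path from $q_{i-1}$ that follows $v_\calI$ at each diamond $z_\ell$, sets $X_i$ to $w$, and for each $\ell\le i-1$ with $v_\calI(z_\ell)=\top$ sets $X_\ell$ to a witness of $\phi_\ell$; conditions (i), (ii) hold by construction and (iii) holds by the main hypothesis at index $\ell$, level $j-1$. The base case $i=1$ is the same computation with the sub-induction empty.

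The main obstacle I foresee is careful bookkeeping rather than anything conceptual: verifying from the Kripke structure that $\nb q$ equals exactly $i$ on the prefix leading to $q_F$, so that among the conjuncts $(\nb q=t)\impl\widetilde{\phi_t}$ of $C_j$ precisely the one for $t=i$ is active (this is the only place the hypothesis $i\le j$ is used); checking that on that prefix the valuation of every $x$-variable and of $z_1,\ldots,z_{i-1}$ is already fixed, so that each $\widetilde{\phi_\ell}$ genuinely computes $\phi_\ell(v_\rho)$; and keeping the two nested inductions — the outer one on the position $i$ and the inner one on $\ell$ — correctly synchronised with the level parameter, which must always remain at least the index being referenced.
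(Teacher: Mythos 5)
Your proof is correct and follows essentially the same route as the paper's: reduce $z_i\sat\Psi_j$ (and $\bar z_i\sat\Psi_j$) to the Until formula at $q_{i-1}$, induct on $i$, and use the hypothesis at level $j-1$ on the visited $\bar z_\ell$ states. You are in fact somewhat more careful than the paper, which leaves the sub-induction on $\ell$ (needed to conclude that $v_\rho$ and $v_\calI$ agree on $z_1,\ldots,z_{i-1}$) and the explicit path construction for the ``if'' direction implicit.
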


\proof 
  First note that the truth value of $\Psi_j$ at $z_i$ and
  $\bar{z}_i$ is the same, due to the structure of paths and the fact
  that $\Psi_j$ begins with operator $\EX$. Therefore, both statements
  of the lemma are actually equivalent. Their proof is done by
  induction on $i$.
  
  \begin{iteMize}{$\bullet$}
  \item $i=1$: Any formula $\Psi_j$ with $ 1 \leq j \leq p$ holds
    from $z_1$ iff $q_0$ satisfies $\EFs{C_j} q_F$. And given the
    definition of $C_j$ and the structure of any path starting from
    $q_0$, this is equivalent to $q_0 \sat \EFs{\widetilde{\phi_1}}
    q_F$. And this last requirement is clearly equivalent to the
    existence of some valuation for $X^1$ to satisfy $\phi_1$.
    Finally note that $\bar{z}_1 \not\sat \Psi_j$ is equivalent to
    $q_0 \sat \non \EFs{\widetilde{\phi_1}} q_F$ and then
    $v_\calI(z_1)= \bot$.

  \item $i>1$: Knowing whether $z_i \sat \Psi_j$ is equivalent to
    $q_{i-1} \sat \Ex ({\bar{z}} \:\impl\: \non \Psi_{j-1})
    \Us{C'_j} \: q_F$ where $C'_j$ is the constraint $C'_j = \ET_{\ell
      \leq i} (\nb z_\ell\!=\!1 \impl \widetilde{\phi_\ell}) \: \et \:
    \widetilde{\phi_i}$.  This entails that there exists a path $\rho$
    leading to $q_F$ and defining a valuation $v_\rho$ such that:
    \begin{iteMize}{$-$}
    \item for any visited $z_\ell$ with $\ell<i$, we have $v_\rho \sat
      \phi_\ell$;
    \item for any visited $\bar{z}_\ell$ with $\ell<i$, ${\bar{z}}$ is
      true, and then $\non\Psi_{j-1}$ holds from $\bar{z}_\ell$. By
      induction hypothesis we have $v_\calI(z_\ell)=\bot$; and
    \item $v_\rho \sat \phi_i$.
    \end{iteMize}
    These three conditions define a valuation $v_\rho$ that coincides
    with $v_\calI$ for $\{z_{i-1}, \ldots, z_1\}$ and such that there
    exists a compatible valuation for satisfying $\phi_i$, thus
    $v_\calI(z_i)=\top$. Now if $\bar{z}_i \not\sat \Psi_j$, then
    $q_{i-1} \sat \non \Ex \big({\bar{z}} \:\impl\: \non \Psi_{j-1}
    \big) \Us{C'_j} \: q_F$ and then $v_\calI(z_i)\not=\top$. \qed
  \end{iteMize}

  It is now sufficient to check whether $q_0$ satisfies $\Psi_p$ or
  not, and then deduce the truth value of $v_\calI(z_p)$. 
\qed

Note that in the previous proof, one does not use sums in the
constraints to get the complexity lower bound.

\begin{thm}
  \label{thm:mc-CSA1}
  The model-checking problem for $\CCTL$ is \DDP-hard.
\end{thm}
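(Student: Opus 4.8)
The plan is to reduce again from the $\DDP$‑complete problem SNSAT, reusing the very construction of Theorem~\ref{thm:mc-BCS0} and its Kripke structure (Figure~\ref{fig:snsat}), and trading the Boolean structure of the constraints for \emph{sums}. Recall that the only feature of \CCTLba beyond \CCTL exploited in that proof is the use of Boolean connectives \emph{inside} constraints: the constraint $C_k$ attached to the until in $\Psi_k$ is a conjunction of guarded implications, each implication essentially asserting that the truth assignment $v_\rho$ read off a run $\rho$ (from the choices made at the $z_m/\bar z_m$ and $x_m^j/\bar x_m^j$ branches) satisfies a given $3$‑CNF formula $\phi_\ell$. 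So it suffices to re‑encode each such test ``$v_\rho\models\phi_\ell$'', together with the dispatch between levels, by a \emph{single} arithmetic constraint $(\sum_i\alpha_i\,\nb\vfi_i)\sim k$.

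The key device is the classical polynomial reduction from \textsc{Cnf-Sat} to \textsc{Subset-Sum} (equivalently, to one linear Diophantine equation): satisfiability of a CNF formula by an assignment to which a run commits becomes an equation $\sum_i a_i\,x_i = t$, where each $x_i\in\{0,1\}$ is realised as $\nb s_i$ for a state $s_i$ that the run may or may not visit, the numbers $a_i$ and the target $t$ being written in a fixed base so that no carry ever occurs. Concretely, I would augment the Kripke structure with the corresponding \emph{slack‑choice} gadgets (two optional states per clause of each $\phi_\ell$, placed on the common part of every relevant run), keep the branches $z_m/\bar z_m$ and $x_m^j/\bar x_m^j$ as the per‑variable choices but equip them with suitable coefficients, and then merge the several per‑level equations into one by laying them out in pairwise disjoint digit ranges of the base‑$B$ encoding. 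Since $B$ needs only polynomially many bits and there are polynomially many ranges, all coefficients $\alpha_i$ stay of polynomial size $-$ this is exactly where the binary encoding of constants is used. With $C_k$ redefined in this way (and, if convenient, with the verification of positively‑visited $z_\ell$'s moved from the constraint into the temporal part of $\Psi_k$, using the Boolean connectives available \emph{between subformulas}), I would then re‑run the argument of Lemma~\ref{lem-delta2}: by induction on $i$ one shows $z_i\sat\Psi_j \equivaut v_\calI(z_i)=\top$ for $i\le j$, and concludes that the SNSAT instance is positive iff $\Psi_p$ holds at the designated state, so that model‑checking \CCTL is $\DDP$‑hard (and, together with the matching upper bound, $\DDP$‑complete).

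The delicate step $-$ the one that departs most from the proof of Theorem~\ref{thm:mc-BCS0} $-$ is reproducing, with \emph{non‑negative} coefficients only, the role played there by the guards $(\nb q\!=\!j)\impl\cdots$ and $(\nb z_\ell\!=\!1)\impl\cdots$: on a run starting from $q_{i-1}$ the single constraint must ``activate'' exactly the digit range of level $i$ (plus those lower levels whose $z$‑state is visited positively) and leave the remaining ranges trivially satisfied, and one easily checks that naively enforcing all ranges at once, or cancelling contributions, does not work without subtraction. I expect this to be the main obstacle; it can be handled either by adding dedicated ``activator/neutraliser'' states to the Kripke structure so that each relevant run carries precisely the right set of them, or by abandoning the single uniform family $\Psi_k$ in favour of one subformula $\Theta_\ell$ per level (each with its own one‑equation constraint, dispatched through the distinct labels of the $z_m$‑states), at the cost of a formula that is of polynomial DAG‑size rather than polynomial tree‑size. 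Getting this bookkeeping right while respecting the strict‑prefix semantics of constraints (which forces care with off‑by‑one indices in the counts) is the crux; the remainder is a routine adaptation of Lemma~\ref{lem-delta2}.
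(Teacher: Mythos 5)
Your proposal is not the paper's proof, and it is not yet a proof. The paper establishes \DDP-hardness of \CCTL\ by a short reduction from a problem already known to be \DDP-complete, namely \TCTL\ model checking over durational Kripke structures \cite{LMS-tcs05}: each weighted transition $(q,d,q')$ is replaced by an intermediate state labelled $P_d$, and $\U_{\sim c}$ becomes $\Us{\sum_{d\in W} d\cdot\nb P_d \sim c}$, so the integer coefficients of \CCTL\ directly simulate the durations. Your plan instead redoes the SNSAT reduction of Theorem~\ref{thm:mc-BCS0}, trading Boolean structure inside constraints for a subset-sum--style digit encoding. That is a legitimate idea in principle, but the step you yourself flag as ``the crux'' is exactly where the argument breaks, and you leave it unresolved: you only say it ``can be handled either by'' activator/neutraliser states ``or by'' one subformula per level, without showing that either option works.

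Concretely, neither escape route is sound as stated. Splitting $C_k$ into per-level constraints $\Theta_\ell$ attached to separate temporal modalities fails because each $\Ex\_\Us{C}\_$ quantifies over its \emph{own} run, so the assignment read off the run is no longer shared between the levels; and nesting the modalities does not help either, since \CCTL's semantics restarts the counting at each nested modality, so an inner constraint cannot see the $z_\ell/\bar z_\ell$ and $x/\bar x$ choices made before the outer modality's target state. The activator/neutraliser idea runs into the opposite problem: every run unconditionally traverses one of $x_i^j$ or $\bar x_i^j$ for \emph{every} variable of \emph{every} level, so the literal states of a deactivated level still contribute to its digit range; pre-filling that range from $\bar z_\ell$ then overshoots an equality target, while relaxing to $\geq$ or $\leq$ on the single aggregated sum does not enforce the per-digit (per-clause) thresholds that the subset-sum encoding relies on. You would further need to emulate the guard $(\nb q = j)\impl\widetilde{\phi_j}$, whose activation depends on where along the chain $\Psi_j$ is evaluated, again with a single comparison and non-negative coefficients. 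Until these interferences are actually resolved (or the reduction is rerouted, e.g.\ through weighted reachability as the paper does), the hardness claim for \CCTL\ is not established.
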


\proof
  We provide a reduction from the model checking problem for \TCTL
  specifications over Durational Kripke structures. \TCTL formulas
  allow to deal with the cost (or duration) of paths (\ie\ the sum of
  the weight of every transition occurring along the path). This
  problem is \DDP-complete~\cite{LMS-tcs05}.
  Let $\calS = (Q,R_{\calS},\ell)$ be a DKS. Let $W$ be
  the set of weights occurring in $\calS$. We define the Kripke
  structure $\calS' = (Q',R_{\calS'},\ell')$ as follows:
  \begin{iteMize}{$\bullet$}
  \item $Q' \egdef Q \cup \{ (q,d,q')  \mid  \exists
    (q,d,q')\in R_\calS \}$,
  \item for any $(q,d,q') \in R_\calS$, we add $(q,(q,d,q'))$
    and $((q,d,q'),q')$ in $R_{\calS'}$; and
  \item $\ell':Q' \fleche 2^{\AP'}$ with $\AP' \egdef \AP \cup \{{\sf
      ok}\} \cup \{P_d \mid d \in W\}$, assuming ${\sf ok},P_d
    \not\in \AP$.  And we have: $\ell'(q) \egdef
    \ell(q)\cup\{\textsf{ok}\}$ for any $q\in Q$, and $\ell'(q,d,q') =
    \{P_d\}$.
  \end{iteMize}
  We also inductively define $\widetilde{\Phi}$ for any $\TCTL$ formula
  $\Phi$ as: $\widetilde{P} \egdef P$, $\widetilde{\non \psi} \egdef \non
  \widetilde{\psi}$, $\widetilde{\vfi \et \psi} \egdef
  \widetilde{\vfi} \et \widetilde{\psi}$, $\widetilde{\Ex \vfi
    \U_{\sim c} \psi} \egdef \Ex ({\sf ok} \impl \widetilde{\vfi})
  \Us{C_{\sim c}} ({\sf ok} \et \widetilde{\psi})$ and $\widetilde{\All \vfi
    \U_{\sim c} \psi} \egdef \All ({\sf ok} \impl \widetilde{\vfi})
  \Us{C_{\sim c}} ({\sf ok} \et \widetilde{\psi})$ with $C_{\sim c} \egdef
  \sum_{d\in W} d\cdot \nb P_d \sim c$.
  
  Now we can easily see that $q \sat_\calS \Phi$ with $\Phi \in \TCTL$
  is equivalent to $q \sat_{\calS'} \widetilde{\Phi}$.
   \qed

\begin{thm}
\label{thm:mc-BCSA1}
The model-checking problem for $\CCTLb$ is in \DDP.
\end{thm}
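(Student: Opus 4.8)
The plan is to give a model-checking algorithm for $\CCTLb$ that runs in polynomial time with access to an $\NP$ oracle, mirroring the usual bottom-up evaluation of branching-time formulas but delegating the constraint-sensitive part of each temporal modality to the oracle. As always, I would process the DAG of subformulas of the input formula $\Phi$ from the leaves up, so that by the time a modality $\EFs{C}\vfi$ or $\AFs{C}\vfi$ (recall that by Remark~\ref{rem:until} it suffices to treat the $\F$-modalities, since $\U$ reduces to $\F$ with a conjoined constraint at linear cost) is handled, the satisfaction sets of $\vfi$ and of every subformula $\vfi^i_j$ appearing in $C$ are already known and can be treated as fresh atomic propositions labelling the Kripke structure $\calS = \tuple{Q,R,\ell}$. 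The whole computation is then $|\Phi|$ many ``one modality'' steps, so it suffices to show that each such step is in $\P^\NP$.

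For the existential case $\EFs{C}\vfi$ I would argue that ``$q \sat \EFs{C}\vfi$'' is itself in $\NP$: a witness is a finite run prefix $\rho = q_0\cdots q_i$ from $q$ together with a successor state reaching $\vfi$. The subtlety is that $i$ need not be polynomially bounded (the constants in $C$ are written in binary), so the raw path is not a short certificate. The fix is the standard ``counter vector'' argument: the only information about $\rho$ that matters for $C$ is the vector $(\csize{\rho}{\vfi^i_j})_{i,j}$ of occurrence counts, each capped at the relevant binary constant $k_i$ (anything beyond $k_i$ is indistinguishable), so the reachable ``state $\times$ capped-count-vector'' configuration graph has only exponentially many vertices but its reachability can be certified by a path of polynomial \emph{length} in a suitably compressed form — or, more simply, one guesses for each pair $(q,\bar n)$ whether it is reachable and checks the guess is a fixed point, which is itself a polynomial-size $\NP$ certificate. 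Thus $q\sat\EFs{C}\vfi$ reduces in polynomial time to an $\NP$ query, and a single oracle call decides it. The universal modality $\AFs{C}\vfi$ is the dual: $q \not\sat \AFs{C}\vfi$ iff there is a run $\rho$ from $q$ along which, at every index where the current prefix satisfies $C$, the current state fails $\vfi$ — equivalently $q \sat \EGs{C}\lnot\vfi$ in the sense of Remark~\ref{rem:g-semantics}. Such a counterexample run can again be summarized by a polynomially-checkable certificate: a lasso in the capped-counter configuration graph (a finite stem leading to a cycle) such that the $\vfi$-obligation is violated at every $C$-satisfying prefix; note that once a count is capped, adding a cycle does not change which constraints fire, so a lasso of the exponential-state graph — hence a polynomial-size description — suffices. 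So $q\not\sat\AFs{C}\vfi$ is in $\NP$ and one oracle call settles it.

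**The main obstacle** I expect is precisely making the ``capped counter vector'' summary rigorous enough to yield genuinely \emph{short} certificates and, for the universal case, to argue that a violating \emph{infinite} run always has a violating \emph{lasso} in the configuration graph — i.e., that the pumping is sound with respect to the constraint semantics (the subtlety being that $C$ may contain several atomic constraints with different binary bounds, so one must cap each counter at its own bound and check that pumping a cycle through already-saturated coordinates is harmless while pumping through unsaturated ones is harmless for the opposite reason). Once this bookkeeping is in place, everything else is routine: $|\Phi|$ iterations of the outer loop, each making one polynomial-time-preparable call to the $\NP$ oracle, gives a $\P^\NP = \DDP$ algorithm. Combined with Theorems~\ref{thm:mc-BCS0} and~\ref{thm:mc-CSA1} (which already give $\DDP$-hardness for the weaker fragments $\CCTLba$ and $\CCTL$, hence a fortiori for $\CCTLb$), this pins the model-checking complexity of $\CCTLb$, $\CCTL$ and $\CCTLba$ at exactly $\DDP$-complete.
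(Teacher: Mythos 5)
Your overall architecture is the same as the paper's: process the sub-formula DAG bottom-up treating already-evaluated sub-formulas as atomic, reduce $\Us{C}$ to $\Fs{C}$ via Remark~\ref{rem:until}, and reduce the theorem to showing that the two path-existence problems (for $\EFs{C}\vfi$ and for $\EGs{C}\lnot\vfi$, the dual of $\AFs{C}\vfi$) are in $\NP$, so that a polynomial number of oracle calls suffices. The problem is that the one step you yourself flag as ``the main obstacle'' --- exhibiting a \emph{polynomial-size, polynomial-time-checkable} certificate for a witnessing path whose length may be exponential in $|C|$ --- is exactly the content of the theorem, and neither of your two suggestions closes it. Guessing the full reachability table of the ``state $\times$ capped-count-vector'' configuration graph is not a polynomial-size certificate: that graph has $|Q|\cdot\prod_i(k_i+2)$ vertices, which is exponential in the binary encoding of the $k_i$, so the table itself is exponentially large. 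And ``a path of polynomial length in a suitably compressed form'' is precisely the assertion that needs proof; a shortest witnessing path in the configuration graph can genuinely have exponentially many edges, so no argument based on listing its states can work.

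The paper's resolution, which is absent from your proposal, is to certify the path by its \emph{Parikh image} $F_\rho: R \to \Nat$ over the transitions of $\calS$: one first bounds the length of a witnessing prefix by $m\cdot|Q|\cdot(k+1)$ (after excising loops that contribute nothing to any counting expression), so each entry of $F_\rho$ fits in polynomially many bits; the values $\nb\vfi^i_j$, hence the truth of $C$, are computable from $F_\rho$ alone; and $F_\rho$ is certified to come from an actual path by checking connectivity of the induced sub-graph and applying the Euler circuit theorem. For the $\EGs{C}$ case your lasso intuition is in the right spirit but again founders on the size of the stem and cycle; the paper instead observes that, since all coefficients are non-negative, each of the $m$ atomic constraints changes truth value at most twice along a run, so the run decomposes into at most $3m$ segments on which $C$'s atomic constraints are constant, each segment being certified by its own polynomial-size Parikh image (with the last segment containing a cycle). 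Without these two ingredients --- or some equivalent witness-compression argument --- your proof does not establish membership in $\NP$ for the oracle queries, and hence does not establish the $\DDP$ upper bound.
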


\proof
  Let $\calS = \tuple{Q, R,\ell}$ be a Kripke structure.  For this proof, 
  by definition of \DDP, it is sufficient to provide \NP\ procedures
  to deal with sub-formulas of the form $\EFs{C} \vfi$ and $\EGs{C}
  \vfi$ (Cf. Rem.\ref{rem:until}).
  First let $\{C_1, \ldots , C_m\}$ be the set of atomic constraints
  occurring in $C$.  Each $C_i$ is of the form $\sum_{j\in [1,n_i]}
  \alpha_{j}^i \cdot \nb\vfi_j^i \sim_i k_i$. And let $k$ be
  the maximal integer constant occurring in $C$. We can now present
  the algorithms:

  \begin{iteMize}{$\bullet$}
  \item $\Phi \egdef \EFs{C} \psi$: If $q \sat \Phi$, then there
    exists a run $\rho q'$ starting from $q$ such that $q'\sat\psi$
    and $\rho \models C$. First note that we can assume that the
    length of $\rho$ is bounded with respect to the model and formula
    (more specifically by $m\cdot\ |Q| \cdot (k+1)$): a sequence of
    $|Q|$ states contributes for at least $1$ to some linear
    expressions in $C$ (loops containing only $0$-states can be
    avoided since they do not contribute to the satisfaction of $C$)
    and every atomic constraint in $C$ needs at most to collect a
    total weight of $k+1$. Hence the length of $\rho$ is in
    $O(|Q|.2^{|C|})$ due to the binary encoding of the constants.

    An easy \NP\ algorithm consists in guessing the Parikh
    image\footnote{Recall that the Parikh image of a sequence $u$ over
      some alphabet $A$ is the function mapping each symbol in $A$ to
      its number of occurrences in $u$. This is also equivalently seen
      as a vector of dimension $|A|$ called the Parikh vector of $u$.}
    $F_\rho: R \fleche \Nat$ of the sequence of transitions in $\rho$,
    where $F_\rho(r)$ with $r \in R$ is the number of occurrences of
    transition $r$ in $\rho$.  As the length of $\rho$ is bounded by
    $m\cdot\ |Q| \cdot (k+1)$, $F_\rho$ can be represented in
    polynomial size.  Moreover one can check in polynomial time that:
    \begin{iteMize}{$\bullet$}
    \item $q'$ satisfies $\psi$,
    \item $\rho$ satisfies $C$, since $\nb\vfi_j^i = \sum_{r\ \mid\
        r\sat \vfi_j^i} \sum_{(r,r')\in R} F_\rho(r,r')$.
    \item $F_\rho$ corresponds to a correct path in $\calS$ (by
      verifying that the sub-graph induced by $F_\rho$ is connected
      and then applying the Euler circuit theorem).
    \end{iteMize}
    
  \item $\Phi \egdef \EGs{C} \psi$: For this case we have to find an
    infinite path $\rho$ satisfying the property ``whenever the current
    prefix satisfies $C$ then the next state has to satisfy $\psi$''.

    Every atomic constraint $C_i$ in $C$ may change its truth value at
    most twice along $\rho$. Therefore $\rho$ can be decomposed in at
    most $3m$ parts $(\rho_j)_{j \in [1,3m]}$ along each of which the
    truth value of every $C_i$ is constant. Of course a part can be
    empty (restricted to a single state) and the last part must
    contain a cycle to ensure that $\rho$ is infinite.

    As previously, the length of every $\rho_j$ is bounded and its
    Parikh image can be encoded in polynomial size. Moreover it is
    possible to ensure that each $\rho_j$ ends at the starting state
    of $\rho_{j+1}$. Finally we can also compute the truth value of
    $C$ over any sequence $\rho_1 \ldots \rho_j$ and then verify
    whether $\psi$ holds for any state in such a sequence if
    necessary. \qed
  \end{iteMize}

\noindent
A direct corollary of Theorems~\ref{thm:mc-BCS0}, \ref{thm:mc-CSA1}
and \ref{thm:mc-BCSA1} is:

\begin{cor}
  \label{coro-dd2}
  The model-checking problem for \CCTL, \CCTLba, \CCTLb is
  \DDP-complete.
\end{cor}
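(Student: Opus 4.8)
The plan is to obtain the corollary by simply combining the three preceding theorems, once one notices that $\CCTL$ and $\CCTLba$ are both syntactic fragments of $\CCTLb$. Concretely, a $\CCTL$ constraint $(\sum_i \alpha_i\cdot\nb\vfi_i)\sim k$ is a single atomic constraint, hence a (degenerate) Boolean combination of atomic constraints, so it is a $\CCTLb$ constraint; and a $\CCTLba$ constraint is precisely a Boolean combination of atomic constraints all of whose coefficients equal $1$, hence again a $\CCTLb$ constraint. This is immediate from Definition~\ref{def:cctl} and the description of the variants. Consequently the model-checking problem for each of $\CCTL$ and $\CCTLba$ is a restriction of the model-checking problem for $\CCTLb$.

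For membership in \DDP, I would invoke Theorem~\ref{thm:mc-BCSA1}, which puts $\CCTLb$ model-checking in \DDP: since an instance of $\CCTL$ or $\CCTLba$ model-checking is already a $\CCTLb$ instance, the same procedure with an \NP\ oracle decides it, so both sub-problems lie in \DDP as well. For \DDP-hardness, Theorem~\ref{thm:mc-BCS0} provides it for $\CCTLba$ and Theorem~\ref{thm:mc-CSA1} provides it for $\CCTL$; and because either of these logics embeds into $\CCTLb$, the corresponding hardness reduction is also a valid reduction to $\CCTLb$ model-checking, so that problem is \DDP-hard too. Pairing the matching upper and lower bounds for each of the three logics then yields \DDP-completeness in all three cases.

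There is no genuine technical obstacle here: the only point needing a line of justification is the fragment containment above, and everything else is just the closure of complexity classes under sub-problems together with the composability of polynomial-time reductions. The substantive work has already been carried out in the three cited theorems — the delicate direction being the \DDP\ upper bound of Theorem~\ref{thm:mc-BCSA1}, whose bounded-witness and Parikh-image analysis is what makes the whole picture consistent — so the corollary itself is a one-line bookkeeping argument.
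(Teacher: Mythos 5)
Your argument is exactly the paper's: the authors state the corollary as a direct consequence of Theorems~\ref{thm:mc-BCS0}, \ref{thm:mc-CSA1} and \ref{thm:mc-BCSA1}, combining the \DDP\ upper bound for $\CCTLb$ (which covers its syntactic fragments $\CCTL$ and $\CCTLba$) with the two hardness results. Your additional remarks on fragment containment and closure under sub-problems just make explicit what the paper leaves implicit.
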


\subsection{Undecidability}

\begin{thm}
  The model-checking problem for $\CCTLbpma$ is undecidable.
  \label{thm:mc-BCS2}
\end{thm}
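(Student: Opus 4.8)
The plan is to prove undecidability of model-checking $\CCTLbpma$ by reducing from the halting problem for two-counter (Minsky) machines. A Minsky machine has two counters $c_1, c_2$ holding natural numbers, and a finite program of labeled instructions of the form ``increment $c_j$ and go to $\ell'$'', ``if $c_j = 0$ go to $\ell'$ else decrement $c_j$ and go to $\ell''$'', and a special halting instruction; deciding whether such a machine halts from the zero configuration is undecidable. The idea is to encode a computation of the machine as a path in a fixed, very small Kripke structure, using counting constraints to both (i) force the path to respect the semantics of increments, decrements and zero-tests, and (ii) detect whether the halting instruction is ever reached.

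First I would build a Kripke structure whose states include one state $p_\ell$ per program label $\ell$, together with auxiliary ``pebble'' states $a_1, a_2$ (one per counter) that a run may visit some number of times; the intended invariant is that along a prefix ending at label $\ell$, the value of counter $c_j$ equals $\nb a_j$, the number of times the $a_j$-state has been visited so far. A transition implementing ``increment $c_1$'' from $p_\ell$ to $p_{\ell'}$ routes the run through $a_1$ exactly once; a decrement routes it through a state that ``cancels'' one previous $a_1$ — but since we cannot decrement a count, the trick (standard for this style of encoding) is to instead use a companion proposition $\bar a_1$ and diagonal/Boolean constraints of the form $\nb a_1 - \nb \bar a_1 \sim c$ to track the current value as a difference, so that a decrement adds a $\bar a_1$-state. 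The zero-test ``if $c_1 = 0$'' is then enforced by guarding the corresponding transition with the constraint $\nb a_1 - \nb \bar a_1 = 0$, and its negation by $\nb a_1 - \nb \bar a_1 \geq 1$; Boolean combinations (available in $\CCTLbpma$) let us bundle all relevant guards on a single $\Ex \_ \U_C \_$ modality. Reaching the halt label is then expressed as $\EFs{C_{\mathrm{halt}}} p_{\mathrm{halt}}$ where $C_{\mathrm{halt}}$ is the conjunction of all per-step consistency constraints that must hold of a faithful simulation; $q_{\mathrm{init}} \models \EFs{C_{\mathrm{halt}}} p_{\mathrm{halt}}$ if and only if the machine halts.

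The main subtlety — and the place I expect to spend the most care — is that a single global counting constraint $C$ evaluated on the \emph{whole} prefix cannot by itself enforce that the counter invariant holds \emph{at every intermediate step} of the simulation (in particular it cannot, on its own, rule out a ``cheating'' decrement applied when the counter is actually zero, since counting is prefix-global and order-insensitive). The standard way around this is to push the step-by-step checking into the structure of the formula rather than a single constraint: define, by induction on a bound, a family of formulas $\Theta_\ell$ asserting ``from here one can faithfully continue the simulation, starting from label $\ell$ with the counter values recorded so far'', where each $\Theta_\ell$ is an $\Ex \_ \U_{C_\ell} \Theta_{\ell'}$ with $C_\ell$ the local guard for the instruction at $\ell$ — exactly the pattern used in the proof of Theorem~\ref{thm:mc-BCS0}. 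Because the machine is fixed and finite, this recursion has a fixed finite depth (equal to the number of program labels, after detecting loops via the zero-test structure), so the resulting $\CCTLbpma$ formula is finite; alternatively one checks that reachability of $p_{\mathrm{halt}}$ can be phrased with a bounded nesting of such modalities. I would then verify the two implications separately: a halting run of the machine yields, step by step, a path through the structure satisfying all the local constraints hence the top-level formula; conversely any path satisfying the formula can be decoded, using the difference constraints to read off counter values and the zero-test guards to guarantee that no illegal decrement occurred, into a legal halting computation. Since the Kripke structure and formula are computable from the machine, undecidability of halting transfers to model-checking $\CCTLbpma$.

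I would close by remarking that the construction uses both features unavailable in the decidable fragments — subtraction (to track counter values as differences $\nb a_j - \nb \bar a_j$) and Boolean combinations (to enforce a zero-test and its negation on the same modality) — which is consistent with the positive results of Theorem~\ref{thm:mc-CS2} and Corollary~\ref{coro-dd2}, and in fact the same reduction already appeared in spirit in \cite{lmp10ltl} for counting \LTL.
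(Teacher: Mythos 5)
You have correctly identified the crux of the reduction --- that a counting constraint attached to a single $\EFs{C}$ modality is only evaluated on the prefix ending at the witnessing state, and so cannot by itself rule out a run that cheats at some intermediate step --- but the fix you propose does not work. Your family of formulas $\Theta_\ell = \Ex\,\phi \Us{C_\ell} \Theta_{\ell'}$ fails for two independent reasons. First, in the standard (non-cumulative) semantics of \CCTL, every nested path quantifier restarts the counting: the constraint $C_{\ell'}$ inside $\Theta_{\ell'}$ only sees the segment of the run after the point where $\Theta_{\ell'}$ begins to be evaluated, so ``the counter values recorded so far'' are precisely what gets thrown away, and a zero-test guard such as $\nb a_1 - \nb\bar a_1 = 0$ becomes vacuous. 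Second, even granting some accumulation mechanism, a halting computation of a two-counter machine revisits the same program label unboundedly many times, so no formula of nesting depth bounded by the number of labels can certify every step of the simulation; the claim that the recursion ``has a fixed finite depth equal to the number of program labels'' is false, and in any case a program loop $\ell \to \ell' \to \ell$ would make the mutual recursion between $\Theta_\ell$ and $\Theta_{\ell'}$ non-terminating.

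The paper's proof resolves exactly this difficulty with a single unnested modality, by exploiting the fact that the semantics of $\EGs{C}\bot$ universally quantifies over \emph{all} prefixes of one run: $q \sat \EGs{C}\bot$ iff there is a run none of whose prefixes satisfies $C$. One therefore reduces from \emph{non}-halting rather than halting: $C$ is taken to be the disjunction of all possible ``errors'' (the run reaches $\halt$; or $\nb\Xplus - \nb\Xmoins < 0$ at some point; or a zero-test was taken while $\nb\Xplus - \nb\Xmoins > 0$, the offending prefix being detected via an auxiliary pair of propositions $\Xzerob, \Xzero$ arranged so that $\nb\Xzerob - \nb\Xzero > 0$ holds exactly at the test state). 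Then the machine diverges iff $q_1 \sat \EGs{C}\bot$, and the per-step invariant checking you were trying to encode by nesting is obtained for free from the $\G$-semantics. Your Kripke structure and your use of diagonal differences to encode counter values are otherwise sound, but the argument needs to be reworked along these lines to go through.
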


\proof
  This is done by reduction from the halting problem of a two-counter
  machine $\calM$ with counters $C$ and $D$, and $n$ instructions
  $I_1,\ldots,I_n$. Each $I_i$ is either a decrement
  $\tuple{\texttt{if}\ X \texttt{=0 then }\ j \ \texttt{else}$
    $X\texttt{--,}\ k}$ where $X$ stands for $C$ or
  $D$, an increment $\tuple{X\texttt{++,}\ j}$, or the
  halting instruction $\tuple{\texttt{halt}}$.
  We define a Kripke structure $\calS_\calM = (Q,R,\ell)$, where $Q
  = \{q_1, \ldots, q_n\} \cup \{r_i, s_i, t_i \mid I_i
  = \tuple{\texttt{if \ldots}}\}$. The transition relation is
  defined as follows:
  \begin{iteMize}{$\bullet$}
  \item if $I_i = \tuple{X \texttt{++,}\ j}$, then
    $(q_i,q_j) \in R$ ; and
  \item if $I_i = \tuple{\texttt{if}\ X\texttt{=0 then}\ j \
      \texttt{else}\ X \texttt{--,} k}$, then $(q_i,r_i)$,
    $(r_i,q_k)$, $(q_i,s_i)$, $(s_i,t_i)$, and $(t_i,q_j)$ in $R$.
  \end{iteMize}

  \noindent The labeling $\ell$ is defined over the set $\{\halt,
  \Cplus, \Cmoins, \Czero$, $\Czerob$, $\Dplus, \Dmoins, \Dzero$,
  $\Dzerob\}$ as $\ell(q_i) = \{\Xplus\}$ if $I_i$ is an increment of
  $X$, $\ell(r_i) = \{\Xmoins\}$, $\ell(s_i) = \{\Xzerob\}$ and
  $\ell(t_i) = \{\Xzero\}$ if $I_i$ is a decrement for $X$, and
  $\ell(q_i) = \{\halt\}$ if $I_i$ is the halting instruction.

  A run going through $s_i$ and $t_i$ for some $i$ will simulate the
  positive test ``$X=0$'': we use the propositions $\Xzerob$ and
  $X^0$ to observe this fact. Indeed along any run in $\calS_\calM$, a
  prefix satisfies $\nb \Xzerob > \nb \Xzero$ if and only if that prefix
  ends in some state $s_i$, which witnesses the fact that the
  counter's value was deemed equal to zero. The propositions on the
  other states are self-explanatory, witnessing increments and
  decrements of counters.

  Checking $\CCTLbpma$ on this structure solves the halting
  problem, since $\calM$ does not halt \emph{if and only if} $q_1
  \models_{\calS_{\calM}} \EGs{C} \bot$ with the following constraint:
  \[
  C \egdef (\nb \halt \geq 1)\ \ou \OU_{X\in\{\texttt{C},\texttt{D}\}}
  \!\!\Big( (\nb \Xplus - \nb \Xmoins < 0)\ \ou\ ( \nb \Xplus - \nb \Xmoins > 0 \et
  \nb \Xzerob-\nb \Xzero>0) \Big)
  \]
  This formula states that there exists a run where $C$ is
  consistently false, where $C$ is true either if the run terminates,
  or if the simulation of $\calM$ is wrong because the number of
  decrements is at some point larger than the number of increments, or
  because some counter was incorrectly assumed to be zero while
  simulating a test.
\qed

\section{Satisfiability}
\label{sec:satisf}

Here we address the satisfiability problem: given a formula $\Phi$,
does there exist a Kripke structure $\calS = \tuple{Q,R,\ell}$ with a
state $q\in Q$ such that $q \sat \Phi$?

For branching-time temporal logics, satisfiability problems are often
harder than model checking (contrary to linear-time temporal
logics)~\cite{emerson90}, this is also the case for our counting
logics. As soon as diagonal constraints are allowed (as in \CCTLpma or
\CCTLpm), satisfiability is undecidable: this can be easily shown by
adapting the undecidability proof of $\CCTLbpma$ model checking:

\begin{thm}
  The satisfiability problem for $\CCTLpma$ is undecidable.
  \label{satisf-undec}
\end{thm}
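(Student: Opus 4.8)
The plan is to reduce from the halting problem for two-counter (Minsky) machines, adapting the proof of Theorem~\ref{thm:mc-BCS2}. There the Kripke structure $\calS_\calM$ was fixed and all the work was done by the formula $\EGs{C}\bot$, whose constraint $C$ was a Boolean combination of diagonal constraints; here, since $\CCTLpma$ forbids Boolean connectives inside constraints, I would instead write a formula that \emph{forces} any model to contain a faithful copy of (the reachable part of) $\calS_\calM$ and then spreads the information carried by $C$ over several top-level conjuncts, each carrying at most one atomic diagonal constraint. Given $\calM$ with counters $C,D$ and instructions $I_1,\dots,I_n$, I would first design $\Phi_{\mathrm{struct}}$: it states that the initial state is an $I_1$-state and consists of a single $\AG$ whose body, for every instruction type, pins down exactly the outgoing structure prescribed by $\calS_\calM$, using $\All\X$ (so no spurious successor or label can survive) and, at each decrement instruction, a disjunction $\All\X(\cdots)\ou\All\X(\cdots)$ so that the model must commit to exactly one of the two branches. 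I would keep the $\calS_\calM$ labelling (increment states carry $\Xplus$, decrement branches carry $\Xmoins$, zero-test branches carry $\Xzerob$, the halting state carries $\halt$) plus the usual auxiliary ``copy'' propositions, and note that the states $t_i$ and the propositions $\Xzero$ of the $\CCTLbpma$ proof are no longer needed; the reachable part of any model of $\Phi_{\mathrm{struct}}$ is then an unfolding of $\calS_\calM$ from $q_1$.

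The crucial step is expressing correctness of the simulation without Boolean constraints. Along any path, the current value of counter $X$ equals $\nb\Xplus - \nb\Xmoins$ on the strict prefix, so I would add, for $X\in\{C,D\}$, the conjunct $\AGs{\nb\Xplus - \nb\Xmoins < 0}\bot$; since a decrement branch is labelled $\Xmoins$, this rules out precisely the runs that decrement a counter already equal to zero. The remaining error — entering a zero-test branch for $X$ while $X>0$ — was encoded in the $\CCTLbpma$ proof by the genuinely conjunctive clause $\nb\Xplus - \nb\Xmoins > 0 \et \nb\Xzerob - \nb\Xzero > 0$. Instead I would use $\AGs{\nb\Xplus - \nb\Xmoins > 0}(\non\Xzerob)$, i.e. $\non\EFs{\nb\Xplus - \nb\Xmoins > 0}\Xzerob$: via the identity $\AGs{C}\vfi\equiv\non\EFs{C}\non\vfi$ the ``state part'' $\Xzerob$ is moved out of the constraint into the formula slot, so the constraint becomes a single atomic diagonal inequality, and on the forced structure this says exactly that no $\Xzerob$-state is ever entered after a prefix on which $X$ is positive. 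The reduction formula is then $\Phi_\calM = \Phi_{\mathrm{struct}} \et \bigwedge_X \AGs{\nb\Xplus - \nb\Xmoins < 0}\bot \et \bigwedge_X \AGs{\nb\Xplus - \nb\Xmoins > 0}(\non\Xzerob) \et \EF\halt$, which lies in $\CCTLpma$.

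It then remains to check both directions. If $\calM$ halts, its (finite, deterministic) computation yields a finite ``line'' Kripke structure — one state per gadget-state occurrence, ending with a self-loop on the $\halt$-state — in which every state has its unique correct successor; this model satisfies $\Phi_{\mathrm{struct}}$, satisfies both families of safety conjuncts (a correct computation keeps the counters non-negative and enters a zero-test branch for $X$ only when $X=0$), and satisfies $\EF\halt$. Conversely, if $\calS\sat\Phi_\calM$ then by $\Phi_{\mathrm{struct}}$ its reachable part is a faithful unfolding of $\calS_\calM$ from $q_1$, by $\EF\halt$ there is a reachable $\halt$-state hence a path to it from $q_1$, and, the safety conjuncts being $\AG$-statements, that path's branch choices never decrement a zero counter and never zero-test a positive counter — so they coincide with $\calM$'s deterministic choices, and the path encodes a genuine halting run of $\calM$. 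Hence $\Phi_\calM$ is satisfiable iff $\calM$ halts, which is undecidable. The point I expect to be the main obstacle is precisely the re-encoding just described — turning the conjunctive zero-test error into the single-constraint conjunct $\AGs{\nb\Xplus - \nb\Xmoins > 0}(\non\Xzerob)$ — together with checking that $\Phi_{\mathrm{struct}}$ is rigid enough that a model cannot cheat with extra labels or transitions; the rest is routine graph bookkeeping in $\CTL$.
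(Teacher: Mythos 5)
Your proposal is correct and follows essentially the same route as the paper's proof: the paper also reduces from two-counter halting by writing a $\CCTLpma$ formula that forces a linear simulation structure (there via fresh instruction propositions $q_1,\dots,q_n$ and $\AX$-step formulas rather than a forced copy of the model-checking gadget) and encodes counter correctness through separate $\AG$-conjuncts each carrying a single atomic diagonal constraint, namely $\AGs{\nb \Xplus - \nb \Xmoins>0} (\non \Xzero)$ and $\AGs{\nb \Xplus =\nb \Xmoins} (\non \Xmoins)$, together with $\AF\,\halt$. Your key move --- pushing the state predicate out of the constraint so that each conjunct keeps only one atomic diagonal inequality, as in $\AGs{\nb\Xplus - \nb\Xmoins > 0}(\non\Xzerob)$ and $\AGs{\nb\Xplus - \nb\Xmoins < 0}\bot$ --- is exactly the paper's trick, so the argument goes through.
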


\proof As in the proof of Theorem~\ref{thm:mc-BCS2}, consider a
two-counter machine $\calM$ with counters $C$ and $D$, and $n$
instructions $I_1,\ldots,I_n$. We build a $\CCTLpma$ formula
$\Phi_\calM$ that is satisfiable iff $\calM$ halts.

We use the following set of atomic propositions: $\AP = \{ q_1,\ldots,
q_n, \Cplus, \Cmoins, \Czero, \Dplus, \Dmoins,$ $\Dzero, \halt\}$. The
$\CCTLpma$ formula $\Phi_\calM$ describes a linear KS whose every
state is labeled by exactly one $q_i$ corresponding to the current
state of $\calM$ and one proposition in $\calP = \{\Cplus, \Cmoins,
\Czero, \Dplus, \Dmoins, \Dzero, \halt\}$ that indicates the operation
that has to be done ($\Xplus$ and $\Xmoins$ are used to mark increment
and decrement of $X$, and $\Xzero$ labels states corresponding to an
instruction ``if $X==0$ \ldots'' when the current value of $X$ is
$0$). In the following we use $\calI_\calM(X)$ (resp. \
$\calT_\calM(X)$) to denote the set of instruction numbers
corresponding to an increment (resp.\ a test) of counter $X$.
$\Phi_\calM$ is the conjunction of the following formulae:

\begin{enumerate}[(1)]
\item ${\displaystyle \AG \Big( \OU_{i=1\ldots n} (q_i \et
    \ET_{j\not=i} \non q_j) \Big)}$
\item ${\displaystyle \AG \Big( \OU_{p\in \calP} (p \et \ET_{p' \in
      \calP\backslash \{p\}} \non p') \Big)}$
\item for every instruction, we have a step formula $\Phi_i$:
 
\begin{equation*}
\Phi_i  =  
\begin{cases} 
  \AG \big( q_i \:\impl\: (\Xplus \et \AX \: q_j) \big)
  & \text{if}\: I_i = \tuple{X \texttt{++,}\ j} \\
  \AG \Big( q_i \impl \big( (X_0 \et \AX \: q_j) \ou (\Xmoins \et \AX
  \: q_k) \big) \Big) & \text{if} \: I_i = \tuple{\texttt{if}\
    X\texttt{=0 then}\ j \
    \texttt{else}\ X \texttt{--,} k} \\
  \AG \big( q_i \impl (\halt \et \AX \: \halt) \big) & \text{if} \: I_i = \tuple{\texttt{halt}} \\
\end{cases}
\end{equation*}

\item no zero test succeeds when the actual value of the corresponding
  counter is strictly positive (i.e.  after a prefix witnessing
  strictly more increments than decrements), and no decrement is
  performed when that value is 0:
  \[
  \ET_{X\in\{C,D\}} \Big( \AGs{\nb \Xplus - \nb \Xmoins>0} (\non
  \Xzero) \;\et\; \AGs{\nb \Xplus =\nb \Xmoins} (\non \Xmoins) \Big)
  \]

\item $\AF \: \halt$
\end{enumerate}

Clearly $\Phi$ is satisfiable by a finite KS iff $\calM$ terminates.\smallskip
\qed

\noindent For logics with no diagonal constraints, satisfiability remains
decidable, with an additional cost compared to classical \CTL.

\begin{thm}
  The satisfiability problems for logics ranging from $\CCTLa$ to
  $\CCTLb$ are 2-EXPTIME-complete .
  \label{satisf-2exp}
\end{thm}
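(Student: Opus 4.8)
The plan is to prove 2-EXPTIME-completeness in two parts: a 2-EXPTIME upper bound for $\CCTLb$ (the most expressive decidable fragment without subtraction), and a 2-EXPTIME lower bound already for $\CCTLa$ (the weakest fragment considered here). Since $\CCTLa \subseteq \CCTL \subseteq \CCTLba \subseteq \CCTLb$, this will sandwich all the fragments in between.

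For the upper bound, the natural route is to translate a $\CCTLb$ formula $\Phi$ into an equivalent \CTL formula using Proposition~\ref{prop:csbb-to-ctl}, which produces a \CTL formula of DAG-size $2^{O(|\Phi|^2)}$, and then invoke the fact that \CTL satisfiability is \EXPTIME-complete. Naively this gives a $2^{2^{O(|\Phi|^2)}}$ bound, which is too large. The fix is to be careful about which parameter governs the \CTL satisfiability algorithm: the standard automata-theoretic or tableau procedure for \CTL runs in time exponential in the \emph{number of distinct subformulas} (the DAG-size / closure), not in the textual length. Since the translation has DAG-size $2^{O(|\Phi|^2)}$, running an \EXPTIME \CTL-satisfiability procedure on it costs $2^{2^{O(|\Phi|^2)}}$ — still doubly exponential, but now correctly in $|\Phi|$, i.e. in 2-EXPTIME. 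One must double-check that the translation of Prop.~\ref{prop:csbb-to-ctl} can actually be \emph{computed} within 2-EXPTIME (it can, since its output has 2-EXPTIME size and each subformula is produced by a simple recursive rule) and that the resulting object is a genuine \CTL formula over the same atomic propositions, so that a model of the translation is a model of $\Phi$ and vice versa.

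For the lower bound, the plan is to reduce from a 2-EXPTIME-hard problem, and the most convenient source is the satisfiability problem for \CTL\ with succinctly-presented formulas, or more directly the acceptance problem for an alternating exponential-space Turing machine. The key leverage is the succinctness results of Section~\ref{sec:express}: a $\CCTLa$ formula using a binary-encoded constant $n$ can force the existence of chains of length $n = 2^{|\Phi|}$ (e.g. via $\EFs{\nb \textsl{tick} = n}$-style constraints), i.e. it can describe configurations of exponential size using polynomial-size formulas. One then mimics the standard \EXPTIME-hardness proof for \CTL satisfiability (encoding an alternating polynomial-space machine / tiling of an exponential corridor) but one level up: use the counting constraints to index the $2^{|\Phi|}$ tape cells of an exponential-space machine, so that satisfiability of the $\CCTLa$ formula encodes acceptance of an alternating exponential-space (equivalently 2-EXPTIME) machine. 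The formula asserts the existence of a tree of configurations, consistency between successive configurations (expressed with counting constraints linking cell $i$ of one configuration to cell $i$ of the next), and correct handling of the alternating branching.

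The main obstacle I expect is the lower bound reduction, specifically getting the "horizontal consistency" between successive exponential-size configurations right using only counting constraints rather than freeze variables — one needs a gadget that, along a run, checks that the symbol written at position $i$ in configuration $t+1$ is correctly determined by positions $i-1,i,i+1$ in configuration $t$, and counting constraints only let you count occurrences, not compare positions directly. The trick will be to encode positions in unary-within-a-block or to use the constraint of the form $\nb(\text{marker before current cell}) = \nb(\text{marker before corresponding cell in previous config})$, exploiting that $\CCTLa$ allows sums of $\nb$-terms; laying this out carefully so that the formula stays polynomial-size while the constant it references is $2^{O(|\Phi|)}$ is the delicate part. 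A secondary, more routine concern on the upper-bound side is confirming the precise complexity parameter of the \CTL-satisfiability algorithm being cited so that the composition genuinely lands in 2-EXPTIME and not one exponential higher.
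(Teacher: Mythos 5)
Your proposal matches the paper's proof: the upper bound is obtained exactly as you describe, by composing the $2^{O(|\Phi|^2)}$ DAG-size translation of Proposition~\ref{prop:csbb-to-ctl} with a \CTL satisfiability procedure whose exponential cost is measured in the number of distinct subformulas, which is precisely the subtlety you flag. For the lower bound the paper relies on the same idea (encoding an alternating exponential-space Turing machine via binary-encoded counting constraints) but sidesteps the consistency gadgets you anticipate by citing the known 2-\EXPTIME-hardness of $RTCTL^{=}$ satisfiability and observing that $RTCTL^{=}$ embeds into $\CCTLa$.
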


\proof Hardness comes from the complexity of $RTCTL^{=}$
satisfiability~\cite{emerson92b}: this logic is an extension of $\CTL$
with an Until operator equipped with constraints of the form "$=\!k$''
over the number of transitions leading to the state satisfying the
right part of the Until. This result is based on an encoding of an
exponential space alternating Turing machine by a $RTCTL^{=}$
formula. Clearly, $RTCTL^{=}$ is included in $\CCTLa$.

2-EXPTIME membership directly follows from the translation given in
Lemma~\ref{prop:csbb-to-ctl}: any $\CCTLb$ formula can be translated
into \CTL and the resulting formula's DAG-size is in
$O(2^{|\Phi|^2})$. It remains to use an exponential algorithm for
$\CTL$ satisfiability to obtain a 2-EXPTIME procedure (note that
considering DAG-size instead of standard size does not matter for the
complexity of the $\CTL$ procedure: indeed, in~\cite{kupferman2000b}
for instance, the size of the alternating tree automaton built from a
given \CTL formula is its number of distinct subformulae).  \qed

\section{Extensions}
\label{sec:exten}

In the semantics of \CCTL modalities, each new path quantifier resets
the counting along a run, or more precisely starts counting anew on
the remaining portion of the run. This restriction is quite
significant, and ensures in particular that \CCTL is a
\emph{state-based} temporal logic. Under some circumstances (as well
as for the sake of completeness), it could be useful to relax this
hypothesis and consider logics in which nested modalities do not
necessarily reset the counting process.

In this section, we define two logics that allow this behaviour. The
first one, called $\CCTLv$, uses explicit variables to keep track of
the number of times a sub-formula was made true along the current run
since the variable was bound. The second logic, called $\CCTLc$ uses a
special reset modality and a different, cumulative semantics for
$\Us{C}$, where counting ranges over the whole portion of the run
since the last reset (hence potentially since the very beginning of
the run). This logic is interpreted over states with a history.

\subsection{Explicit variables}
\label{sec:explicit-variables}

Instead of using counting constraints associated with temporal
modalities, we now consider a logic equipped with explicit
\emph{variables} and constraints directly stated inside formulas.

 \begin{defi}
    \label{def-cctlv}
    Given a set of atomic propositions $\AP$ and a countable set of
    variables $V$, we denote by $\CCTLv$ the set of formulas of the
    form
    \[
    \vfi,\psi \grameg P \gramou \vfi \ou \psi \gramou \non \vfi
    \gramou \Ex \vfi \U \psi \gramou \All \vfi \U \psi \gramou
    z[\psi].\vfi \gramou \sum_{i=1}^\ell \alpha_i \cdot z_i \sim c
    \]
    where $P \in \AP$, $z, z_i \in V$, $\ell, \alpha_i, c \in \Nat$
    and $\sim\, \in\{<,\leq,=,\geq,>\}$.
\end{defi}

Intuitively $z[\psi].\vfi$ means that variable $z$ is defined and may
be used in formula $\vfi$, where it will stand for the number of times
formula $\psi$ was observed to be true along the current run since $z$
was defined.

More precisely, when the above formula is evaluated in a certain
state, (1) variable $z$ is reset to zero and bound to the sub-formula
$\psi$, (2) at each subsequent step of a run, $z$ is assigned the
number of states in which formula $\psi$ has held along this run since
$z$ was bound (\ie\ the value of $z$ evolves like $\nb\psi$ as in
Definition \ref{def:cctl-sem}) and (3) given this semantics for $z$,
$\vfi$ holds in the current state.

\begin{rem}
\label{rem:cctl-to-cctlv}
  The logic $\CCTLv$ can easily express any $\CCTLb$ property. Indeed,
  any $\CCTLb$ formula $\Ex \vfi \Us{C} \psi$, where $C$ is a boolean
  combination $f(C_1, \ldots, C_m)$ of atomic constraints $C_i =
  \sum_{j = 1}^{n_i} \nb \vfi^i_j \sim k_i$, is equivalent to the
  $\CCTLv$ formula
  \[
  z^1_1[\vfi^1_1] . z^1_2[\vfi^1_2] \ldots z^m_{n_m}
  [\vfi^m_{n_m}] . \Ex \vfi \U \big( \psi \land f(C'_1, \ldots, C'_m)
  \big)
  \]
  where $C'_i = \sum_{j = 1}^{n_i} z^i_j \sim k_i$ (and similarly for
  the $\All$-quantified modality). This translation yields formulas
  whose size is linear in that of the original formulas.

  For example, the $\CCTLba$ formula $\EFs{\nb P \leq 5 \et \nb P' >2}
  P''$, stating that there exists a run along which a state satisfying
  $P''$ is reached after at most 5 occurrences of $P$ and more than 2
  occurrences of $P'$, can be expressed in $\CCTLv$ as
  $z^1_1[P]. z^1_2[P']. \EF (z^1_1 \leq 5 \et z^1_2 > 2 \et P'' )$.
\end{rem}

We first introduce some notations. Given a function $f: E \rightarrow
F$, we denote by $\dom(f) \subseteq E$ the domain of $f$, and by
$\ran(f) \subseteq F$ its range. For $x\in E$ and $a \in F$, let $f[x
\leftarrow a]$ be the function mapping $x$ to $a$ and every $y \in
\dom(f) \setminus \{x\}$ to $f(y)$, and $f{|_{D}}$ be the restriction
of $f$ to some subset $D$ of $E$. 
Moreover we let $\subf(\vfi)$ be the set of all sub-formulas of $\vfi$
and $\V(\vfi)$ denote the set of all variables occurring in $\vfi$. An
occurrence of some $z \in \V(\Psi)$ is \emph{bound} if it occurs in
the right-hand side $\vfi$ of some sub-formula $z[\psi].\vfi \in
\subf(\Psi)$, and \emph{free} otherwise. A variable is free in $\Psi$
if it has at least one free occurrence. A formula without any free
variable is called \emph{closed}. Formally, the set $\FV(\Psi)
\subseteq \V(\Psi)$ of free variables of $\Psi$ is
\begin{align*}
  &\,\FV(\vfi_1 \ou \vfi_2) = \FV(\Ex \vfi_1 \U \vfi_2) = \FV(\All
  \vfi_1 \U \vfi_2) = \FV(\vfi_1) \union \FV(\vfi_2)
  \\
  &\begin{aligned} &\FV(P) = \emptyset & \qquad & \textstyle
    \FV(\sum_{i=1}^\ell \alpha_i \cdot z_i \sim c) = \{z_i \mid i \in
    [1,\ell]\}
    \\
    &\FV(\non \vfi) = \FV(\vfi) & & \FV(z[\psi].\vfi) = \FV(\psi) \cup
    (\FV(\vfi) \setminus \{z\})\end{aligned}
\end{align*}

\begin{rem}
  \label{rem:cctlv-restriction}
  In order to define the formal semantics of $\CCTLv$, one must be
  able to determine, in a given context, which sub-formula $\psi$ is
  bound to each variable $z$. For simplicity, we will henceforth make
  the following two assumptions on the syntax of formulas:
  \begin{enumerate}[(1)]
  \item In any formula, every variable is bound at most once. In other
    words, every subformula $z[\psi].\vfi$ deals with a distinct
    variable $z$. \label{item:syntax1}
  \item In any formula $\Phi$, there exists a (strict) total ordering
    $\prec$ on $\V(\Phi)$ such that any formula bound to some variable
    $z$ only contains occurrences of variables less than $z$, or more
    formally, for any sub-formula $z[\psi].\vfi$ of $\Phi$, $z' \in
    \V(\psi)$ implies $z' \prec z$. \label{item:syntax2}
  \end{enumerate}
  Note that neither assumption restricts the expressiveness of the
  logic, since one may easily rename variable occurrences in any
  formula to fulfill constraint \ref{item:syntax1}, and order
  variables according to an infix traversal of a formula's syntax tree
  to fulfill constraint \ref{item:syntax2}.
\end{rem}

We call \emph{environment} any partial function $\varepsilon: V
\rightarrow \CCTLv$. A pair $(\Phi,\varepsilon)$ where $\Phi$ is a
$\CCTLv$ formula and $\varepsilon$ is an environment, is a called a
\emph{closure}. We distinguish a specific class of closures, called
\emph{consistent}, defined as follows:

\begin{defi}
  \label{def:consistent}
  A \CCTLv closure $(\Phi,\varepsilon)$ is said to be
  \emph{consistent} if
  \begin{enumerate}[(1)]
  \item $\dom(\varepsilon) \cap \V(\Phi) = \FV(\Phi)$;
  \item for all $z \in \dom(\varepsilon)$ and $z' \in
    \FV(\varepsilon(z))$, $z' \in \dom(\varepsilon)$;
  \item for all $z \in \dom(\varepsilon)$ and $z' \in
    \V(\varepsilon(z))$, $z' \prec z$.
  \end{enumerate}
\end{defi}

Condition (1) guarantees that the environment for $\Phi$ defines at
least all free variables in $\Phi$ (and potentially some additional
variables not occurring in $\Phi$) and does not redefine any of
$\Phi$'s variables, condition (2) that $\varepsilon$ does not refer to
undefined variables and condition (3) that there are no cyclic
definitions. Note that for any closed formula $\Phi$, $(\Phi,
\varepsilon_{\emptyset})$ is consistent, where $\varepsilon_\emptyset$
is the empty environment.

A consistent $\CCTLv$ closure $(\vfi,\varepsilon)$ is interpreted over
a state of a Kripke structure extended with a valuation $v: V \fleche
\Nat$ such that $\dom(v) = \dom(\varepsilon)$.
Given a consistent closure $(\vfi,\varepsilon)$, a valuation $v$ such
that $\dom(v)=\dom(\varepsilon)$, and a finite run $\pi$ of a Kripke
structure, let $v+_\varepsilon \pi$ be the valuation describing the
values of variables in $\dom(v)$ \emph{after} following $\pi$ (\ie
once the states of $\pi$ have all been visited and belong to the
past): at each step along $\pi$, the value of every variable $z \in
\dom(v)$ is updated to take into account the truth value of
$\varepsilon(z)$. Formally $v+_\varepsilon \pi$ is defined inductively
as: $v+_\varepsilon \pi = v$ if $|\pi|=0$ (\ie $\pi$ is the empty
sequence), and $(v+_\varepsilon \pi \cdot r)(z) = v'(z)+1$ if $(r,v',
\varepsilon) \sat \varepsilon(z)$ (the satisfaction relation $\sat$ is
defined below) and $(v+_\varepsilon \pi\cdot r)(z) = v'(z)$ otherwise,
where $v'$ is the valuation $v+_\varepsilon \pi$ and $r$ is a state.

\begin{defi}
  The following clauses define the satisfaction of a consistent
  $\CCTLv$ closure $(\vfi,\varepsilon)$ from the state $q$ of some
  Kripke structure $\calS=\tuple{Q,R,\ell}$ under valuation $v$ with
  $\dom(v)=\dom(\varepsilon)$ -- written $(q,v,\varepsilon)
  \sat_{\calS} \vfi$ -- by induction over the structure of $\vfi$ (we
  omit the cases of Boolean modalities):
  \begin{alignat*}{3}
    & (q,v,\varepsilon) \sat_{\calS} z[\psi].\vfi & & \text{\quad iff
      \quad} & & (q,v[z\leftarrow 0],\varepsilon[z\leftarrow \psi])
    \sat_{\calS} \vfi,
    \\
    & (q,v,\varepsilon) \sat_{\calS} \textstyle \sum_{i=1}^{\ell}
    \alpha_i \cdot z_i \sim c & & \text{\quad iff \quad} & &
    \textstyle \sum_{i=1}^{\ell} \alpha_i\cdot v(z_i) \sim c,
    \\
    & (q,v,\varepsilon) \sat_{\calS} \Ex \vfi \U \psi & & \text{\quad
      iff \quad} & & \exists \rho \in \Exec(q)\ \text{s.t.}\
    (\rho,v,\varepsilon) \sat_{\calS} \vfi \U \psi,
    \\
    & (q,v,\varepsilon) \sat_{\calS} \All \vfi \U \psi & & \text{\quad
      iff \quad} & & \forall \rho \in \Exec(q), \text{we have}\
    (\rho,v,\varepsilon) \sat_{\calS} \vfi \U \psi,
    \\
    \intertext{where} & (\rho,v,\varepsilon) \sat_{\calS} \vfi \U \psi
    \quad & & \text{\quad iff \quad} & & \exists i\geq 0\ \text{s.t.}\
    (\rho(i), v +_\varepsilon \rho{|_{i-1}}, \varepsilon) \sat_{\calS}
    \psi
    \\
    & & & \text{\quad and \quad} & & \forall\, 0 \leq j < i,\
    (\rho(j), v +_\varepsilon \rho{|_{j-1}}, \varepsilon) \sat_{\calS}
    \vfi.
  \end{alignat*}  
\end{defi}

When there is no risk of confusion, we may omit subscript $\calS$, and
simply write $(q, v, \varepsilon) \sat \vfi$. For any closed formula
$\Phi$, only the state $q$ is relevant and we will simply write $q
\sat_{\calS} \Phi$, or directly $q \sat \Phi$. Remark that, when
evaluating a closed formula according to the above semantic rules,
only consistent closures are built and considered.

Finally, as a technical tool for the following proofs, we consider the
\emph{set of relevant variables} of a closure, that is the set of
variables whose current value is required to decide whether the
formula holds for a given state. Given a consistent closure
$(\Phi,\varepsilon)$, we define $\RV(\Phi, \varepsilon)$ as follows:
\begin{alignat}{2}
  & \RV(z[\psi].\vfi,\varepsilon) &\ \egdef\ &
  \RV(\vfi,\varepsilon[z\leftarrow \psi]) \backslash \{z\}
  \\
  & \RV(\Ex \vfi_1 \U \vfi_2,\varepsilon) &\ \egdef\ & \RV(\All \vfi_1
  \U \vfi_2,\varepsilon) \egdef\RV(\vfi_1 \ou \vfi_2,\varepsilon)
  \egdef \RV(\vfi_1,\varepsilon) \union \RV(\vfi_2,\varepsilon)
  \\
  & \RV(\non \vfi,\varepsilon) &\ \egdef\ &\RV(\vfi,\varepsilon)
  \\
  & \RV(P,\varepsilon) &\ \egdef\ & \emptyset \\
  & \RV(z_i \sim c,\varepsilon) &\ \egdef\ & \{z_i\} \union
  \RV(\varepsilon(z_i), \varepsilon) \label{eq:2}
\end{alignat}
Note that relevant variables in formula $\psi$ are only added to
$\RV(z[\psi].\vfi,\varepsilon)$ when $z_i$ occurs in formula $\vfi$,
i.e. in case \eqref{eq:2} above. Clearly $\FV(\Psi) \subseteq
\RV(\Psi,\varepsilon) \subseteq \V(\Psi)$. Moreover by
Def. \ref{def:consistent}, for every $z' \in \RV(\varepsilon(z),
\varepsilon)$, $z' \prec z$.

\begin{exa}
  Consider the consistent closure $(\Psi, \varepsilon)$ with 
  \[
  \Psi \egdef z_4[P'].\EF(z_4\geq 2 \et z_2=4) \text{\quad and \quad}
  \varepsilon = \{ z_1 \mapsto P, z_2 \mapsto \EX (z_1>2), z_3 \mapsto
  P'' \},
  \]
  we have $\FV(\Psi)=\{z_2\}$ and $\RV(\Psi,\varepsilon) =
  \{z_2,z_1\}$ because $z_1$ occurs free in $\varepsilon(z_2)$, hence
  $\RV(\varepsilon(z_2), \varepsilon) = \{z_1\}$ by
  Eq. \eqref{eq:2}. Of course $z_3$ belongs to neither set because it
  occurs nowhere, and $z_4$ because it is bound in $\Psi$ and
  $\RV(\varepsilon(z_4), \varepsilon) = \emptyset$.
\end{exa}

Given a closure $(\Psi, \varepsilon)$ and a valuation $v$, we denote
by $v_{\Psi}$ the restriction $v_{|\RV(\Psi,\varepsilon) }$ of $v$ to
the domain $\RV(\Psi, \varepsilon)$ (and $\varepsilon_\Phi$ is the
corresponding restriction of $\varepsilon$). The set
$\RV(\Psi,\varepsilon)$ contains the relevant variables for evaluating
$\Psi$ as stated by the following lemma.

\begin{lem}
  For any consistent closure $(\Psi,\varepsilon)$, the closure
  $(\Psi,\varepsilon_\Psi)$ is consistent. Moreover, let $v$ be a
  valuation over $\dom(\varepsilon)$ and $q$ a state,
  \[
  (q,v,\varepsilon) \sat \Psi \iff (q,v_{\Psi},\varepsilon_\Psi)
  \sat \Psi.
  \]
\end{lem}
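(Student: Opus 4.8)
The plan is to establish the consistency claim first, by structural induction on $\Psi$, and then the equivalence, also by induction on $\Psi$ but over a slightly refined well-founded measure (explained below); the only genuinely delicate point is the treatment of the valuation update $v +_\varepsilon \pi$ in the $\U$ clauses. As a preliminary I would record two facts, both proved by structural induction on $\Psi$ from the defining clauses of $\RV$, the consistency conditions of Definition~\ref{def:consistent}, and the syntactic restrictions of Remark~\ref{rem:cctlv-restriction}: \emph{(i)} $\RV(\Psi,\varepsilon)$ is \emph{closed under $\varepsilon$-relevance}, i.e.\ $z \in \RV(\Psi,\varepsilon)$ implies $\RV(\varepsilon(z),\varepsilon) \subseteq \RV(\Psi,\varepsilon)$ — the clause $\RV(z[\psi].\vfi,\varepsilon) = \RV(\vfi,\varepsilon[z\leftarrow\psi]) \setminus \{z\}$ is compatible with this because a variable bound inside $\Psi$ cannot occur in $\varepsilon(z')$ for any $z' \in \dom(\varepsilon)$, by consistency conditions (1)--(2) and the ``bound at most once'' assumption; and \emph{(ii)} $\RV(\vfi,\varepsilon)$ is unchanged if $\varepsilon$ is replaced by its restriction to $\RV(\vfi,\varepsilon)$, so restricting an environment never alters which variables are relevant.

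Given \emph{(i)} and \emph{(ii)}, consistency of $(\Psi,\varepsilon_\Psi)$ is a direct check. Condition (3) is inherited verbatim from $(\Psi,\varepsilon)$. Condition (2) holds because for $z \in \RV(\Psi,\varepsilon)$ and $z' \in \FV(\varepsilon(z)) \subseteq \RV(\varepsilon(z),\varepsilon)$ we get $z' \in \RV(\Psi,\varepsilon)$ by \emph{(i)}. Condition (1) follows from $\FV(\Psi) \subseteq \RV(\Psi,\varepsilon)$ together with the observation that a variable which occurs in $\Psi$ \emph{and} lies in $\RV(\Psi,\varepsilon)$ must be free in $\Psi$: a bound occurrence is discarded at its (unique) binder by the $z[\psi].\vfi$ clause of $\RV$ and, being bound at most once, does not reappear.

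For the equivalence, the cases of atomic propositions, Boolean connectives and constraints $\sum_i \alpha_i z_i \sim c$ are immediate, since the truth value of a constraint depends only on the values $v(z_i)$ with $z_i \in \RV(\Psi,\varepsilon)$, on which $v$ and $v_\Psi$ agree, and does not consult $\varepsilon$. For $\Psi = z[\psi].\vfi$, I unfold the semantic clause on both sides, invoke the induction hypothesis on the closure $(\vfi,\varepsilon[z\leftarrow\psi])$, and use $\RV(z[\psi].\vfi,\varepsilon) = \RV(\vfi,\varepsilon[z\leftarrow\psi]) \setminus \{z\}$ to see that ``restrict to $\RV$'' and ``rebind $z$'' commute: $(v[z\leftarrow 0])_{|\RV(\vfi,\varepsilon[z\leftarrow\psi])}$ and $(\varepsilon[z\leftarrow\psi])_{|\RV(\vfi,\varepsilon[z\leftarrow\psi])}$ are exactly $v_\Psi[z\leftarrow 0]$ and $\varepsilon_\Psi[z\leftarrow\psi]$, up to dropping an irrelevant $z$ (using \emph{(ii)}); consistency of $(z[\psi].\vfi,\varepsilon_\Psi)$, already established, then lets me re-fold the clause on the restricted side.

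The substantive case is $\U$, say $\Psi = \Ex \vfi \U \psi$. Unfolding the semantics, everything reduces to comparing, for runs $\rho \in \Exec(q)$ and indices $i$, satisfaction of $\vfi$ and $\psi$ from the states $\rho(j)$ under $v +_\varepsilon \rho_{|j-1}$ versus $v_\Psi +_{\varepsilon_\Psi} \rho_{|j-1}$. The key sub-lemma I would isolate is: \emph{for every consistent closure $(\Psi,\varepsilon)$, every $v$ with $\dom(v)=\dom(\varepsilon)$, and every finite run $\pi$,} $(v +_\varepsilon \pi)_{|\RV(\Psi,\varepsilon)} = v_\Psi +_{\varepsilon_\Psi} \pi$ (well-defined by \emph{(i)} and \emph{(ii)}). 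It is proved by induction on $|\pi|$: the base case is trivial, and for $\pi \cdot r$ the update of a relevant variable $z$ consults only whether $(r, v +_\varepsilon \pi, \varepsilon) \sat \varepsilon(z)$, a satisfaction query on the closure $(\varepsilon(z),\varepsilon)$, whose relevant variables are all $\prec z$ by condition (3) and lie in $\RV(\Psi,\varepsilon)$ by \emph{(i)}; applying the equivalence being proved \emph{to $(\varepsilon(z),\varepsilon)$}, plus the $|\pi|$-hypothesis (the incoming valuation already agrees on $\RV(\varepsilon(z),\varepsilon)$), shows the update agrees on the restricted side. The main obstacle — and the one place the bookkeeping needs care — is that this appeal to the equivalence is \emph{not} to a subformula of $\Psi$, so the whole proof must be run as a single well-founded induction on a combined measure, for instance lexicographically on $|\Psi|$ and then on the $\prec$-rank (equivalently on $|\Psi| + \sum_{z \in \RV(\Psi,\varepsilon)} |\varepsilon(z)|$), which makes the recursive calls on $(\varepsilon(z),\varepsilon)$ legitimately smaller. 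With the sub-lemma available, both directions of the $\U$ case follow by replaying the quantification over $\rho$ and $i$ and chaining the induction hypothesis on the strict subformulas $\vfi$ and $\psi$ — whose relevant-variable sets are contained in $\RV(\Psi,\varepsilon)$, so restriction is harmless — together with fact \emph{(ii)}.
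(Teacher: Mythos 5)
Your proposal is correct; note that the paper gives no argument at all for this lemma (it is dismissed as ``straightforward''), so there is no official proof to compare against. Your write-up is in fact more honest than that dismissal: you correctly isolate the one point that is \emph{not} a routine structural induction, namely that evaluating $v +_\varepsilon \pi$ in the $\U$ clause forces a satisfaction query on the closures $(\varepsilon(z),\varepsilon)$, which are not subformulas of $\Psi$, so the induction must be run on a combined measure; and the two closure facts \emph{(i)} and \emph{(ii)} about $\RV$ are exactly what is needed to make the restriction commute with the valuation update. One small correction: the lexicographic ordering ``first $|\Psi|$, then $\prec$-rank'' that you offer as an equivalent formulation does \emph{not} work, since the recursive call on $(\varepsilon(z),\varepsilon)$ may well have $|\varepsilon(z)| > |\Psi|$ (take $\Psi$ to be the bare constraint $z \geq 1$ with a large formula $\varepsilon(z)$); the additive measure $|\Psi| + \sum_{z \in \RV(\Psi,\varepsilon)} |\varepsilon(z)|$ that you also give does decrease in every case (for the environment call one uses $\RV(\varepsilon(z),\varepsilon) \subseteq \RV(\Psi,\varepsilon) \setminus \{z\}$, which absorbs the new $|\varepsilon(z)|$ term), so you should drop the word ``equivalently'' and keep only the additive measure. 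With that adjustment the argument is complete.
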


The proof of this lemma is straightforward.
In the remainder of this section, we will study the expressiveness of
this logic, as well as the complexity of its model-checking and
satisfiability problems.

\subsubsection{Expressiveness}

Similarly to \CCTL formulas without diagonal constraints, we show in
this section that any closed \CCTLv formula can be translated into an
equivalent $\CTL$ formula. 

\begin{prop}
  For every closed $\CCTLv$ formula $\Phi$, there exists an equivalent
  $\CTL$ formula of dag-size $2^{O(|\Phi|^2)}$.
\end{prop}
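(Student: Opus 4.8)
The plan is to mirror the construction in the proof of Proposition~\ref{prop:csbb-to-ctl}, the essential new difficulty being that a variable $z$ bound by some enclosing $z[\psi].(\cdots)$ keeps counting across nested modalities instead of being reset. So the \CTL translation cannot just move state by state forgetting the past: it must carry along, as part of the objects on which it is defined, the current values of all variables in scope.

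First I would note that only \emph{bounded} information about each variable matters. Let $K$ be the largest integer constant occurring in $\Phi$, so $K\in O(2^{|\Phi|})$, and assume w.l.o.g.\ that every coefficient $\alpha_i$ in a constraint is $\geq 1$ (zero coefficients are dropped). For a constraint $\sum_i\alpha_i z_i\sim c$ of $\Phi$ one has $c\leq K$, hence as soon as the true value of some participating $z_i$ exceeds $K$ we already get $\alpha_i z_i>c$, so the constraint's truth value is fixed (it holds iff $\sim\,\in\{\geq,>\}$). It therefore suffices to remember, for each variable, its value \emph{capped at $K{+}1$}, with $K{+}1$ read as ``some value $>K$'' (a variable at $K{+}1$ is \emph{saturated} and will never again influence any constraint). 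Call a partial map $\gamma:\V(\Phi)\rightarrow\{0,\dots,K{+}1\}$ a \emph{configuration}. Along any root-to-leaf branch of the syntax tree of $\Phi$ the variables in scope, together with the variables they may refer to (all $\prec$-smaller, by Remark~\ref{rem:cctlv-restriction} and Definition~\ref{def:consistent}), number at most $|\Phi|$, so there are at most $(K{+}2)^{|\Phi|}=2^{O(|\Phi|^2)}$ relevant configurations.

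Next I would define, by induction on the structure of $\Phi$ together with the ordering $\prec$, a family of \CTL formulas $\tr{(\Psi,\gamma)}$ indexed by a subformula $\Psi$ of $\Phi$ and a configuration $\gamma$ defined at least on the relevant variables of $\Psi$ in its context (this context, hence the formula $\psi_z$ bound to each in-scope $z$, is fixed by the position of $\Psi$ in $\Phi$). Boolean connectives are translated homomorphically; an atomic constraint $\sum_i\alpha_i z_i\sim c$ becomes $\top$ or $\bot$ according to its value under $\gamma$ (using the capping rule above); a binder $z[\psi].\vfi$ becomes $\tr{(\vfi,\gamma[z\leftarrow 0])}$. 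The key case is the until modality: letting $A$ collect the variables of $\RV(\Ex\vfi\U\psi,\varepsilon)$ not yet saturated under $\gamma$, one unfolds one step,
\[
\tr{(\Ex\vfi\U\psi,\gamma)}\ =\ \Ex\big(\tr{(\vfi,\gamma)}\et g_\emptyset\big)\ \U\ \Big(\tr{(\psi,\gamma)}\ \ou\ \textstyle\bigvee_{\emptyset\neq S\subseteq A}\big(\tr{(\vfi,\gamma)}\et g_S\et\EX\,\tr{(\Ex\vfi\U\psi,\gamma^{+S})}\big)\Big),
\]
where $g_S=\bigwedge_{z\in S}\tr{(\psi_z,\gamma)}\et\bigwedge_{z\in A\setminus S}\non\tr{(\psi_z,\gamma)}$ says ``exactly the bound formulas of the variables in $S$ hold now'' (each $\psi_z$ translated under $\gamma$ restricted to its relevant variables, all $\prec z$), and $\gamma^{+S}$ raises, capping at $K{+}1$, the values of the variables in $S$; the $\All$ case replaces $\Ex,\EX$ by $\All,\AX$. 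Intuitively, the witnessing run stays in $\vfi$-states with no in-scope variable yet changing, until it reaches a state where either $\psi$ already holds, or some bound formulas hold, in which case the corresponding variables are incremented and the property continues from a successor under the updated configuration.

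Finally I would verify that this yields genuinely finite \CTL formulas and is correct. Well-foundedness: the only self-reference to $\tr{(\Ex\vfi\U\psi,\cdot)}$ under an unchanged configuration is absorbed into the genuine until displayed above (the $g_\emptyset$ branch), while every recursive call routed through $\EX$/$\AX$ in the $\bigvee$ uses $S\neq\emptyset$ consisting of non-saturated variables, hence strictly decreases the bounded quantity $\sum_{z}(K{+}1-\gamma(z))$; the remaining calls (to $\tr{(\vfi,\cdot)}$, $\tr{(\psi,\cdot)}$, $\tr{(\psi_z,\cdot)}$) concern strict subformulas or $\prec$-smaller variables. Correctness is an induction on the nesting depth of until modalities and on this bounded quantity, parallel to the argument sketched in Proposition~\ref{prop:csbb-to-ctl}, using that the valuation updates $v+_\varepsilon(\cdot)$ evolve exactly as the capped $\gamma^{+S}$ updates. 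For the size: at most $|\Phi|$ subformulas times at most $(K{+}2)^{|\Phi|}$ configurations gives at most $|\Phi|\cdot(K{+}2)^{|\Phi|}$ indexed formulas, and each equation contributes $O(2^{|A|}\cdot|A|)=O(2^{|\Phi|}\cdot|\Phi|)$ fresh DAG nodes pointing at already-built ones, so the total DAG-size is $O\big(|\Phi|^2\cdot 2^{|\Phi|}\cdot(K{+}2)^{|\Phi|}\big)$, which lies in $2^{O(|\Phi|^2)}$ since $K\in O(2^{|\Phi|})$. The step I expect to be most delicate is the bookkeeping for variables bound to complex formulas (possibly containing further binders): this is exactly what forces the ordering hypotheses of Remark~\ref{rem:cctlv-restriction} and the variable-indexed induction; the capping argument and the size count are then routine, as in Proposition~\ref{prop:csbb-to-ctl}.
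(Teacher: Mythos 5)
Your proposal is correct and follows essentially the same route as the paper: bound variable values at $K{+}1$ via a saturation lemma, then translate by induction on subformulas indexed by bounded valuations/configurations, unfolding each until-modality one ``interesting'' state at a time and arguing termination by the strict increase of the (bounded) valuation. The only cosmetic difference is that you enumerate the set $S$ of bound formulas holding in the current state as an explicit disjunction over subsets, where the paper scans the relevant variables one by one through a family of auxiliary formulas $\Gamma$; both give the same $2^{O(|\Phi|^2)}$ DAG-size.
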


Before presenting the actual translation, we show that variable values
may be bounded without changing the satisfaction of a formula. For a
valuation $v$ and an integer $K$, let us denote by $v_K$ the
restriction of $v$ to the domain $\{z \in \dom(v) \mid v(z) \leq K\}$.

\begin{lem}
\label{prop-freeze-fvm}
Let $(\phi,\varepsilon)$ be a consistent \CCTLv closure, and $K$ the
maximal constant occurring in a constraint in $\phi$ or
$\varepsilon$. For all Kripke structure $\calS$, state $q$ of $\calS$
and valuations $v$ and $v'$ over $\dom(\varepsilon)$, we have:
\[
v_K = v'_K \implies \big( (q,v,\varepsilon) \sat_\calS \phi \iff
(q,v',\varepsilon) \sat_\calS \phi \big).
\]
\end{lem}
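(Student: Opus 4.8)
The plan is to prove the statement by structural induction on $\phi$, with an auxiliary invariant stating that the update operation $+_\varepsilon$ preserves the relation ``$v_K=v'_K$''. Write $v\approx_K v'$ for this relation; it says that $v$ and $v'$ have the same set of variables of value at most $K$ and agree on those. The elementary observation underlying everything is that an atomic constraint $\sum_i \alpha_i z_i \sim c$ occurring in $\phi$ or in $\varepsilon$ has constant $c\leq K$ and non-negative coefficients, so its truth value under a valuation $w$ depends only on $w_K$: if some $z_i$ with $\alpha_i\geq 1$ has $w(z_i)>K$, then $\sum_i\alpha_i w(z_i)\geq K+1>c$ and the constraint holds iff $\sim\,\in\{>,\geq\}$; otherwise every $z_i$ with $\alpha_i\geq 1$ has $w(z_i)\leq K$ and the sum is fully determined by $w_K$. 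In both cases the truth value agrees for any $w\approx_K w'$.

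The auxiliary claim is: for every environment $\varepsilon$ whose constants are at most $K$ and every finite run prefix $\pi$, $v\approx_K v'$ implies $v+_\varepsilon\pi \approx_K v'+_\varepsilon\pi$. This is proved by induction on $|\pi|$. The step to $\pi\cdot r$ reduces, for each $z\in\dom(\varepsilon)$, to the equivalence $(r,w,\varepsilon)\sat\varepsilon(z)\iff(r,w',\varepsilon)\sat\varepsilon(z)$ whenever $w\approx_K w'$, which is an instance of the main statement applied to the closure built from $\varepsilon(z)$ and $\varepsilon$ (this closure remaining consistent by the syntactic restrictions of Remark~\ref{rem:cctlv-restriction}). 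Given this equivalence, each variable is incremented simultaneously or not at all on both sides: a variable of value $>K$ stays $>K$, and a variable of value $\leq K$ keeps the same (possibly incremented) value in both valuations, so $\approx_K$ is preserved.

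For the main statement I would then argue by structural induction on $\phi$. Atomic propositions mention no variable, so the claim is immediate; Boolean cases follow directly from the induction hypothesis; the constraint case is the elementary observation above. For $\phi=z[\psi].\phi'$, note that $v\approx_K v'$ implies $v[z\leftarrow 0]\approx_K v'[z\leftarrow 0]$ (both assign $0\leq K$ to $z$), and $(\phi',\varepsilon[z\leftarrow\psi])$ is again a consistent closure with constants $\leq K$, so the induction hypothesis applies. For $\phi=\Ex\phi_1\U\phi_2$ (and symmetrically for $\All$), unfold the semantics: satisfaction from $(q,v,\varepsilon)$ is witnessed by a run $\rho$ and an index $i$ with $(\rho(i),v+_\varepsilon\rho{|_{i-1}},\varepsilon)\sat\phi_2$ and $(\rho(j),v+_\varepsilon\rho{|_{j-1}},\varepsilon)\sat\phi_1$ for all $j<i$; by the auxiliary claim $v+_\varepsilon\rho{|_{j-1}}\approx_K v'+_\varepsilon\rho{|_{j-1}}$ for every $j$, so by the induction hypothesis on $\phi_1,\phi_2$ the very same $\rho$ and $i$ witness satisfaction from $(q,v',\varepsilon)$, and conversely.

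The main obstacle is setting up the induction so that the mutual dependence is genuinely well-founded: the main statement for until-formulas needs the auxiliary claim, and the auxiliary claim needs the main statement for the formulas $\varepsilon(z)$ stored in the environment, so naive structural induction on $\phi$ does not close. This is exactly where condition~(3) of Definition~\ref{def:consistent} (and the ordering $\prec$ of Remark~\ref{rem:cctlv-restriction}) is used: since $\varepsilon(z)$ only mentions variables strictly below $z$, dereferencing the environment strictly decreases the $\prec$-maximal variable relevant to the current closure. The whole argument can therefore be run by a single well-founded induction whose primary measure is this $\prec$-maximal relevant variable and whose secondary measure is the structure of the formula.
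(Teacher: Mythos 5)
Your proof is correct and rests on exactly the same two observations as the paper's own (very terse, three-sentence) argument: atomic constraints with non-negative coefficients and right-hand-side constants at most $K$ cannot distinguish variable values above $K$, and this indistinguishability is preserved by the update operation $+_\varepsilon$ along any run, the circularity between these two facts being broken by the variable ordering $\prec$ guaranteed by consistency. You simply supply the induction scaffolding the paper omits; the only cosmetic wrinkle is that in the $z[\psi].\vfi$ case the newly bound $z$ may be $\prec$-larger than every currently relevant variable, so the primary component of your lexicographic measure can grow there --- but since that case never invokes the auxiliary claim and $\psi$ only mentions variables strictly below $z$, the recursion still terminates.
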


\begin{proof}
  A reformulation of $v_K = v'_K$ is $v(z) \leq K \impl v(z) =
  v'(z)$. For each free variable $z$ whose value by $v$ is greater
  than $K$, the truth value of any constraint where $z$ occurs will be
  the same for $v(z)$ and any other value greater than $K$, in
  particular $v'(z)$, since the constant in the right-hand side of the
  constraint is at most $K$. This is true in $q$, and remains true
  along any run from $q$.
\end{proof}

For any consistent closure $(\phi,\varepsilon)$ and for some valuation
$v$ with $\dom(v)=\dom(\varepsilon)=\RV(\phi,\varepsilon)$, we define
the $\CTL$ translation $\tr{\phi}^v_\varepsilon$ by induction on the
structure of $\phi$. The case of boolean connectives and atomic
formulas is trivial:
\begin{align}
  \tr{\psi_1 \et \psi_2}^{v}_{\varepsilon} & =
  \tr{\psi_1}^{v_{\psi_1}}_{\varepsilon_{\psi_1}} \et
  \tr{\psi_2}^{v_{\psi_2}}_{\varepsilon_{\psi_2}} &
  \tr{P}^{v}_{\varepsilon} & = P & \tr{\non \phi}^{v}_{\varepsilon} &
  = \non \tr{\phi}^{v}_{\varepsilon} \label{eq:10}
\end{align}
Variable definitions and constraints are also straightforward. It
suffices to update and use the valuation and environment suitably:
\begin{align}
  \tr{z[\phi].\psi}^{v}_{\varepsilon} & = \tr{\psi}^{v[z \leftarrow
    0]}_{\varepsilon[z \leftarrow \phi]} & \tr{\textstyle \sum_i
    \alpha_i\cdot z_i \sim c}^{v}_{\varepsilon} & =
  \begin{cases}
    \top & \text{if } \sum_i \alpha_i\cdot v(z_i) \sim c
    \\
    \bot & \text{otherwise}
  \end{cases} \label{eq:11}
\end{align}
Dealing with temporal modalities is more complex, and justifies the
introduction of auxiliary formulas. Similarly to the translation of
$\CCTL$ to $\CTL$, the idea is to successively evaluate each formula
$\varepsilon(z)$ which is \emph{relevant} to the truth value of the
whole formula, and to update the valuation accordingly.
%
However, since variable values strictly larger than $K$ (where $K$ is
the largest constant occurring in the formula or the environment) are
all equivalent according to the previous proposition, it is only
useful to evaluate formulas $\varepsilon(z)$ such that $v(z) \leq K$.
\begin{equation}
  \tr{\Ex \phi \U \psi}^{ v}_{\varepsilon} = \Ex \big(
  \tr{\phi}^{v_{\phi}}_{\varepsilon_{\phi}} \et \Theta^{v}_{\varepsilon} \big) \U
  \Big[ \tr{\psi}^{v_{\psi}}_{\varepsilon_{\psi}} \ou \Big(
  \tr{\phi}^{v_{\phi}}_{\varepsilon_{\phi}} \et \Gamma ^{v}_{\varepsilon} \big( \Ex
  \phi \U \psi, \dom(v_K), v \big) \Big)
  \Big] \label{eq:cctlv-trans-until}
\end{equation}
with $\Theta^{v}_{\varepsilon} = \ET_{z \in \dom(v_K)} \big( \non
\tr{\varepsilon (z)}^{v_{\varepsilon
    (z)}}_{\varepsilon_{\varepsilon(z)}} \big)$ and, for $Z \neq
\emptyset$, $z \in Z$ and $v'$ a valuation:
\begin{multline}
  \Gamma^{ v}_{\varepsilon}(\Ex \phi \U \psi, Z,v') = \big( \non
  \tr{\varepsilon(z)}^{
    v_{\varepsilon(z)}}_{\varepsilon_{\varepsilon(z)}} \et \Gamma^{
    v}_{\varepsilon}(\Ex \phi \U \psi, Z \setminus \{z\},v') \big)
  \\
  \ou \big(
  \tr{\varepsilon(z)}^{v_{\varepsilon(z)}}_{\varepsilon_{\varepsilon(z)}}
  \et \Gamma^{ v}_{\varepsilon}(\Ex \phi \U \psi, Z \setminus
  \{z\},v'[z \leftarrow v(z)+1]) \big) \label{eq:13}
\end{multline}
and finally:
\begin{equation}
  \Gamma^{v}_{\varepsilon}(\Ex \phi \U \psi, \emptyset,v') = \left\{
  \begin{aligned}
    & \bot & & \text{ if } v = v',
    \\
    & \EX \tr{\Ex \phi \U \psi}^{v'}_{\varepsilon} & & \text{ otherwise.}
  \end{aligned}\right.
\label{eq:9}
\end{equation}
Finally, given a closed \CCTLv formula $\Phi$, we define its \CTL
translation $\tr{\Phi}$ as $\tr{\Phi}^\emptyset_\emptyset$.

Intuitively, the above translation of \emph{until} modalities with
valuation $v$ and environment $\varepsilon$ works by distinguishing
\emph{interesting} states, in which the value of at least one variable
in $\dom(v_K)$ changes, from uninteresting ones. The \CCTLv formula
$\Ex \phi \U \psi$ then holds if, and only if, after a finite sequence
of uninteresting states satisfying $\vfi$, either $\psi$ holds or the
run has reached an interesting state satisfying $\vfi$, after which
$\Ex \phi \U \psi$ holds with a suitably updated valuation.

Formula $\Theta^{v}_{\varepsilon}$ in Eq. \eqref{eq:cctlv-trans-until}
expresses the fact that the current state is uninteresting, and
$\Gamma^{ v}_{\varepsilon}(\Ex \phi \U \psi, \dom(v_K), v)$ that the
current state is interesting, in other words satisfies at least one of
the formulas $\varepsilon(z)$ for $z$ a variable with value at most
$K$ in $v$, and satisfies $\EX \tr{\Ex \phi \U
  \psi}^{v}_{\varepsilon}$. For such a state it is necessary to know
exactly which formulas $\varepsilon(z)$ are satisfied and this is done
by scanning the set $\dom(v_K)$, updating the valuation $v'$ for each
$z$ in turn whenever $\varepsilon(z)$ is attested to hold
(Eq. \eqref{eq:13}). If no $\varepsilon(z)$ holds in the current
state, which is witnessed by the fact that $v = v'$, the state is in
fact uninteresting and the whole scanning fails, otherwise the
unfolding process continues (Eq. \eqref{eq:9}).

Note how $v$ and $\varepsilon$ are restricted to relevant variables at
every recursive call to the above translation procedure (for instance
in
$\tr{\varepsilon(z)}^{v_{\varepsilon(z)}}_{\varepsilon_{\varepsilon(z)}}$). This
precaution is used to avoid cycles in the update of variables. It is
necessary, since simply translating $\varepsilon(z)$ with an
environment and valuation containing $z$ itself may generate an
infinite formula. It is also sufficient, since by definition of
consistent closures, $z \not\in \RV(\varepsilon(z),\varepsilon)$.

Formulas $\tr{\All \phi \U \psi}^{ v}_{\varepsilon}$ and $\Gamma^{
  v}_{\varepsilon}(\All \phi \U \psi, Z,v')$ are defined similarly by
replacing each occurrence of $\Ex$ with $\All$ in the above formulas.

\begin{lem}
  The above inductive definition for $\tr{\Phi}$ is well-founded, in
  other words $\tr{\Phi}$ is a finite \CTL formula. The DAG-size of
  $\tr{\Phi}$ is in $2^{O(|\Phi|^2)}$. 
\end{lem}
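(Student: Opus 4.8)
The plan is to adapt, almost verbatim, the argument used for Proposition~\ref{prop:csbb-to-ctl}: first check that the mutual recursion defining $\tr{\cdot}$ only unfolds finitely, then bound the number of distinct subformulas it can generate. All the content specific to $\CCTLv$ is already provided by the consistency conditions of Definition~\ref{def:consistent}, the syntactic restrictions of Remark~\ref{rem:cctlv-restriction}, and Lemma~\ref{prop-freeze-fvm}.

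\emph{Well-foundedness.} Every case of the definition except the unfolding of $\Ex\vfi\U\psi$ (and its $\All$ variant) recurses either on a strict subformula -- the boolean cases of Eq.~\eqref{eq:10} and the variable-binding case of Eq.~\eqref{eq:11} -- or not at all -- the constraint case of Eq.~\eqref{eq:11}. In the temporal case, Eq.~\eqref{eq:cctlv-trans-until}, $\Theta$ and $\Gamma$ perform two kinds of potentially dangerous calls. The first is the call to $\tr{\varepsilon(z)}^{v_{\varepsilon(z)}}_{\varepsilon_{\varepsilon(z)}}$, occurring in $\Theta^{v}_{\varepsilon}$ and in Eq.~\eqref{eq:13}: here consistency (Definition~\ref{def:consistent}, condition (3)) together with assumptions (1)--(2) of Remark~\ref{rem:cctlv-restriction} guarantees that every variable occurring in $\varepsilon(z)$, and in any subformula reachable from it, is $\prec z$, so that $z \notin \RV(\varepsilon(z),\varepsilon)$; thus the in-scope variable set $\dom(\varepsilon_{\varepsilon(z)})$ has a strictly smaller $\prec$-maximum and $z$ never re-enters scope. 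The second is the re-entry $\EX\,\tr{\Ex\vfi\U\psi}^{v'}_{\varepsilon}$ of Eq.~\eqref{eq:9}: there the underlying formula and environment are unchanged, while $v'$ is obtained from $v$ by incrementing a nonempty set of variables of $\dom(v_K)$ (the alternative $v=v'$ stops the recursion with $\bot$); since a variable is only ever incremented, and only while its value does not exceed $K$, all valuations arising in the translation take values in $\{0,\dots,K+1\}$, hence the quantity $\sum_{z\in\RV(\Ex\vfi\U\psi,\varepsilon)}\min(v(z),K+1)$ strictly increases at each such re-entry and is bounded by $(K+1)\cdot|\V(\Phi)|$. Ordering calls lexicographically by ($\prec$-maximum of the in-scope variables, then this capped-valuation measure, then subformula size) gives a well-founded order, so $\tr{\Phi}$ is a finite \CTL formula.

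\emph{DAG-size.} I would count the distinct formulas produced, just as in Proposition~\ref{prop:csbb-to-ctl}. Write $V=\V(\Phi)$ and let $K$ be the largest constant occurring in $\Phi$ or in a bound formula; since constants are binary-encoded, $K\in O(2^{|\Phi|})$ and $|V|\le|\Phi|$. By the observation above every relevant valuation takes values in $\{0,\dots,K+1\}$, so at most $(K+2)^{|V|}$ valuations occur; an environment is determined by its domain (a subset of $V$) together with the fixed binding of each variable of $\Phi$, so at most $2^{|V|}$ environments occur. Every auxiliary formula has one of the forms $\tr{\chi}^{v}_{\varepsilon}$, $\Theta^{v}_{\varepsilon}$ or $\Gamma^{v}_{\varepsilon}(\Ex\vfi\U\psi,Z,v')$ (or their $\All$-counterparts) with $\chi\in\subf(\Phi)$ and $Z\subseteq\dom(v_K)$, so their number is at most $|\Phi|\cdot 2^{|V|}\cdot(K+2)^{|V|}\cdot 2^{|V|}\cdot(K+2)^{|V|}$, which is $|\Phi|\cdot(2^{O(|\Phi|)})^{O(|\Phi|)}=2^{O(|\Phi|^2)}$. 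Since the translation also recurses into the at most $|\Phi|$ distinct formulas $\varepsilon(z)$, the total DAG-size stays in $2^{O(|\Phi|^2)}$.

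\emph{Main obstacle.} The delicate point is the termination of the in-place unfolding of until-formulas: one has to argue that the restriction to relevant variables performed at every recursive call neither resets the valuation-based measure nor re-introduces a variable into scope, and that the test $v=v'$ in Eq.~\eqref{eq:9} genuinely forbids a stationary loop. This is exactly where consistency of closures (no cyclic variable definitions) and the monotonicity and boundedness of variable values are used together; once these are settled, the DAG-size estimate is a routine counting argument parallel to that of Proposition~\ref{prop:csbb-to-ctl}. (Correctness of the translation, i.e. that $\tr{\Phi}\equiv\Phi$, is not part of this lemma and is established separately by induction along the same recursion scheme.)
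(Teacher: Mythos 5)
Your proposal is correct and follows essentially the same route as the paper: termination via the strict increase of the ($K{+}1$)-capped valuation at each re-entry in Eq.~\eqref{eq:9} (with the $v=v'$ test blocking stationary loops) together with the strict shrinking of $Z$ in $\Gamma$, and the DAG-size bound by counting valuations, subformulas and subsets of variables, giving $((K+O(1))^{n})^2\cdot|\Phi|\cdot 2^{n}\subseteq 2^{O(|\Phi|^2)}$. Your explicit treatment of the calls $\tr{\varepsilon(z)}^{v_{\varepsilon(z)}}_{\varepsilon_{\varepsilon(z)}}$ via the ordering $\prec$ and $z\notin\RV(\varepsilon(z),\varepsilon)$ is a welcome refinement of a point the paper's proof leaves implicit, but it does not change the argument.
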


\begin{proof}
  In Equations (\ref{eq:10}) and (\ref{eq:11}), all inductive uses of
  the translation function are performed over strictly shorter
  formulas. Even though this is not the case in Eq. (\ref{eq:9}), no
  recursive call is made unless the valuation $v'$ used in
  Eq. (\ref{eq:9}) is different from (hence necessarily strictly
  greater than) $v$. Since variables assigned a value greater than $K$
  do not belong to $\dom(v_K)$, this set will eventually become empty,
  meaning that no state is considered interesting after some
  point. Hence no infinite inductive ``call'' to $\tr{\Ex \phi \U
    \psi}^{v}_{\varepsilon}$ is possible. Finally, the definition of
  $\Gamma^{ v}_{\varepsilon}(\phi, Z, v')$ only refers to formulas
  $\Gamma^{ v}_{\varepsilon}(\phi, Z',v')$ with $Z'$ strictly included
  in $Z$.

  The maximal number of distinct valuations $v$ we need to consider is
  bounded by $(K+3)^{n}$ (since each of the $n$ variables can assume a
  value between $0$ and $K+1$ or be undefined). Since each $\Gamma^{
    v}_{\varepsilon}(\phi, Z, v')$ is indexed by two valuations $v$
  and $v'$, one sub-formula $\vfi$ (of which there are at most
  $|\Phi|$) and a set of variables $Z$ (at most $2^n$ possibilities),
  the total number of distinct such formulas to consider is less than
  $((K+3)^{n})^2 \cdot |\Phi| \cdot 2^n$. Overall, since $K \in
  O(2^{|\Phi|})$ due to the binary encoding and $n \in O(\Phi)$, this
  yields a worst-case DAG-size for $\tr{\Phi}$ in $O(|\Phi| \cdot
  (2^{|\Phi|}+3)^{2|\Phi|} \cdot 2^{|\Phi|}) \subseteq
  2^{O(|\Phi|^2)}$.
\end{proof}
 
We have the following correctness lemma:

\begin{lem}
  Let $(\Phi,\varepsilon)$ be a consistent \CCTLv closure, $K$ the
  maximal constant in $\Phi$ and $\varepsilon$. For every Kripke
  structure $\calS$, state $q$ of $\calS$ and $(K+1)$-bounded
  valuation $v$ we have:
  \[
  (q, v, \varepsilon) \sat_{\calS} \Phi \iff q \sat_{\calS}
  \tr{\Phi}^{v_{\Phi}}_{\varepsilon_{\Phi}}.
  \]
\end{lem}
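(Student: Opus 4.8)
The plan is to prove the equivalence by induction following the well‑founded order underlying the definition of $\tr{\cdot}$ established in the previous lemma — roughly, lexicographically on the structure of $\phi$, then on the ``budget'' $\sum_{z\in\dom(v_K)}(K+1-v(z))$ (which strictly drops whenever the translation recurses through an $\EX$ with a strictly larger valuation), then on the size of the scanned set $Z$ inside the auxiliary formulas, while the recursive calls to $\tr{\varepsilon(z)}$ descend along the $\prec$‑ordering of variables forced by consistency. The base cases are immediate: for $\Phi=P$ and the Boolean connectives, Equation~\eqref{eq:10} together with the restriction lemma relating $(q,v,\varepsilon)\sat\Psi$ to $(q,v_\Psi,\varepsilon_\Psi)\sat\Psi$ gives the result directly; for $\Phi=\sum_i\alpha_i z_i\sim c$ the translation in Eq.~\eqref{eq:11} evaluates to $\top$ or $\bot$ exactly as the semantic clause for constraints does; and for $\Phi=z[\psi].\vfi$ both sides unfold (by the semantic clause and by Eq.~\eqref{eq:11}) to the statement about $\vfi$ under $v[z\leftarrow 0]$ and $\varepsilon[z\leftarrow\psi]$, to which the induction hypothesis applies.

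The heart of the argument is the case $\Phi=\Ex\phi\U\psi$ (the case $\All\phi\U\psi$ being identical with the quantifiers swapped). Call a state along a run \emph{interesting} for the current $(v,\varepsilon)$ if it satisfies some $\varepsilon(z)$ with $z\in\dom(v_K)$, i.e.\ if visiting it increments a \emph{tracked} variable; by Lemma~\ref{prop-freeze-fvm} the values of the remaining variables (those already at $K+1$, hence outside $\dom(v_K)$) never again affect the truth of $\Phi$, so only updates at interesting states matter. The translation in Eq.~\eqref{eq:cctlv-trans-until} walks along a maximal block of \emph{uninteresting} states, each required to satisfy $\tr{\phi}^{v_\phi}_{\varepsilon_\phi}\et\Theta^{v}_{\varepsilon}$, where $\Theta^{v}_{\varepsilon}$ asserts precisely that no $\varepsilon(z)$ with $z\in\dom(v_K)$ holds; it then either stops in a state satisfying $\tr{\psi}^{v_\psi}_{\varepsilon_\psi}$, or reaches an interesting state satisfying $\tr{\phi}^{v_\phi}_{\varepsilon_\phi}\et\Gamma^{v}_{\varepsilon}(\Ex\phi\U\psi,\dom(v_K),v)$, whereupon $\Gamma$ scans $\dom(v_K)$ one variable at a time (Eq.~\eqref{eq:13}), recording in $v'$ one increment per $z$ for which $\tr{\varepsilon(z)}^{v_{\varepsilon(z)}}_{\varepsilon_{\varepsilon(z)}}$ holds, and finally (Eq.~\eqref{eq:9}) takes one $\EX$ step to $\tr{\Ex\phi\U\psi}^{v'}_{\varepsilon}$, unless $v'=v$ (the state was in fact uninteresting) in which case the branch dies.

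For ($\Rightarrow$), given a run $\rho$ and index $i$ witnessing $(q,v,\varepsilon)\sat\Ex\phi\U\psi$, let $j_0\le i$ be the least interesting index along $\rho$ (or $j_0=i$ if there is none). For $j<j_0$ the valuation $v+_\varepsilon\rho{|_{j-1}}$ agrees with $v$ on $\dom(v_K)$, so the induction hypothesis on $\phi$ (with the restriction lemma and Lemma~\ref{prop-freeze-fvm}, which makes $\tr{\varepsilon(z)}$ insensitive to the non‑tracked coordinates) gives $\rho(j)\sat\tr{\phi}^{v_\phi}_{\varepsilon_\phi}\et\Theta^{v}_{\varepsilon}$; if $j_0=i$ we conclude via the induction hypothesis on $\psi$, otherwise we set $v'$ to be $v+_\varepsilon\rho{|_{j_0}}$ truncated at $K+1$, note $v'\neq v$ so the budget strictly decreased, apply the induction hypothesis to $\tr{\Ex\phi\U\psi}^{v'}_{\varepsilon}$ at $\rho(j_0+1)$, and use the induction hypothesis on each $\varepsilon(z)$ (a strictly smaller instance, since $z\notin\RV(\varepsilon(z),\varepsilon)$ and its relevant variables are $\prec z$) to check that the scan of $\Gamma$ reconstructs exactly the passage from $v$ to $v'$; splicing the witnesses yields a run for $\tr{\Ex\phi\U\psi}^{v}_{\varepsilon}$. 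The direction ($\Leftarrow$) is symmetric: from a run witnessing the translated \CTL\ formula one reads off, via the induction hypotheses in reverse, a run witnessing $(q,v,\varepsilon)\sat\Ex\phi\U\psi$, the $\Gamma$‑scan again certifying that the valuation recovered along the way equals $v+_\varepsilon(\cdot)$.

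The main obstacle is purely organisational: one must pin down the induction measure so that all three kinds of recursive call — to $\tr{\phi}$/$\tr{\psi}$ (smaller formula), to $\tr{\varepsilon(z)}$ appearing in $\Theta$ and $\Gamma$ (fewer relevant variables, by consistency of the closure), and to $\tr{\Ex\phi\U\psi}^{v'}$ with $v'>v$ (smaller budget) — really do decrease it, and one must consistently pass through the restriction lemma and keep valuations capped at $K+1$ via Lemma~\ref{prop-freeze-fvm} at every step. Once this bookkeeping is in place, the run‑decomposition into alternating uninteresting blocks and single interesting steps is a routine adaptation of the correctness argument for the $\CCTL\to\CTL$ translation of Proposition~\ref{prop:csbb-to-ctl}.
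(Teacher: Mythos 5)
Your proposal is correct and follows essentially the same route as the paper: a structural induction whose temporal-modality case decomposes the witnessing run into maximal blocks of ``uninteresting'' states separated by single interesting steps, with a nested well-founded induction on the valuations (the paper counts the number of interesting positions $\ell$ in the forward direction and uses the ordering $v' \lhd v$ on $K$-bounded valuations in the converse, which is equivalent to your budget $\sum_{z\in\dom(v_K)}(K+1-v(z))$), and with the restriction lemma and Lemma~\ref{prop-freeze-fvm} invoked to cap valuations at $K+1$ and discard irrelevant variables. The only difference is organisational — you package all three descending quantities into one lexicographic measure up front, whereas the paper handles them in separate nested inductions — and the content of the argument is the same.
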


\begin{proof}
  The proof of the direct implication is done by structural induction
  over $\Phi$. We only detail the cases of variable definition and
  temporal modalities.
  \begin{iteMize}{$\bullet$}
  \item $\Phi \egdef z[\phi] . \psi$: Assume $(q,v,\varepsilon) \sat
    z[\phi] . \psi$. This is semantically equivalent to $(q,v[z
    \leftarrow 0],\varepsilon[z \leftarrow \phi]) \sat \psi$. By
    induction hypothesis $q \sat \tr{\psi}^{v[z \leftarrow
      0]_\psi}_{\varepsilon[z \leftarrow \phi]_\psi}$, hence $q \sat
    \tr{\Phi}^{v_{\Phi}}_{\varepsilon_{\Phi}}$.
  \item $\Phi \egdef \Ex \phi \U \psi$: Assume $(q,v,\varepsilon) \sat
    \Ex \phi \U \psi$. There exists a run $\rho = q_0 q_1 q_2 \ldots$
    with $q_0 = q$ and an index $i \geq 0$ such that $(q_i, v
    +_\varepsilon \rho{|_{i-1}}, \varepsilon) \sat \psi$ and for all
    $0 \leq j < i$, $(q_j, v+_\varepsilon \rho{|_{j-1}}, \varepsilon)
    \sat \phi$.
    For every $0 \leq j < i$, let $v_j$ be the valuation
    $v_\Phi+_{\varepsilon_\Phi} \rho{|_{j-1}}$, and $Z_j$ be the set
    of variables $z$ such that $v_j(z)\leq K$ and $v_{j+1}(z) = v_j(z)
    + 1$, i.e. the set of relevant variables whose value is
    incremented in state $q_j$.
    \\
    Let $j_1$,\ldots, $j_\ell$ be the positions in $\{0,\ldots,i-1\}$
    along $\rho$ where $Z_{j_h}$ is non-empty. We reason by induction
    over $\ell$. If $\ell=0$, then clearly $q \sat \Ex (
    \tr{\phi}^{v_\phi}_{\varepsilon_\phi} \et
    \Theta^{v_\Phi}_{\varepsilon_\Phi} ) \U
    \tr{\psi}^{v_\psi}_{\varepsilon_\psi}$, and thus $q \sat
    \tr{\Phi}^{v_\Phi}_{\varepsilon_\Phi}$. Now assume $\ell>0$, we
    have: $Z_{j} = \emptyset$ for $0\leq j < j_1$, $Z_{j_1} \not=
    \emptyset$, and:
    \[
    q_{j_1} \sat \underbrace{\ET_{z \in Z_{j_1}}\varepsilon_\Phi(z)}_{
      \Phi_1} \:\et\: \underbrace{\ET_{z \in
        \dom(\varepsilon_\Phi)\backslash Z_{j_1}} \non
      \varepsilon_\Phi(z)}_{ \Phi_2}
    \]
    Moreover we have $(q_{j_1+1},v_{j_1+1},\varepsilon) \sat \Ex \phi
    \U \psi$.  By induction hypothesis over $\ell$ we have $q_{j_1+1}
    \sat \tr{\Ex \vfi \U \psi}^{v_{j_1+1}}_{\varepsilon_\Phi}$ and
    thus: $q \sat \Ex ( \tr{\phi}^{v_\phi}_{\varepsilon_\phi} \et
    \Theta^{v_\Phi}_{\varepsilon_\Phi} ) \U (\Phi_1 \et \Phi_2 \et \EX
    \tr{\Ex \phi \U
      \psi}^{(v_{j_1+1})}_{\varepsilon_\Phi})$. Therefore we have $q
    \sat \tr{\Phi}^{v_\Phi}_{\varepsilon_\Phi}$.
  \item $\Phi \egdef \All \phi \U \psi$: in this case, every run from
    $q$ has to verify $\phi\U\psi$. We can reuse the same approach as
    before. In the general case, every run starts with a prefix along
    which every state $q_j$ is such that $Z_j$ is empty, followed by
    some state $q_{j_1}$ where $Z_{j_1} \neq \emptyset$, which
    satisfies $\AX\tr{\All \phi \U
      \psi}^{(v_{j_1+1})_\Phi}_{\varepsilon}$.
  \end{iteMize}

  \noindent
  The converse is also done by structural induction on $\Phi$. The
  case where $\Phi \egdef z[\phi] . \psi$ follows the same reasoning
  as above, only backwards. 
  When $\Phi \egdef \Ex \phi \U \psi$, we reason by induction on the
  following (well-founded) ordering of valuations. We write $v' \unlhd
  v$ whenever $\dom(v'_K) \subseteq \dom(v_K)$ and $\forall x \in
  \dom(v'_K), v'(x) \geq v(x)$, meaning that $v'$ assigns greater
  values than $v$ to all variables to which $v'$ assigns a value less
  than or equal to $K$, and $v' \lhd v$ if additionally $v' \neq
  v$. Assume $q \sat \tr{\Phi}^{v}_{\varepsilon}$, and consider the
  iterative unfolding of the definitions of subformula $\Gamma$ in
  $\tr{\Phi}^{v}_{\varepsilon}$. For this formula to hold, there must
  exist a satisfied formula $\Psi$, obtained by replacing each
  disjunction by one of its operands, resulting in a ``witness'' for
  the satisfaction of $\tr{\Phi}^v_\varepsilon$. $\Psi$ is of one of
  the forms:
 \begin{equation}
   \label{eq:7}
   \Psi = \Ex \big( \tr{\phi}^{v_{\phi}}_{\varepsilon_{\phi}} \et
   \Theta^{v}_{\varepsilon} \big) \U \big(
   \tr{\phi}^{v_{\phi}}_{\varepsilon_{\phi}} \et \ET_{z \in Z}
   \tr{\varepsilon(z)}^{v_{\varepsilon(z)}}_{\varepsilon_{\varepsilon(z)}}
   \et \ET_{z \in \dom(v_K) \setminus Z}
   \non\tr{\varepsilon(z)}^{v_{\varepsilon(z)}}_{\varepsilon_{\varepsilon(z)}}
   \et \EX \tr{\Phi}^{v'}_{\varepsilon} \big)
  \end{equation}
   for some non-empty $Z \subseteq \dom(v_K)$, and with $v'(z) = v(z) +
  1$ if $z \in Z$ and $v(z) \leq K$ and $v'(z) = v(z)$ otherwise, or
  \begin{equation}\label{eq:12}
    \Psi = \Ex \big( \tr{\phi}^{v_{\phi}}_{\varepsilon_{\phi}} \et
    \Theta^{v}_{\varepsilon} \big) \U
    \tr{\psi}^{v_{\psi}}_{\varepsilon_{\psi}}.
  \end{equation}
  In the former case (Eqn. (\ref{eq:7})), there must exist a run $\rho
  = q_0 q_1 \ldots$ and some $k \geq 0$ such that $q_i
  \not\sat_{\calS}
  \tr{\varepsilon(z)}^{v_{\varepsilon(z)}}_{\varepsilon_{\varepsilon(z)}}$
  for all $i < k$ and $z \in \dom(v_K)$, $q_i \sat_{\calS}
  \tr{\phi}^{v_{\phi}}_{\varepsilon_{\phi}}$ for all $i \leq k$, $q_k
  \sat_{\calS}
  \tr{\varepsilon(z)}^{v_{\varepsilon(z)}}_{\varepsilon_{\varepsilon(z)}}$
  for all $z \in Z$, $q_k \not\sat_{\calS}
  \tr{\varepsilon(z)}^{v_{\varepsilon(z)}}_{\varepsilon_{\varepsilon(z)}}$
  for all $z \in \dom(v_K) \setminus Z$, and $q_{k+1} \sat_{\calS}
  \tr{\Phi}^{v'}_{\varepsilon}$.

  Since $Z \neq \emptyset$, we have $v' \lhd v$, hence by our
  induction hypotheses over the structure of $\Phi$ and the ordering
  of valuations, we obtain that $(q_i, v, \varepsilon)
  \not\sat_{\calS} \varepsilon(z)$ for all $i < k$ and $z \in
  \dom(v_K)$, $(q_i, v, \varepsilon) \sat_{\calS} \phi$ for all $i
  \leq k$, $(q_k, v, \varepsilon) \sat_{\calS} \varepsilon(z)$ for all
  $z \in Z$, $(q_k, v, \varepsilon) \not\sat_{\calS} \varepsilon(z)$
  for all $z \in \dom(v_K) \setminus Z$, and $(q_{k+1}, v',
  \varepsilon) \sat_{\calS} \Phi$.

  Since the truth value of any subformula is independent of the
  variables which are irrelevant for that subformula or whose value is
  already greater than $K$ at the beginning of the run, and given the
  truth values of formulas $\varepsilon(z)$ along
  $\rho$, this implies that $(q_{k+1}, v +_\varepsilon \rho|_{k},
  \varepsilon) \sat_{\calS} \Ex \phi \U \psi$ and $\forall i
  \leq k, (q_{i}, v +_\varepsilon \rho|_{i-1}, \varepsilon)
  \sat_{\calS} \phi$, hence $(q_0, v, \varepsilon) \sat_\calS \Phi$,
  and this remains true for any valuation $v''$ and environment
  $\varepsilon''$ such that $v''_\Phi = v$ and $\varepsilon''_\Phi =
  \varepsilon$.

  The latter case (Eqn. (\ref{eq:12})) is easier and is solved
  similarly.  As previously, the $\All$ quantifier is also treated
  in the same fashion.
\end{proof}

\begin{exa}
  For the \CCTLv formula $\Phi = z[P] . z'[z>0] . \EF(z'>0 \et
  P')$, we obtain (after simplification) the following translation:
  \[
  \tr{\Phi} \;\eqdef\; \Ex \: (\non P) \: \U \Big(P \et \EX \big( \EX
  \: \EF P' \big) \Big)
  \]
  The two nested $\EX$ modalities are necessary because one must
  distinguish the first state $r$ where $P$ holds true from its
  successor $r'$, which is the first to satisfy $z>0$, and from the
  successor $r''$ of $r'$ which is the first state satisfying $z'>0$.
\end{exa}

\subsubsection{Model checking}

\begin{thm}
\label{theo-freeze-hard}
Model checking closed $\CCTLv$ formulas is \PSPACE-complete.
\end{thm}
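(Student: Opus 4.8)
The plan is to prove the two matching bounds separately.

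\emph{PSPACE-hardness.} I would reduce from the truth of a quantified Boolean formula $\Theta = Q_1 x_1\, Q_2 x_2\,\cdots\, Q_n x_n\ \phi$ with $\phi$ in $3$-CNF. Let $\calS$ be the Kripke structure formed of a chain of ``choice diamonds'': a state $s_0$, then for each $i$ two states $a_i$ (labelled $p_i$) and $b_i$ (labelled $\bar p_i$) with $s_{i-1}\to a_i$, $s_{i-1}\to b_i$, $a_i\to s_i$, $b_i\to s_i$, the state $s_i$ being labelled by a fresh proposition $m_i$, and a self-loop on $s_n$ for totality. A path from $s_0$ to $s_n$ thus picks a valuation of $x_1,\dots,x_n$. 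I would bind one counting variable per propositional variable at the outermost level and then alternate path quantifiers according to the $Q_i$:
\[
\Psi \;\egdef\; z_1[p_1].z_2[p_2].\cdots z_n[p_n].\;
Q'_1\top\U\bigl(m_1\et Q'_2\top\U\bigl(m_2\et\cdots\;Q'_n\top\U(m_n\et\widetilde\phi)\bigr)\bigr),
\]
where $Q'_i$ is $\Ex$ if $Q_i=\exists$ and $\All$ if $Q_i=\forall$, and $\widetilde\phi$ is $\phi$ with each positive literal $x_i$ replaced by the constraint $z_i\geq 1$ and each negative literal $\lnot x_i$ by $z_i=0$. Both $\calS$ and $\Psi$ are of polynomial size, and the ordering $z_1\prec\cdots\prec z_n$ together with the fact that each $z_i$ is bound once to a variable-free formula satisfies the requirements of Remark~\ref{rem:cctlv-restriction}. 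I would then check, by induction on the quantifier prefix, that $s_0\sat\Psi$ iff $\Theta$ holds: crossing the $i$-th diamond sets $z_i$ to $1$ or $0$ according to the chosen branch, the quantifier $Q'_i$ evaluated at $s_{i-1}$ chooses (resp.\ ranges over) this value exactly as $Q_i$ prescribes, and at $s_n$ the valuation records a complete assignment under which $\widetilde\phi$ evaluates to $\phi$.

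\emph{PSPACE-membership.} I would give a recursive procedure deciding $(q,v,\varepsilon)\sat\vfi$ for a consistent closure. By Lemma~\ref{prop-freeze-fvm} it suffices to keep valuations bounded by the largest constant $K$, each of which fits in polynomial space. The Boolean, atomic, constraint and variable-binding cases recurse on strictly simpler closures, or evaluate the constraint against $v$ directly. For $\Ex\vfi\U\psi$ (the case $\All\vfi\U\psi$ is dual) I would work in the configuration graph $G$ whose vertices are pairs (state, $K$-bounded valuation)---exponentially many, but each of polynomial size---and whose edges $(r,v_1)\to(r',v_2)$ are those with $(r,r')\in R$ and, for every $z$, $v_2(z)=\min(K{+}1,\,v_1(z)+1)$ when $\varepsilon(z)$ holds at $(r,v_1)$ and $v_2(z)=v_1(z)$ otherwise, the satisfaction of $\varepsilon(z)$ being decided by a recursive call on the strictly simpler closure $(\varepsilon(z),\varepsilon_{\varepsilon(z)})$, which mentions only variables $\prec z$. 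Since the relevant part of the valuation only grows (and is capped at $K{+}1$) along a run, $(q,v,\varepsilon)\sat\Ex\vfi\U\psi$ holds iff $G$ contains a path from $(q,v)$ to a vertex satisfying $\psi$ all of whose intermediate vertices satisfy $\vfi$; this is a reachability query, solvable in space polynomial in the (polynomial) vertex size by Savitch's technique, with the $\vfi$- and $\psi$-tests performed recursively. The universal case is handled dually, as non-existence of a ``bad'' path witnessing $\rho\not\sat\vfi\U\psi$.

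The step I expect to be the main obstacle is confirming that this recursion runs in polynomial space---in particular, that the nesting of $\Ex$ and $\All$ evaluations does not cause an exponential blow-up. Here I would observe that only temporal modalities add a stack frame; that a recursive call into some $\varepsilon(z)$ strictly lowers the largest $\prec$-rank of a variable in play, so such calls nest at most $|\Phi|$ deep; and that, between two consecutively nested such calls, the temporal nesting is bounded by the syntactic depth of $\Phi$. This caps the stack depth at $O(|\Phi|^2)$ frames, each carrying only a bounded valuation and $O(|\Phi|^2)$ bits of Savitch bookkeeping, hence polynomial total space. Termination of the recursion itself follows from the same well-foundedness argument that makes the $\CTL$ translation $\tr{\Phi}$ finite. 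Combining hardness and membership yields PSPACE-completeness.
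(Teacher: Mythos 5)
Your proposal is correct and follows essentially the same route as the paper: a QBF reduction over a chain of choice diamonds for \PSPACE-hardness (the paper binds one counting variable per \emph{clause} and checks $\ET_j z_j\geq 1$ at the end, whereas you bind one per propositional variable and substitute $z_i\geq 1$ / $z_i=0$ for literals --- both work), and for membership a search over $(K{+}1)$-bounded configurations $(q,v)$, with recursive calls into the $\prec$-smaller closures $(\varepsilon(z),\varepsilon_{\varepsilon(z)})$ to decide edges. The paper phrases membership as an \NSPACE\ guess-and-check (handling $\EG$ via a guessed recurring configuration) and then invokes Savitch, which is interchangeable with your deterministic Savitch reachability on the configuration graph, including the lasso detection needed for the infinite ``bad path'' case of the $\All$ modality.
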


\proof\hfill
\begin{iteMize}{$\bullet$} 
\item
\PSPACE-hardness can be proved by a reduction from the
quantified Boolean formula problem (QBF)\footnote{This is a
  simplification of the reduction used for $\TCTL_c$ over KS
  \cite{lst-TCS2001}.}.  Consider a QBF instance $ \calI \egdef
\exists x_1 \forall x_2 \ldots \exists x_{2p-1} \forall x_{2p} \cdot
\Phi$ where $\Phi$ is a propositional formula in 3-conjunctive normal
form (3-CNF) $\ET_{j=1\ldots m} (\ell^j_1 \ou \ell^j_2 \ou \ell^j_3)$
over $\{x_1,\ldots,x_{2p}\}$. Now consider the KS $\calS_\calI =
\tuple{Q,R,\ell}$ in Figure ~\ref{fig-freeze-qbf}.  We assume that
every state $q_i$ is labeled with its name, and every state $x_i$
(resp.\ $\bar{x_i}$) is labeled by the atomic proposition $C_j$ iff
$x_i$ (resp.\ $\non x_i$) is one of the literals in
$\{\ell^j_1,\ell^j_2,\ell^j_3\}$. Then $\calI$ is positive iff $q_1$
satisfies the following formula:
\[
 z_1[C_1]\cdots z_{m}[C_m]\cdot \EF
\bigg(q_2 \et \AF \Big(q_3 \et \EF \ldots \big(q_{2p} \et \AF (q_{2p+1} \et
\ET_{i=1\ldots m} (z_i \geq 1) ) \big) \Big) \bigg)
\]

\begin{figure}[t]
  \centering
  \begin{tikzpicture}[every join/.style={->}]
    \scriptsize

     \matrix[row sep=2.5mm, column sep= 1mm] {
      & \node (x1) {$x_1$}; & & \node (x2) {\makebox[1em][c]{$x_2$}}; & & &
      & \node (x2p-1) {$x_{2p-1}$}; & & \node (x2p) {$x_{2p}$}; \\

      \node[initial,initial text={}] (q1) {$q_{1}$}; & & \node (q2) {\makebox[1em][c]{$q_{2}$}}; 
      & & \node (q3) {\makebox[1em][c]{$q_{3}$}}; 
      & \node {$\cdots$}; &
      \node (q2p-1) {$q_{2p-1}$};  & & \node (q2p) {$q_{2p}$};  
      & & \node (q2p+1) {$q_{2p+1}$}; \\

      & \node (nx1) {$\bar{x}_1$}; & & \node (nx2) {\makebox[1em][c]{$\bar{x}_{2}$}}; & & &
      & \node (nx2p-1) {$\bar{x}_{2p-1}$}; & & \node (nx2p) {$\bar{x}_{2p}$};  \\
     };

     { [start chain] 
       \chainin (q1); \chainin (x1) [join]; \chainin (q2) [join];
       \chainin (x2) [join]; \chainin (q3) [join]; \chainin (q2p-1);
       \chainin (x2p-1) [join]; \chainin (q2p) [join]; \chainin (x2p)
       [join]; \chainin (q2p+1) [join];
     }

     { [start chain] 
       \chainin (q1); \chainin (nx1) [join]; \chainin (q2) [join];
       \chainin (nx2) [join]; \chainin (q3) [join]; \chainin (q2p-1);
       \chainin (nx2p-1) [join]; \chainin (q2p) [join]; \chainin (nx2p)
       [join]; \chainin (q2p+1)  [join]; 
     }

     \path (q2p+1) edge [->,in=-30,out=30,loop] (q2p+1);

  \end{tikzpicture}
  \caption{Kripke structure associated to a QBF instance over
    $\{x_1,\ldots,x_{2p}\}$.}
  \label{fig-freeze-qbf}
\end{figure}
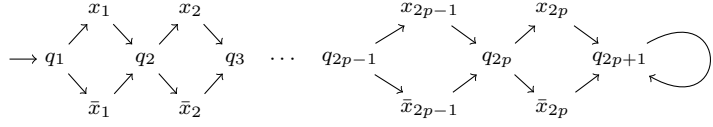

\item \PSPACE-membership is obtained by considering a
non-deterministic algorithm working in polynomial space to decide
whether a closed $\CCTLv$ formula $\Phi$ holds for a state $q$ within
a KS $\calS$. This provides an \NSPACE\ procedure which, by Savitch's
theorem, implies the existence of a \PSPACE\ algorithm.  

We assume that $\Phi$ contains $n$ variables $z_1, z_2, \ldots
z_n$. Let us call \emph{configuration} any triple $(q, v,
\varepsilon)$ where $q$ is a state, $v$ a valuation and $\varepsilon$
an environment. First note that valuations can be encoded in space
polynomial in $|\Phi|$ since it is sufficient to store the value for
each variable $z$ as a $K+1$-bounded counter, where $K$ is the maximal
constant occurring in $\Phi$, which requires at most $|\Phi|$ bits per
variable. Hence configurations can be encoded in space polynomial in
$|\Phi|$ and linear in $|\calS|$.

For any consistent closure $(\Psi, \varepsilon)$ with $\Psi \in
\subf(\Phi)$, we define an \NSPACE\ procedure $\Check(q, v,
\varepsilon, \Psi)$ to decide whether $\Psi$ holds over $(q, v,
\varepsilon)$.
We consider several cases according to the structure of $\Psi$, of
which we omit the simplest.

\begin{iteMize}{$-$}
\item $\Psi \egdef z_i[\psi_i].\vfi_i$ : the returned value is
  $\Check(q, v[z_i\leftarrow 0], \varepsilon[z_i \leftarrow \psi_i],
  \vfi_i)$.
\item $\Psi \egdef \sum_{i=1}^\ell \alpha_i \cdot z_i \sim c$ : the
  returned value is the boolean evaluation of the constraint
  $\sum_{i=1}^\ell \alpha_i \cdot v(z_i) \sim c$.
\item $\Psi \egdef \Ex \vfi_1 \U \vfi_2$: if $\Check(q, v,
  \varepsilon, \vfi_2)$ is evaluated to $\top$, then the returned
  value is $\top$. Else if $\Check(q, v, \varepsilon, \vfi_1)$ is
  $\bot$, then the result is $\bot$. Otherwise we proceed as follows:
  \begin{enumerate}[(1)]
  \item for every $z_i \in \RV(\Psi, \varepsilon)$, call $\Check(q,
    v_{\varepsilon(z_i)}, \varepsilon_{\varepsilon(z_i)},
    \varepsilon(z_i))$ and assign $1$ to an integer variable
    $\delta_i$ if the result is $\top$, and $0$ otherwise;
  \item guess a transition $q \rightarrow q'$ in $\calS$;
  \item replace the current configuration $(q, v, \varepsilon)$ by
    $(q', v', \varepsilon)$ by $v'(z_i) = \min(K+1, v(z_i) +
    \delta_i)$ for all $z_i \in \RV(\Psi, \varepsilon)$, and check
    whether $\vfi_2$ holds for it, and so on.
  \end{enumerate}
\item $\Psi \egdef \EG \vfi$: since there are finitely many
  $K+1$-bounded configurations, if there is a run for $\Psi$ starting
  in $q$ then there must also exist one whose corresponding sequence
  of bounded configurations is ultimately periodic, i.e. consists of a
  finite sequence of configurations followed by an infinite repetition
  of a finite configuration cycle (up to valuation equivalence). The
  procedure $\Check(q, v, \varepsilon, \Psi)$ can thus consist of the
  following steps:
  \begin{enumerate}[(1)]
  \item start guessing a sequence of transitions as in the previous
    case, updating the current state and valuation accordingly;
  \item in each new configuration $(q, v, \varepsilon)$, verify that 
   $\Check(q, v, \varepsilon, \vfi)$ is $top$;
  \item at some point, non-deterministically assume the current
    (bounded) configuration to occur infinitely often in some
    ultimately periodic run satisfying $\Psi$, and store the
    corresponding state $q_r$ and valuation $v_r$;
  \item resume guessing transitions, checking at each step that
    $\Check(q, v, \varepsilon, \vfi)$ is $\top$;
  \item return $\top$ if the previously stored recurring configuration
    is ever encountered again.
  \end{enumerate}
\end{iteMize}
Deciding $q \sat \Phi$ is then achieved by calling $\Check(q, v_0,
\varepsilon, \Phi)$.

The space used by $\Check(q, v_0, \varepsilon, \Phi)$ is evaluated as
follows: for $\Ex \vfi_1 \U \vfi_2$ or $\EG \vfi_1$, we need to store
at most three configurations $(q, v, \varepsilon)$ and $k$ boolean
values. We also need space for the recursive calls over
subformulas. The maximal number of such nested calls is
bounded\footnote{where $\TH(\vfi)$ is the temporal height of $\vfi$
  defined as usual except for the reset operator for which we have:
  $\TH(z[\psi].\vfi)=\TH(\vfi)$.}  by $\TH(\Phi) + \sum_{i=1}^{n}
\TH(\varepsilon(z_i))$: indeed the first term comes from the recursive
calls for $\Check(q,v,\vfi_i)$ and the second from the calls
$\Check(q,v,\varepsilon(z_i))$. Thus the maximal number of nested
calls is bounded by $|\Phi|$. \qed
\end{iteMize}

\begin{rem}
  As soon as subtractions are allowed in $\CCTLv$, model checking
  becomes undecidable as a simple consequence of
  Thm. \ref{thm:mc-BCS2} and Rem. \ref{rem:cctl-to-cctlv}.
\end{rem}

\subsubsection{Satisfiability} 

As in the case of $\CCTL$, the translation of $\CCTLv$ formulas into
$\CTL$ provides an optimal decision procedure for satisfiability:

\begin{thm}
  The satisfiability problem for $\CCTLv$ is 2-EXPTIME-complete.
  \label{satisf-cctlv-2exp}
\end{thm}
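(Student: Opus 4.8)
The plan is to follow closely the strategy of the proof of Theorem~\ref{satisf-2exp}: the substantial work — a size-controlled translation of $\CCTLv$ into $\CTL$ — has already been carried out above, so both bounds reduce to citing known facts.

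For the lower bound, I would recall that the satisfiability problem of $RTCTL^{=}$, the extension of $\CTL$ with an ``exactly $k$'' constraint on the number of transitions leading to the right-hand side of an Until, is 2-EXPTIME-hard~\cite{emerson92b}. Every such constraint is directly a $\CCTLa$ constraint over $\nb\top$ (for instance the $RTCTL^{=}$ formula ``$\psi$ within exactly $k$ steps'' is the $\CCTLa$ formula $\EFs{\nb\top = k}\psi$), so $RTCTL^{=}\subseteq\CCTLa\subseteq\CCTLb$. By Remark~\ref{rem:cctl-to-cctlv}, every $\CCTLb$ formula — in particular every (variable-free) $RTCTL^{=}$ formula — is equivalent to a $\CCTLv$ formula of linear size, and this translation clearly produces a \emph{closed} $\CCTLv$ formula when applied to an input containing no variables. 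Hence 2-EXPTIME-hardness transfers to $\CCTLv$ satisfiability.

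For the upper bound, I would use the translation established earlier in this section: every closed $\CCTLv$ formula $\Phi$ has an equivalent $\CTL$ formula $\tr{\Phi}$ of DAG-size $2^{O(|\Phi|^2)}$, so $\Phi$ is satisfiable iff $\tr{\Phi}$ is. It then suffices to decide $\CTL$ satisfiability of $\tr{\Phi}$. As already observed in the proof of Theorem~\ref{satisf-2exp}, the relevant parameter for the complexity of $\CTL$ satisfiability is the DAG-size (number of distinct sub-formulas) rather than the syntactic size: the alternating tree automaton built from a $\CTL$ formula in~\cite{kupferman2000b} has size equal to that number of sub-formulas, and its emptiness is testable in exponential time. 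Composing the $2^{O(|\Phi|^2)}$ blow-up of the translation with this single exponential yields a $2^{2^{O(|\Phi|^2)}}$-time decision procedure, i.e.\ membership in 2-EXPTIME. Together with the lower bound this gives 2-EXPTIME-completeness.

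The only genuine care required — and hence the ``main obstacle'', which is really just bookkeeping — is to verify that $\tr{\Phi}$ is closed and that it is its DAG-size, not its (potentially much larger) tree-size, that is fed to the $\CTL$ procedure; both points follow immediately from the way the translation and the automata-theoretic $\CTL$ procedure are set up.
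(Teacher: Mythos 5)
Your proposal is correct and follows essentially the same route as the paper: the upper bound via the $2^{O(|\Phi|^2)}$ DAG-size translation of closed $\CCTLv$ into $\CTL$ composed with an exponential (in DAG-size) $\CTL$ satisfiability procedure, and the lower bound by chaining the $RTCTL^{=}$ hardness underlying Theorem~\ref{satisf-2exp} with the linear translation of Remark~\ref{rem:cctl-to-cctlv}. You merely unpack the two citations the paper uses for hardness, which is fine.
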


\proof A closed $\CCTLv$ formula $\Phi$ is satisfiable (\ie\ it holds
for a state $q$ in a finite KS $\calS$) iff the $\CTL$ formula
$\tr{\Phi}$ is satisfiable. The (DAG) size of $\tr{\Phi}$ is in
$2^{O(|\Phi|^2)}$, which yields a 2EXPTIME procedure to decide
satisfiability of $\Phi$.  Hardness is a consequence of
Thm. \ref{satisf-2exp} and Rem. \ref{rem:cctl-to-cctlv}.  \qed

\subsection{Cumulative semantics for  \texorpdfstring{\CCTL}{CCTL}}

We now define a variant of $\CCTL$ based on an alternative semantics
for $\Ex\_\U\_$ and $\All\_\U\_$ modalities. In this semantics,
nesting two temporal modalities no longer resets the counting process
for the evaluation of the innermost modality: its constraints are then
interpreted over the whole run. In order to relax this semantics, we
add the modality $\Now$ (for 'now', or rather 'from now on') which
specifies that the counters have to be reset in the current state and
start counting again from the current position. Let us fix the syntax
of $\CCTLc$:

\begin{defi}
  \label{sec:def-cctlc}
  Given a set of atomic propositions $\AP$,
  we define:
  \[
  \CCTLc \ni \vfi,\psi \grameg P \gramou \vfi \et \psi \gramou \non
  \vfi \gramou \Now \vfi \gramou \Ex \vfi \Usc{C} \psi \gramou \All
  \vfi \Usc{C} \psi
  \]
  with $P\in \AP$. 
  As in the case of \CCTL, we use shorthands $\Fsc{C} \vfi$ and
  $\Gsc{C} \vfi$ to denote $\top \Usc{C} \vfi$ and $\non \Fsc{C} \non \vfi$
  respectively.
\end{defi}

$\CCTLc$ formulas are interpreted over pairs $(\pi,q)$ where $q$ is a
state of some Kripke structure $\calS$ and $\pi$ is a history
(\emph{i.e.}\ a finite prefix) such that $\pi \cdot q \in
\Execf(\calS)$. The following clauses\footnote{As previously, we only
  give the formal semantics of the main modalities. Boolean
  connectives are interpreted in a natural way.}  define when a
$\CCTLc$ formula $\Phi$ holds for $(\pi,q)$:
  \[
  \begin{array}{lcl}
    (\pi,q) \sat_\calS \Ex \vfi \Usc{C} \psi & \ \mbox{iff} \ \exists \rho \in
    \Exec(q), \: \exists i\geq 0, & \ (\pi\cdot\rho_{|i-1},\rho(i)) \sat_\calS \psi, \;
    \pi\cdot \rho_{|i-1} \sat_\calS C, 
    \\
    & & \mbox{ and } \forall 0 \leq j < i, \: (\pi\cdot\rho_{|j-1},\rho(j)) \sat_\calS \vfi 
    \\
    (\pi,q) \sat_\calS \All \vfi \Usc{C} \psi & \ \mbox{iff} \ \forall \rho \in
    \Exec(q), \: \exists i\geq 0, & \ (\pi\cdot\rho_{|i-1},\rho(i)) \sat_\calS \psi, \; 
    \pi\cdot\rho_{|i-1} \sat_\calS C, 
    \\
    & & \mbox{ and } \forall 0 \leq j < i, \:
    (\pi\cdot\rho_{|j-1},\rho(j)) \sat_\calS \vfi 
    \\
    (\pi,q) \sat_\calS \Now \vfi & \multicolumn{2}{l}{\ \mbox{iff} \
      (\epsilon,q) \sat_\calS \vfi} 
    \\
  \end{array}
\]

The addition of the $\Now$ modality allows us to easily express \CCTL
properties. Indeed each $\CCTL$ formula $\Phi$ can be easily
translated into a $\CCTLc$ formula $\Psi$ by guarding each of its
temporal modalities with $\Now$. Both formulas are equivalent, in the
sense that for any state $q$ and history $\pi$, we have $q \models
\Phi \iff (\pi, q) \models \Psi$. We also have the following useful
property:
\begin{equation}
  \label{equiv-constraint-cumul}
  (\pi,q) \sat \Ex \bot \Usc{C}
  \top  \iff \pi \sat C 
\end{equation}
For simplicity, in the following we will thus allow ourselves to
directly write constraints in the formula and not only as subscripts
of temporal modalities.

\medskip

\begin{exa}
  The $\CCTLc$ formula $\Ex \Fsc{\nb \top \leq k_1} (P_1 \et \Ex
  \Fsc{\nb\top \leq k_2} P_2)$ with $k_1\leq k_2$ holds for a state
  $q$ if and only if there exists a run with less than $k_2$
  transitions leading to some state satisfying $P_2$ and along this
  run there is a state satisfying $P_1$ located at less than $k_1$
  transitions from $q$.
\end{exa}
  
\begin{exa}
  The $\CCTLc$ formula $\Ex \Fsc{\nb\vfi \geq k_1} \Ex
  \Fsc{\nb\vfi\leq k_2} \psi$ is semantically equivalent to the
  $\CCTLba$ formula $\EFs{k_1 \leq \nb \vfi \leq k_2} \psi$.
\end{exa}

\begin{prop}
  Model checking $\CCTLc$ is \PSPACE-hard.
\end{prop}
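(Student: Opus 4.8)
The plan is to reduce from the \PSPACE-complete quantified Boolean formula problem (QBF), reusing verbatim the Kripke structure $\calS_\calI$ of Figure~\ref{fig-freeze-qbf} already used for Theorem~\ref{theo-freeze-hard}. Recall that for a QBF instance $\calI = \exists x_1 \forall x_2 \cdots \exists x_{2p-1}\forall x_{2p}\cdot\Phi$ with $\Phi = \ET_{j=1}^m(\ell_1^j \ou \ell_2^j \ou \ell_3^j)$ in 3-CNF, the structure $\calS_\calI$ has states $q_1,\ldots,q_{2p+1}$ where each $q_i$ ($i\leq 2p$) has the two successors $x_i$ and $\bar x_i$, both leading to $q_{i+1}$, and $q_{2p+1}$ loops on itself; each $x_i$ (resp. $\bar x_i$) carries proposition $C_j$ exactly when the literal $x_i$ (resp. $\lnot x_i$) occurs in clause $j$, and every $q_i$ is labelled only with its name. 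Any finite prefix $\pi$ with $\pi\cdot q_{2p+1}\in\Execf(\calS_\calI)$ visits exactly one of $x_i,\bar x_i$ for each $i$, hence encodes a valuation $v_\pi$ of $\{x_1,\ldots,x_{2p}\}$, and the $C_j$-labelled states it traverses are precisely the literals of clause $j$ made true by $v_\pi$; so, under the cumulative semantics, $\pi \sat \nb C_j \geq 1$ holds iff $v_\pi$ satisfies clause $j$.

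Since $\CCTLc$ constraints carry no Boolean connectives, I would encode the conjunction ``$v_\pi$ satisfies every clause'' by nesting cumulative modalities. Put
\[
  \theta \egdef \Ex\Fsc{\nb C_1 \geq 1}\Big(\Ex\Fsc{\nb C_2 \geq 1}\big(\cdots \Ex\Fsc{\nb C_m \geq 1}\top \cdots\big)\Big).
\]
Because $q_{2p+1}$ is an absorbing self-loop carrying none of the $C_j$, extending a history by copies of $q_{2p+1}$ changes no count; unfolding $\Fsc{\cdot}$ (each layer may be discharged at index $0$) then gives, by immediate induction on $m$, that $(\pi,q_{2p+1})\sat\theta$ iff $\pi$ traverses some $C_j$-state for every $j\in[1,m]$, i.e. iff $v_\pi\sat\Phi$ — this is the $\CCTLc$/$\CCTLba$ equivalence of the example preceding this proposition, iterated. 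For the quantifier alternation I would use ordinary ``eventually'' modalities over a trivially-true constraint $\nb\top\geq 0$ (note $\Ex\Fsc{\nb\top\geq 0}\chi$ behaves like $\EF\chi$ for reachability while still accumulating the history into $\chi$), and set, with $Q_i\egdef\Ex$ when $i$ is odd and $Q_i\egdef\All$ when $i$ is even,
\[
  \Phi_\calI \egdef \Now\; Q_1\Fsc{\nb\top\geq 0}\Big(q_2 \et Q_2\Fsc{\nb\top\geq 0}\big(q_3 \et \cdots Q_{2p}\Fsc{\nb\top\geq 0}\big(q_{2p+1}\et\theta\big)\cdots\big)\Big).
\]

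The correctness proof would be an induction on $k$ from $2p+1$ down to $1$: for any history $\pi$ with $\pi\cdot q_k\in\Execf(\calS_\calI)$ (which necessarily records a partial valuation of $x_1,\ldots,x_{k-1}$), the state $q_k$ under history $\pi$ satisfies the subformula of $\Phi_\calI$ rooted at the $q_k$-level iff the residual game $Q_k x_k\cdots\forall x_{2p}\cdot\Phi$ at that partial valuation is won, the base case $k=2p+1$ being the property of $\theta$. The inductive step rests on the fact that every run out of $q_k$ is forced through the single layer $\{x_k,\bar x_k\}$ and then through $q_{k+1}$, and that, thanks to the unique name labels, $q_{k+1}\et(\cdots)$ can hold only at the intended occurrence of $q_{k+1}$; hence $\Ex\Fsc{\nb\top\geq 0}(q_{k+1}\et\cdots)$ (resp. $\All\Fsc{\nb\top\geq 0}(q_{k+1}\et\cdots)$) faithfully realises an existential (resp. universal) choice of the truth value of $x_k$, the chosen value being recorded in the history threaded onwards. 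Taking $k=1$ and $\pi=\epsilon$ yields $q_1\sat\Phi_\calI$ iff $\calI$ is positive; since $\calS_\calI$ and $\Phi_\calI$ are computable in polynomial time and all constants in $\Phi_\calI$ are $0$ or $1$, \PSPACE-hardness follows.

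The only delicate point I anticipate is the semantic bookkeeping: one must check carefully that the cumulative counter threaded through the nested $\Fsc{\nb\top\geq 0}$ and $\Fsc{\nb C_j\geq 1}$ modalities really sums the literal occurrences along the \emph{whole} path issued from $q_1$, and only those — which is exactly what the leading $\Now$ (resetting the history) and the inertness of the $q_{2p+1}$ self-loop guarantee. Granting that, the alternation argument is essentially identical to the $\CCTLv$ construction of Theorem~\ref{theo-freeze-hard}.
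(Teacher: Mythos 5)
Your proposal is correct and follows essentially the same route as the paper: the same Kripke structure $\calS_\calI$ from Figure~\ref{fig-freeze-qbf}, nested $\EF$/$\AF$ modalities to realise the quantifier alternation, and cumulative counting of the $C_j$ labels at $q_{2p+1}$ to check clause satisfaction. The only cosmetic differences are that the paper writes the clause check as a Boolean conjunction $\ET_i(\nb C_i\geq 1)$ of constraints-as-formulas (via equivalence \eqref{equiv-constraint-cumul}) where you nest $m$ modalities $\Ex\Fsc{\nb C_j\geq 1}$, and that your leading $\Now$ is harmless but unnecessary (the paper explicitly notes $\Now$ is not needed for hardness).
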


\proof We reduce the QBF problem to a model-checking problem for
$\CCTLc$ by using exactly the same reduction as for $\CCTLv$
(Thm. \ref{theo-freeze-hard}): given an instance $\calI$ of QBF, we
consider the same KS $\calS_\calI$ and the following formula:
\[
\EF \bigg(q_2 \et \AF \Big(q_3 \et \EF \ldots \AF \big(q_{2p+1} \et \ET_{i=1\ldots
  m} (\nb C_i \geq 1) \big)\Big)\bigg)
\]
Recall that we can use constraints directly inside the formula due
to the equivalence \eqref{equiv-constraint-cumul} above. \qed

Note that we do not use $\Now$ to prove \PSPACE-hardness. To prove
membership in \PSPACE, we show that one can translate any $\CCTLc$
formula $\phi$ into an equivalent (and succinct) $\CCTLv$ formula
$\overline{\phi}$.  First given $\phi \in \CCTLc$, we use
$\SFnb{\phi}$ to denote the set of subformulas $\psi$ of $\phi$ such
that $\nb\psi$ occurs in a counting constraint inside $\phi$. We now
define $\overline{\Phi}$ as follows:
\[
\begin{array}{l}
  \overline{P}  \; \egdef \; P  
  \hspace{3cm}
  \overline{\vfi \et
    \psi}   \;\egdef\;  \overline{\vfi} \et \overline{\psi}
  \hspace{3cm}
  \overline{\non \vfi} \; \egdef \; \non \overline{\vfi}   
  \\[1.1ex]
  \overline{\Now \vfi} \; \egdef \;
  z_{\psi_1}[\overline{\psi_1}]. \:\ldots\:
  z_{\psi_k}[\overline{\psi_k}].\overline{\vfi} \;\;\; \mbox{with}\;
  \SFnb{\vfi} = \{\psi_1,\ldots,\psi_k\} 
  \\[1.1ex] 
  \overline{\Ex \vfi \Usc{C} \psi} \; \egdef \;  \Ex \overline{\vfi}
  \U (\overline{C} \et \overline{\psi})  
  \hspace{3cm}  
  \overline{\All \vfi \Usc{C} \psi}   \; \egdef \; \All
  \overline{\vfi} \U (\overline{C} \et \overline{\psi})   
  \\[1.2ex]
  {\overline{\sum_i \alpha_i \cdot \nb\vfi_i \:\sim c}  \; \egdef \;  
    \sum_i \alpha_i \cdot z_{\vfi_i}   \:\sim\: c} 
  \\ 
\end{array}
\]
Given a set of formulas $S$, a prefix $\pi$, a valuation $v$ for a
set of variables $V$ and an environment $\varepsilon$, we say that
$(v,\varepsilon)$ is \emph{compatible} with $(S,\pi)$ (written
$(v,\varepsilon) \unrhd (S,\pi)$ ) if and only if for any $\psi\in
S$, there is some $z_\psi \in \dom(v)$ such that $\varepsilon(z_\psi)
= \psi$ and $v(z_\psi) = |\pi|_{\psi}$.

We have the following property:

\begin{lem}
  Let $\Phi$ be a $\CCTLc$ formula, $q$ a state in some KS $\calS$,
  and $\pi \in \Execf(\calS)$ be a finite run such that $\pi\cdot q
  \in \Execf(\calS)$. Let $v: V \fleche \Nat \cup \{\bot\}$ be a
  valuation for a set of variables $V$ and let $\varepsilon$ be an
  environment such that $(v,\varepsilon)$ is compatible with
  $(\SFnb{\Phi},\pi)$.
  Then:
  \[
  (\pi,q) \sat_{\calS} \Phi \iff (q,v,\varepsilon) \sat_{\calS}
  \overline{\Phi}
  \]
\end{lem}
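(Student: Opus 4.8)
The plan is to prove the equivalence by structural induction on $\Phi$, nesting inside the cases for the $\Usc{C}$ modalities a secondary induction on the length of run prefixes. Note first that the statement is an ``iff'' at every level: the semantic clauses for $\Ex\vfi\Usc{C}\psi$ and for $\Ex\overline\vfi\U(\overline C\et\overline\psi)$ quantify over the same runs $\rho\in\Exec(q)$ and the same witness index $i$ (and dually for $\All$), so it suffices to match their remaining conjuncts termwise, and no separate treatment of the two implications is needed. I also use the elementary monotonicity fact that $\SFnb{\vfi'}\subseteq\SFnb{\Phi}$ for every subformula $\vfi'$ of $\Phi$ (since every counting constraint occurring in $\vfi'$ also occurs in $\Phi$), so that compatibility of $(v,\varepsilon)$ with $(\SFnb{\Phi},\pi)$ automatically yields compatibility with $(\SFnb{\vfi'},\pi)$.

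The easy cases are quickly disposed of. For $\Phi=P$, both sides reduce to $P\in\ell(q)$. For $\Phi=\vfi_1\et\vfi_2$ and $\Phi=\non\vfi$, compatibility passes to the immediate subformulas with the same $\pi,v,\varepsilon$, and one applies the induction hypothesis. A bare constraint $\Phi=(\sum_\ell\alpha_\ell\cdot\nb\vfi_\ell\sim c)$ stands, via \eqref{equiv-constraint-cumul}, for $\Ex\bot\Usc{\Phi}\top$, so $(\pi,q)\sat\Phi$ iff $\pi\sat\Phi$ iff $\sum_\ell\alpha_\ell\cdot|\pi|_{\vfi_\ell}\sim c$, whereas $(q,v,\varepsilon)\sat\overline\Phi=(\sum_\ell\alpha_\ell\cdot z_{\vfi_\ell}\sim c)$ iff $\sum_\ell\alpha_\ell\cdot v(z_{\vfi_\ell})\sim c$; since each $\vfi_\ell\in\SFnb{\Phi}$, compatibility gives $v(z_{\vfi_\ell})=|\pi|_{\vfi_\ell}$ and the two conditions agree. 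For $\Phi=\Now\vfi$, writing $\SFnb{\vfi}=\{\psi_1,\dots,\psi_k\}$ we have $\overline{\Now\vfi}=z_{\psi_1}[\overline{\psi_1}]\cdots z_{\psi_k}[\overline{\psi_k}].\overline\vfi$, which by the reset-operator semantics holds at $(q,v,\varepsilon)$ iff $\overline\vfi$ holds at $(q,v',\varepsilon')$, where $v'$ and $\varepsilon'$ are obtained from $v,\varepsilon$ by resetting each $z_{\psi_j}$ to $0$ and re-binding it to $\overline{\psi_j}$. As $|\epsilon|_{\psi_j}=0$, the pair $(v',\varepsilon')$ is compatible with $(\SFnb{\vfi},\epsilon)$, so the induction hypothesis applied to $\vfi$ with empty history gives $(q,v',\varepsilon')\sat\overline\vfi$ iff $(\epsilon,q)\sat\vfi$, that is, iff $(\pi,q)\sat\Now\vfi$. (Here one invokes the standing syntactic normalisation of $\CCTLv$ formulas from Remark~\ref{rem:cctlv-restriction}, after renaming and listing the $\psi_j$ in subformula order so that every free variable of $\overline{\psi_j}$ lies among $z_{\psi_1},\dots,z_{\psi_{j-1}}$ or is already bound in $\varepsilon$.)

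The heart of the proof is the case $\Phi=\Ex\vfi\Usc{C}\psi$ (the $\All$ case being identical with $\forall$ replacing $\exists$ over runs), where $\overline\Phi=\Ex\overline\vfi\U(\overline C\et\overline\psi)$. Fix $\rho\in\Exec(q)$. The crucial sub-claim, proved by induction on $j$, is that for every $j\geq0$ the pair $(v+_\varepsilon\rho_{|j-1},\varepsilon)$ is compatible with $(\SFnb{\Phi},\;\pi\cdot\rho_{|j-1})$. For $j=0$ this is the hypothesis of the lemma, since $\rho_{|-1}=\epsilon$ and $v+_\varepsilon\epsilon=v$. For the step, set $w=v+_\varepsilon\rho_{|j-1}$; by the inner hypothesis $w(z_\chi)=|\pi\cdot\rho_{|j-1}|_\chi$ for each $\chi\in\SFnb{\Phi}$, and $(w,\varepsilon)$ being compatible with $(\SFnb{\chi},\pi\cdot\rho_{|j-1})$ the structural induction hypothesis applies to $\chi$ at state $\rho(j)$ with history $\pi\cdot\rho_{|j-1}$, yielding $(\rho(j),w,\varepsilon)\sat\overline\chi$ iff $(\pi\cdot\rho_{|j-1},\rho(j))\sat\chi$. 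By definition of $+_\varepsilon$, $(v+_\varepsilon\rho_{|j})(z_\chi)$ is $w(z_\chi)+1$ exactly when $(\rho(j),w,\varepsilon)\sat\overline\chi$ and $w(z_\chi)$ otherwise, hence $(v+_\varepsilon\rho_{|j})(z_\chi)=|\pi\cdot\rho_{|j}|_\chi$, which is compatibility at step $j$ (the bindings $\varepsilon(z_\chi)=\overline\chi$ being unchanged). Granting the sub-claim, one applies the structural induction hypothesis position by position along $\rho$: for $0\le j<i$, $(\rho(j),v+_\varepsilon\rho_{|j-1},\varepsilon)\sat\overline\vfi$ iff $(\pi\cdot\rho_{|j-1},\rho(j))\sat\vfi$; and at position $i$, both $(\rho(i),v+_\varepsilon\rho_{|i-1},\varepsilon)\sat\overline\psi$ iff $(\pi\cdot\rho_{|i-1},\rho(i))\sat\psi$ and, using compatibility at $i$ together with the bare-constraint analysis above, $(\rho(i),v+_\varepsilon\rho_{|i-1},\varepsilon)\sat\overline C$ iff $\pi\cdot\rho_{|i-1}\sat C$. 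Matching these against the two semantic clauses gives $(q,v,\varepsilon)\sat\overline\Phi$ iff $(\pi,q)\sat\Phi$, completing the induction.

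The main obstacle is exactly this until case, and within it the propagation sub-claim: it forces the structural induction hypothesis to be invoked (on the constraint subformulas $\chi$) in the middle of a separate induction on prefix length, so the two inductions must be arranged to remain well-founded — they are, since each such $\chi$ is a strict subformula of $\Phi$ independently of $j$. The other delicate point is purely syntactic, namely ensuring that the $\CCTLv$ variables re-introduced at each $\Now$ respect the single-binding and ordering constraints of Remark~\ref{rem:cctlv-restriction}.
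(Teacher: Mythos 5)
Your proof is correct and follows essentially the same route as the paper: structural induction on $\Phi$, with the until case handled by tracking the updated valuations $v+_\varepsilon\rho_{|j-1}$ along the run and matching the semantic clauses conjunct by conjunct. The only difference is that you make explicit, via an inner induction on the prefix length using the structural induction hypothesis on the constraint subformulas, the compatibility propagation $(v+_\varepsilon\rho_{|j-1},\varepsilon)\unrhd(\SFnb{\Phi},\pi\cdot\rho_{|j-1})$ that the paper simply asserts as ``clearly'' --- a welcome filling-in rather than a departure.
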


\begin{proof}
  The proof is done by structural induction over $\Phi$. The result is
  direct for atomic propositions and boolean connectives.

  Let $\Phi \egdef \Ex \vfi \Usc{C} \psi$, and assume $(\pi,q) \sat
  \Phi$. Then there exist $\rho \in \Exec(q)$ and $i\geq 0$ such that
  (a) $(\pi\cdot \rho_{|i-1}, \rho(i)) \sat \psi$, (b)
  $\pi\cdot\rho_{|i-1} \sat C$ and (c) for all $0\leq j < i$ we have:
  $(\pi\cdot \rho_{|j-1}, \rho(j)) \sat \vfi$. Consider a valuation
  $v$ and an environment $\varepsilon$ such that $(v,\varepsilon)$ is
  compatible with $(\SFnb{\Phi},\pi)$.  Let $v_k$ be the valuation $(v
  +_\varepsilon \rho_{|k-1})$ for $k\in\{0,1,\ldots, i\}$ (where $v_0
  = v$). Clearly $(v_k,\varepsilon) \unrhd ( \SFnb{\Phi},\pi\cdot
  \rho_{|k-1})$, and since $\SFnb{\psi} \subseteq \SFnb{\Phi}$ and
  $\SFnb{\vfi} \subseteq \SFnb{\Phi}$, $(v_k,\varepsilon)$ is
  compatible with $(\SFnb{\psi},\pi\cdot \rho_{|k-1})$ and
  $(\SFnb{\vfi},\pi\cdot \rho_{|k-1})$.  By induction hypothesis, we
  can deduce from (a) and (c) that (a') $(\rho(i),v_i,\varepsilon)\sat
  \overline{\psi}$, and (c') $(\rho(j),v_j,\varepsilon)\sat
  \overline{\vfi}$ for any $j=0,\ldots,i-1$.  Moreover from (b) we can
  deduce: (b') $v_i \sat \overline{C}$.  Thus $(q,v,\varepsilon) \sat
  \Ex \overline {\vfi} \U (\overline{C} \et \overline{\psi})$.

  Conversely, assume $(q,v,\varepsilon) \sat \Ex \overline {\vfi} \U
  (\overline{C} \et \overline{\psi})$. Then there exists
  $\rho\in\Exec(q)$ and $i\geq 0$ such that (a) $(q,v+_\varepsilon
  \rho_{|i-1},\varepsilon) \sat \overline{\psi}$, (b) $v+_\varepsilon
  \rho_{|i-1} \sat \overline{C}$ and (c) for all $0\leq j< i$ we have:
  $(q,v+_\varepsilon \rho_{|j-1},\varepsilon) \sat \overline{\vfi}$.
  Now consider a prefix $\pi$ such that $\pi\cdot q\in\Execf(\calS)$
  and $(v,\varepsilon) \unrhd (\SFnb{\Phi},\pi)$. By induction
  hypothesis, we have: $(\pi\cdot \rho_{|i-1},\rho(i)) \sat \psi$ and
  $(\pi\cdot \rho_{|j-1},\rho(j)) \sat \vfi$ for any
  $j=0,\ldots,i-1$. Hence $(\pi,q) \sat \Ex \vfi \Usc{C} \psi$.

  The case $\Phi \egdef \All \vfi \Usc{C} \psi$ is treated similarly.

  Let now $\Phi \egdef \Now \vfi$, and assume $\SFnb{\vfi} =
  \{\psi_1,\ldots,\psi_k\}$. Let $\varepsilon$ be an environment such
  that $\SFnb{\Phi} \subseteq \dom(\varepsilon)$.  Then for any
  valuation $v_0$ that assigns $0$ to every $\psi_i$,
  \[
  (\pi,q) \sat \Now \vfi \iff (\epsilon,q) \sat \vfi \iff
  (q,v_0,\varepsilon) \sat \overline{\vfi}.
  \]
  This is equivalent to $(q,v,\varepsilon) \sat
  z_{\psi_1}[\overline{\psi_1}] . \ldots z_{\psi_k}[\overline{\psi_k}]
  . \overline{\vfi}$ for any valuation $v$ such that $(v,\varepsilon)
  \unrhd (\SFnb{\Phi},\pi)$. \qedhere
\end{proof}

In fact, $\CCTLc$ can be seen as a variant of $\CCTLv$ where only a
\emph{global} reset operator is available, whose effect corresponds to
the $\Now$ modality in $\CCTLc$. A direct consequence is:

\begin{prop}
  The model checking problem for $\CCTLcb$ is in \PSPACE.
\end{prop}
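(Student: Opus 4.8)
The plan is to reduce model checking for $\CCTLcb$ to model checking for $\CCTLv$ in polynomial time and then invoke Theorem~\ref{theo-freeze-hard}. First I would extend the translation $\overline{\cdot}$ given above for $\CCTLc$ to the Boolean-constraint variant $\CCTLcb$. The only clauses to add concern constraints that are Boolean combinations of atomic constraints $\sum_i \alpha_i \cdot \nb\vfi_i \sim c$; these are translated homomorphically, by $\overline{C \et C'} \egdef \overline{C} \et \overline{C'}$ and $\overline{\non C} \egdef \non\, \overline{C}$. This is legitimate since in $\CCTLv$ the expressions $\sum_i \alpha_i \cdot z_i \sim c$ are themselves formulas and may be freely combined with Boolean connectives. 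All other clauses of $\overline{\cdot}$ are unchanged; in particular the set $\SFnb{\vfi}$ of subformulas occurring under a $\nb$ is defined exactly as before (it is insensitive to the Boolean structure of constraints), and $\overline{\Now\vfi}$ still prefixes $\overline{\vfi}$ with the resets $z_{\psi_1}[\overline{\psi_1}].\,\cdots\,z_{\psi_k}[\overline{\psi_k}]$ for $\SFnb{\vfi} = \{\psi_1,\ldots,\psi_k\}$.

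Next I would check that the correctness lemma preceding this proposition extends verbatim to $\CCTLcb$. Its proof is a structural induction on $\Phi$, and the only case needing attention is that of a constraint, now a Boolean combination: one inserts a trivial sub-induction on its Boolean structure, using that $(v,\varepsilon) \unrhd (\SFnb{\Phi},\pi)$ forces $v(z_{\vfi_i}) = |\pi|_{\vfi_i}$ for every counting expression $\nb\vfi_i$ occurring in $\Phi$, so that $v \sat \overline{C} \iff \pi \sat C$ for every constraint $C$ of $\Phi$; the temporal-modality and $\Now$ cases are unchanged. Since $\Now\Phi$ is again a $\CCTLcb$ formula, applying the extended lemma with history $\epsilon$ and the empty environment — the outermost resets of $\overline{\Now\Phi}$ precisely set up an environment and valuation compatible with $(\SFnb{\Phi},\epsilon)$ — yields, for every Kripke structure $\calS$ and state $q$, that $q \sat_\calS \Phi$ (i.e. $(\epsilon,q)\sat_\calS\Phi$) holds if and only if $q \sat_\calS \overline{\Now\Phi}$, and $\overline{\Now\Phi}$ is a \emph{closed} $\CCTLv$ formula, since every variable it contains is bound by a reset introduced by some occurrence of $\Now$.

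Finally I would account for the cost of the reduction, which I expect to be the only genuinely delicate point. The \emph{tree}-size of $\overline{\Now\Phi}$ may grow exponentially — the treatment of $\Now$ and of constraints can duplicate a translated subformula — but its \emph{DAG}-size, i.e. its number of distinct subformulas, stays polynomial in $|\Phi|$, as do its number of variables (one per subformula of $\Phi$ appearing under a $\nb$), its largest constant (unchanged by $\overline{\cdot}$, hence still $O(2^{|\Phi|})$ and so $O(|\Phi|)$ bits per counter) and its temporal height. The \NSPACE\ procedure $\Check$ of Theorem~\ref{theo-freeze-hard} only ever stores a bounded number of configurations — a state together with a $(K{+}1)$-bounded valuation over the variables — and recurses to depth at most the temporal height plus that of the bound formulas; hence running it on (a DAG representation of) $\overline{\Now\Phi}$ and $q$ decides $q \sat_\calS \Phi$ in space polynomial in these quantities and in $|\calS|$. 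Since $\CCTLv$ model checking is in \PSPACE, this places model checking of $\CCTLcb$ in \PSPACE\ as well; everything else in the argument is a routine adaptation of constructions already present in the paper.
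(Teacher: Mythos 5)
Your proposal is correct and follows essentially the same route as the paper, which obtains this proposition as a direct consequence of the translation $\overline{\cdot}$ from $\CCTLc$ into $\CCTLv$ together with the \PSPACE\ bound of Theorem~\ref{theo-freeze-hard}; you merely spell out the details the paper leaves implicit (the homomorphic treatment of Boolean combinations of constraints, the closedness of $\overline{\Now\Phi}$, and the polynomial DAG-size of the translation).
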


\noindent
This implies the following corollary:

\begin{cor}
  The model checking problem for all $\CCTLc$ variants up to $\CCTLcb$
  is \PSPACE-complete.
\end{cor}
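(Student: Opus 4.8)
The plan is to sandwich the model-checking complexity of each variant between the two propositions just established, exploiting the monotonicity of the problem along the fragment hierarchy. For the lower bound, I would observe that the \PSPACE-hardness proof for $\CCTLc$ relies only on the modality $\EF$ together with atomic constraints of the simplest possible shape, $\nb C_i \geq 1$: it uses neither sums, nor integer coefficients, nor Boolean combinations inside constraints. Hence the reduction from QBF already witnesses \PSPACE-hardness of the \emph{weakest} member of the family, $\CCTLc$ itself; and since every richer variant — adding sums, coefficients, or Boolean connectives in constraints, up to $\CCTLcb$ — contains $\CCTLc$ as a syntactic fragment, each of them inherits this lower bound.

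For the upper bound, I would invoke the membership result for $\CCTLcb$: a formula $\phi$ is translated, via the map $\overline{(\cdot)}$ and the compatibility lemma proved above, into an equivalent $\CCTLv$ formula $\overline{\phi}$; model checking of the latter is then carried out by the \NSPACE\ procedure $\Check$ of Theorem~\ref{theo-freeze-hard}, so \PSPACE\ membership follows by Savitch's theorem. The point that needs checking is that $|\overline{\phi}|$ stays polynomial in $|\phi|$: each subformula of $\phi$ contributes only a bounded number of symbols to $\overline{\phi}$, every $\Now$ introduces at most $|\SFnb{\vfi}| \leq |\phi|$ variable binders (hence at most $|\phi|^2$ binders over the whole formula), and all integer constants are copied verbatim, preserving the binary encoding; after a size-preserving $\alpha$-renaming of the introduced binders to meet the syntactic conventions on $\CCTLv$ formulas, the $\CCTLv$ model-checking procedure runs in space polynomial in $|\phi|$ and linear in $|\calS|$. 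Exactly the same translation applies to every variant between $\CCTLc$ and $\CCTLcb$, since it pushes $\overline{(\cdot)}$ uniformly through the Boolean structure of constraints, replacing each atomic constraint $\sum_i \alpha_i \cdot \nb\vfi_i \sim c$ by $\sum_i \alpha_i \cdot z_{\vfi_i} \sim c$.

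Combining the two bounds yields that every $\CCTLc$ variant up to $\CCTLcb$ has a \PSPACE-complete model-checking problem. The only mildly delicate step is the polynomial size bound on $\overline{\phi}$ just discussed — which I would treat as the crux — together with the routine verification that the translation produces a well-formed $\CCTLv$ formula (unique binding of each variable, acyclic ordering of binders); everything else is immediate from the two propositions and Theorem~\ref{theo-freeze-hard}.
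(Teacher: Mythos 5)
Your proposal is correct and follows essentially the same route as the paper: the lower bound comes from the QBF reduction for the weakest variant (which indeed uses only constraints of the form $\nb C_i \geq 1$, so it even lands in $\CCTLca$), and the upper bound comes from the linear translation $\overline{(\cdot)}$ into $\CCTLv$ followed by the \PSPACE\ procedure of Theorem~\ref{theo-freeze-hard}. The paper leaves the size bound on $\overline{\phi}$ implicit, so your explicit check that the translation stays polynomial is a reasonable (and compatible) elaboration rather than a different argument.
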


\noindent
Again, as soon as diagonal constraints are allowed model checking
becomes undecidable:

\begin{thm}
  Model checking $\CCTLcpma$ is undecidable.
\end{thm}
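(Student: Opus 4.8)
The plan is to replay the two-counter machine reduction of Theorem~\ref{thm:mc-BCS2} (undecidability of $\CCTLbpma$ model checking) inside $\CCTLcpma$, using the cumulative semantics to make up for the absence of Boolean combinations inside a single constraint. I would keep the very same Kripke structure $\calS_\calM$ built from a two-counter machine $\calM$, with its labelling by $\halt$, $\Xplus$, $\Xmoins$, $\Xzero$ and $\Xzerob$, and the very same constraint
\[
C \egdef (\nb \halt \geq 1)\ \ou \OU_{X\in\{\texttt{C},\texttt{D}\}} \Big( (\nb \Xplus - \nb \Xmoins < 0)\ \ou\ ( \nb \Xplus - \nb \Xmoins > 0 \et \nb \Xzerob-\nb \Xzero>0) \Big),
\]
each atomic component of which is of the form $\sum_i \pm \nb\vfi_i \sim k$, hence a legitimate $\CCTLcpma$ constraint. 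Recall that in the $\CCTLbpma$ proof, $\calM$ does \emph{not} halt if and only if $q_1 \models_{\calS_\calM} \EGs{C}\bot$, i.e.\ iff there is an infinite run from $q_1$ none of whose strict prefixes satisfies $C$.

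The two ingredients that $\CCTLcpma$ seems to lack — a Boolean combination of atomic constraints, and an unconstrained existential-globally modality — are both recovered from the cumulative semantics. By the equivalence~\eqref{equiv-constraint-cumul}, a single constraint $D$ used as a formula (formally $\Ex\bot\Usc{D}\top$) holds at a pair $(\pi,q)$ exactly when the history $\pi$ satisfies $D$; combining such formulas with ordinary Boolean connectives \emph{at the formula level} therefore produces, from the compound $C$ above, a $\CCTLcpma$ formula — still denoted $C$ per the convention adopted right after~\eqref{equiv-constraint-cumul} — that holds at $(\pi,q)$ iff $\pi\models C$ in the usual sense. An unconstrained $\AF$ is in turn available as $\All\,\top\,\Usc{\nb\top\geq 0}\,\_$ with a trivially satisfied constraint, so the $\CCTL$-style formula $\EGs{C}\bot$ is faithfully mirrored by the $\CCTLcpma$ formula
\[
\Phi_\calM \egdef \non\, \All\,\top\,\Usc{\nb\top\geq 0}\, C .
\]
Unravelling the cumulative semantics from the empty history, $q_1 \models_{\calS_\calM}\Phi_\calM$ holds iff it is \emph{not} the case that every run from $q_1$ has a strict prefix satisfying $C$, i.e.\ iff some run from $q_1$ avoids $C$ at all of its strict prefixes — exactly the condition appearing in the $\CCTLbpma$ proof.

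It then suffices to reuse verbatim the correctness argument of Theorem~\ref{thm:mc-BCS2}: the shape of $\calS_\calM$ allows a run to follow the actual computation of $\calM$, and $C$ is violated at some strict prefix exactly when the run reaches $\halt$, over-decrements one of the counters, or performs a spurious zero-test; hence such a $C$-avoiding run exists precisely when $\calM$ never halts. Consequently $q_1 \models_{\calS_\calM}\Phi_\calM$ iff $\calM$ does not halt, and $\CCTLcpma$ model checking is undecidable. The one point I would take care to spell out — and the only place a subtlety hides — is that the nested constraint-formulas $\Ex\bot\Usc{D}\top$ accumulate no spurious history: when $\Phi_\calM$ is evaluated at a pair $(\rho_{|i-1},\rho(i))$ along a run from $q_1$, each such subformula collapses immediately, via~\eqref{equiv-constraint-cumul}, to a plain statement about $\rho_{|i-1}$, so that $C$ at $(\rho_{|i-1},\rho(i))$ behaves exactly as a Boolean combination of constraints evaluated on $\rho_{|i-1}$, which is what the reuse of the $\CCTLbpma$ argument requires.
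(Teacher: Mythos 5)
Your proof is correct and follows essentially the same route as the paper: a reduction from two-counter machine (non-)halting in which the cumulative semantics lets the diagonal constraints track counter values over the entire run from $q_1$, and formula-level Boolean connectives (via the equivalence \eqref{equiv-constraint-cumul}) stand in for Boolean combinations inside constraints. The paper's version merely streamlines the gadget --- it drops the $t_i$ states and the $\Xzerob$/$\Xzero$ bookkeeping, instead guarding each counting constraint by the atomic proposition labelling the current state (e.g.\ $X^0 \impl (\nb X^+ = \nb X^-)$) --- but the underlying argument is the same as yours.
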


\begin{proof}
  The proof is based on the same technique as that of
  Theorem~\ref{thm:mc-BCS2}.  Consider a two-counter machine $\calM$
  with counters $C$ and $D$ and $n$ instructions.  We define a Kripke
  structure $\calS_\calM = \tuple{Q,R,\ell}$ where $Q \egdef \{q_1,
  \ldots, q_n\} \cup \{r_i, s_i \mid \instr_i \egdef \tuple{\texttt{if
      \ldots}}\}$. The transition relation is defined as follows:
  \begin{iteMize}{$\bullet$}
  \item if $\instr_i \egdef \tuple{X \texttt{++}, j}$, then
    $(q_i,q_j) \in R$ ; and
  \item if $\instr_i \egdef \tuple{\texttt{if} X\texttt{=0 then}\
      j \ \texttt{else } X \texttt{--},\ k}$, then $(q_i,r_i)$,
    $(r_i,q_k)$, $(q_i,s_i)$, $(s_i,q_j)$ in $R$.
  \end{iteMize}
  The labeling $\ell$ is defined over the set $\{\halt\} \,\cup\,
  \bigcup_{X \in \{\texttt{C},\texttt{D}\}} \{X^+, X^-, X^0\}$ as
  $\ell(q_i) \egdef \{X^+\}$ if $\instr_i$ is an increment of $X$,
  $\ell(r_i) \egdef \{X^-\}$ and $\ell(s_i) \egdef \{X^0\}$ if
  $\instr_i$ is a conditional decrement of $X$ and $\ell(q_i) \egdef
  \{\halt\}$ if $\instr_i$ is the halting instruction. One can show
  that there exists a divergent run iff $q_1$ satisfies the formula
  $\Phi_{\calM}$ defined as follows:
  \[
  \EG \Big[ \non \halt \et \ET_{X \in \{\texttt{C},\texttt{D}\}} \Big(
  \big( X^0 \:\impl \: (\nb X^+ = \nb X^-) \big) \et \big( X^- \:\impl
  \: (\nb X^+ > \nb X^-) \big) \Big) \Big]
  \]
  Note that we do not use $\Now$ to prove undecidability.
\end{proof}

Using the same techniques as previously, we obtain the following
results for satisfiability:

\begin{thm}
  The satisfiability problem for all variants of $\CCTLc$ from
  $\CCTLca$ up to $\CCTLcb$ is 2EXPTIME-complete, and becomes
  undecidable for $\CCTLcpma$.
\end{thm}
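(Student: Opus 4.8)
The plan is to dispatch the final theorem by assembling results already developed for the related logics, since the statement concerns satisfiability of $\CCTLc$ and its variants. First I would handle the decidable cases. By the lemma preceding this theorem, any $\CCTLcb$ formula $\Phi$ can be translated into an equivalent $\CCTLv$ formula $\overline{\Phi}$ whose size is linear (more precisely, polynomial) in $|\Phi|$: each temporal modality is rewritten using explicit variables $z_\psi$ for every $\psi\in\SFnb{\cdot}$, and each $\Now$ is rewritten as a cascade of variable bindings. This translation preserves satisfiability: $\Phi$ is satisfiable iff there is a state $q$ in some KS with $q\sat_\calS\Phi$, which by the compatibility lemma (instantiated with the empty history $\pi=\epsilon$ and the all-zero valuation) is equivalent to $\overline{\Phi}$ being satisfiable as a closed $\CCTLv$ formula. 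Then I would invoke Theorem~\ref{satisf-cctlv-2exp}, which states that satisfiability for $\CCTLv$ is 2-EXPTIME-complete; since $|\overline{\Phi}|$ is polynomial in $|\Phi|$, this gives a 2-EXPTIME upper bound for $\CCTLcb$ and hence for all its fragments down to $\CCTLca$.

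For the matching lower bound, I would observe that $\CCTLca$ (indeed already $\CCTLc$ with only the constraint $\nb\top$) can encode $RTCTL^=$ in the same way $\CCTLa$ did in the proof of Theorem~\ref{satisf-2exp}: an $RTCTL^=$ formula $\Ex\vfi\U_{=k}\psi$ corresponds to $\Ex\vfi\Usc{\nb\top=k}\psi$ guarded appropriately by $\Now$ to recover the state-based semantics, so the 2-EXPTIME-hardness of $RTCTL^=$ satisfiability~\cite{emerson92b} transfers. Combined with the upper bound, this settles 2-EXPTIME-completeness for every variant from $\CCTLca$ up to $\CCTLcb$.

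For the undecidable case $\CCTLcpma$, I would reuse the two-counter machine encoding already given in the proof of Theorem~\ref{satisf-undec} for $\CCTLpma$ satisfiability, adapting it to the cumulative semantics exactly as the model-checking undecidability proof for $\CCTLcpma$ was adapted from that of $\CCTLbpma$ (Theorem~\ref{thm:mc-BCS2}). Concretely: a $\CCTLcpma$ formula describes a linear Kripke structure whose states are labeled by the current instruction and the current counter operation; a conjunction of $\AG$-formulas enforces the transition relation of $\calM$ step by step; diagonal constraints of the form $\AGsc{\nb\Xplus-\nb\Xmoins>0}(\non\Xzero)$ and $\AGsc{\nb\Xplus=\nb\Xmoins}(\non\Xmoins)$ — which under cumulative semantics range over the whole run from the start, so no $\Now$ is needed — guarantee faithful simulation of the counters; and a final $\AF\,\halt$ requires termination. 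The resulting formula is satisfiable iff $\calM$ halts, giving undecidability.

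The main obstacle is the transfer of 2-EXPTIME-hardness and, more subtly, making sure the succinctness of the $\CCTLc\to\CCTLv$ translation is genuinely polynomial (not exponential) so that the 2-EXPTIME upper bound survives: one must check that the number of fresh variables introduced for $\Now$ and for constraint subformulas is bounded by $|\Phi|$ and that each binding contributes only a constant amount to the size, which follows from the definition of $\overline{(\cdot)}$ and the fact that $\SFnb{\vfi}\subseteq\subf(\vfi)$. The other details (correctness of the encodings, transferability of the lower bounds) are routine given the already-established results for $\CCTL$, $\CCTLv$, and the corresponding model-checking undecidability.
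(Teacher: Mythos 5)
Your proposal is correct and follows essentially the same route the paper intends: the 2-EXPTIME upper bound via the polynomial-size $\CCTLc\to\CCTLv$ translation established in the preceding lemma combined with Theorem~\ref{satisf-cctlv-2exp}, the matching lower bound by transferring the $RTCTL^{=}$ hardness of Theorem~\ref{satisf-2exp} through the $\Now$-guarded embedding of $\CCTL$ into $\CCTLc$, and undecidability by adapting the two-counter-machine formula of Theorem~\ref{satisf-undec}. The paper gives no further detail beyond ``using the same techniques as previously,'' and your write-up supplies exactly those techniques, including the relevant check that the translation's size stays polynomial.
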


\section{Conclusion}

In several cases (particularly $\CCTLv$ and thus also $\CCTLb$ and
$\CCTLcb$), the logics we introduce are not more expressive than \CTL
but can concisely express properties which would be difficult to write
in that logic. In particular, even the fragment $\CCTLa$, as well as
$\CCTLb$ with unary-encoded coefficients, can yield exponentially more
succinct formulas than \CTL.

In terms of algorithmic complexity, even though $\CCTLpma$ is strictly
more expressive than \CTL, its model-checking remains polynomial. The
introduction of either coefficients or Boolean combinations increases
the complexity to $\DDP$, while the interplay between Boolean
connectives and possibly negative coefficients yields
undecidability. Similarly, satisfiability is 2-\EXPTIME-complete for
all classes without negative coefficients (when it is simply
\EXPTIME-complete for \CTL \cite{eme85}), and undecidable for all
above classes.
All complexity results are summarized in Figure \ref{global}. 

Further work on \CCTL will include completing the study of
succinctness of its fragments with respect to each other and to other
logics, looking for an upper complexity bound for the model-checking
of $\CCTLpm$, as well as investigating new kinds of constraints. We
also wish to pursue the work described in this article and in
\cite{lmp10ltl} by investigating counting extensions of other temporal
logics (for instance with past operators) as well as $\mu$-calculus.

\subsection*{Acknowledgements} The authors would like to thank the
anonymous referees for their very accurate and helpful comments.

\usetikzlibrary{calc, arrows}
\usetikzlibrary{backgrounds}

\begin{figure}[h]
  \scriptsize
  \subfloat[\CCTL model checking]{
    \begin{tikzpicture}[baseline=(current bounding box.west), scale=.8, node distance=.8cm and 1.6cm, on
      grid]
     \node (c1)     {$\CCTLa$};
      \node (c3)  [below right=of c1]    {$\CCTLpma$};
      \node (ac1) [below left=of c3]     {$\CCTL$};
      \node (ac3) [below right=of ac1]     {$\CCTLpm$};
     \node (bc1)  [above right=of c3]   {$\CCTLba$};
      \node (bc3)  [below right=of bc1]   {$\CCTLbpma$};
      \node (bac1) [below left=of bc3]   {$\CCTLb$};
      \node (bac3)  [below right=of bac1]   {$\CCTLbpm$};

      \path[->] (c1) edge (ac1) edge (bc1) edge (c3)
            (c3) edge (ac3) edge (bc3)
            (ac1) edge (ac3) edge (bac1)
            (ac3) edge (bac3)
            (bc1) edge (bc3) edge (bac1)
            (bc3) edge (bac3)
            (bac1) edge (bac3);

      \begin{pgfonlayer}{background}
        \begin{scope}[fill=black!5,rounded corners=3mm,draw=black!50]
          \filldraw ($(c1)+(-.8,.4)$) -| node[near start,above]
          {\P-complete} ($(c3)+(.8,-.4)$) -| ($(c1)+(-.8,-.5)$) --
          cycle;

          \filldraw ($(bc1)+(-1,.4)$) -- node[above] {\DDP-complete}
          ($(bc1)+(.9,.4)$) -- ($(bac1)+(.9,-.4)$) --
          ($(ac1)+(-.8,-.4)$) -- ($(ac1)+(-.8,.4)$) --
          ($(bac1)+(-1,.4)$) -- cycle;

          \filldraw ($(bc3)+(.9,.4)$) -- node[near start,above]
          {undec.} ($(bc3)+(-.9,.4)$) -- ($(bac3)+(-.9,-.4)$)
          -- ($(bac3)+(.9,-.4)$);

          \filldraw[dashed] ($(ac3)+(-.8,.4)$) rectangle
          ($(ac3)+(.8,-.4)$);
          \node[below=1mm of ac3.south] {{\sf EXPTIME}, \DDP-hard}  ;
        \end{scope}
      \end{pgfonlayer}
    \end{tikzpicture}}
    \hfil
    \subfloat[\CCTLc / \CCTLv model checking]{
    \begin{tikzpicture}[baseline=(current bounding box.west), scale=.8, node distance=.8cm and 1.6cm, on grid]
     \node (bc1)   {$\CCTLcba$};
      \node (c1)  [ left=of bc1]     {$\CCTLca$};
      \node (a2)  [below=of bc1]   {~};
      \node (c3)  [below right=of bc1]    {$\CCTLcpma$};
      \node (bc3)  [right=of c3]   {$\CCTLcbpma$};
      \node (bac1) [below =of a2]   {$\CCTLcb$};
      \node (ac1) [ left=of bac1]     {$\CCTLc$};
      \node (ac3) [below right=of bac1]     {$\CCTLcpm$};
      \node (bac3)  [right=of ac3]   {$\CCTLcbpm$};
      \node (a3)  [below=of bac1]   {~};
      \node (cctlv)  [below=of a3]   {$\CCTLv$};

      \path[->] (c1) edge (ac1) edge (bc1) edge (c3)
            (c3) edge (ac3) edge (bc3)
            (ac1) edge (ac3) edge (bac1)
            (ac3) edge (bac3)
            (bc1) edge (bc3) edge (bac1)
            (bc3) edge (bac3)
            (bac1) edge (bac3)
            (bac1) edge (cctlv);

      \begin{pgfonlayer}{background}
        \begin{scope}[fill=black!5,rounded corners=3mm,draw=black!50]
          \filldraw ($(c1)+(-.8,.4)$) -| node[near start,above]
          {\PSPACE-complete} ($(cctlv)+(.9,-.4)$) -| ($(c1)+(-.8,-1)$) --
          cycle;

          \filldraw ($(bc3)+(.9,.4)$) -- node[near start,above]
          {undecidable} ($(c3)+(-.9,.4)$) -- ($(ac3)+(-.9,-.4)$)
          -- ($(bac3)+(.9,-.4)$);
        \end{scope}
      \end{pgfonlayer}
    \end{tikzpicture}}

    \bigskip

    \subfloat[\CCTL satisfiability]{
    \begin{tikzpicture}[baseline=(current bounding box.west), scale=.8, node distance=.8cm and 1.6cm, on grid]
     \node (bc1)   {$\CCTLba$};
      \node (c1)  [ left=of bc1]     {$\CCTLa$};
      \node (a2)  [below=of bc1]   {~};
      \node (c3)  [below right=of bc1]    {$\CCTLpma$};
      \node (bc3)  [right=of c3]   {$\CCTLbpma$};
      \node (bac1) [below =of a2]   {$\CCTLb$};
      \node (ac1) [ left=of bac1]     {$\CCTL$};
      \node (ac3) [below right=of bac1]     {$\CCTLpm$};
      \node (bac3)  [right=of ac3]   {$\CCTLbpm$};

      \path[->] (c1) edge (ac1) edge (bc1) edge (c3)
            (c3) edge (ac3) edge (bc3)
            (ac1) edge (ac3) edge (bac1)
            (ac3) edge (bac3)
            (bc1) edge (bc3) edge (bac1)
            (bc3) edge (bac3)
            (bac1) edge (bac3);

      \begin{pgfonlayer}{background}
        \begin{scope}[fill=black!5,rounded corners=3mm,draw=black!50]
          \filldraw ($(c1)+(-.8,.4)$) -| node[near start,above]
          {2-\EXPTIME-complete} ($(bac1)+(.9,-.4)$) -| ($(c1)+(-.8,-1)$) --
          cycle;

          \filldraw ($(bc3)+(.9,.4)$) -- node[near start,above]
          {undecidable} ($(c3)+(-.9,.4)$) -- ($(ac3)+(-.9,-.4)$)
          -- ($(bac3)+(.9,-.4)$);
        \end{scope}
      \end{pgfonlayer}
    \end{tikzpicture}}
    \hfil
    \subfloat[\CCTLc / \CCTLv satisfiability]{
    \begin{tikzpicture}[baseline=(current bounding box.west), scale=.8, node distance=.8cm and 1.6cm, on grid]
      \node (bc1)   {$\CCTLcba$};
      \node (c1)  [ left=of bc1]     {$\CCTLca$};
      \node (a2)  [below=of bc1]   {~};
      \node (c3)  [below right=of bc1]    {$\CCTLcpma$};
      \node (bc3)  [right=of c3]   {$\CCTLcbpma$};
      \node (bac1) [below =of a2]   {$\CCTLcb$};
      \node (ac1) [ left=of bac1]     {$\CCTLc$};
      \node (ac3) [below right=of bac1]     {$\CCTLcpm$};
      \node (bac3)  [right=of ac3]   {$\CCTLcbpm$};
      \node (a3)  [below=of bac1]   {~};
      \node (cctlv)  [below=of a3]   {$\CCTLv$};

      \path[->] (c1) edge (ac1) edge (bc1) edge (c3)
            (c3) edge (ac3) edge (bc3)
            (ac1) edge (ac3) edge (bac1)
            (ac3) edge (bac3)
            (bc1) edge (bc3) edge (bac1)
            (bc3) edge (bac3)
            (bac1) edge (bac3)
            (bac1) edge (cctlv);

      \begin{pgfonlayer}{background}
        \begin{scope}[fill=black!5,rounded corners=3mm,draw=black!50]
          \filldraw ($(c1)+(-.8,.4)$) -| node[near start,above]
          {2-\EXPTIME-complete} ($(cctlv)+(.9,-.4)$) -| ($(c1)+(-.8,-1)$) --
          cycle;

          \filldraw ($(bc3)+(.9,.4)$) -- node[near start,above]
          {undecidable} ($(c3)+(-.9,.4)$) -- ($(ac3)+(-.9,-.4)$)
          -- ($(bac3)+(.9,-.4)$);
        \end{scope}
      \end{pgfonlayer}
    \end{tikzpicture}}
  \caption{Summary of model-checking and satisfiability complexity
    results. Arrows indicate syntactic inclusion or straight-forward
    linear translations (case of $\CCTLv$).}
   \label{global}
 \end{figure}
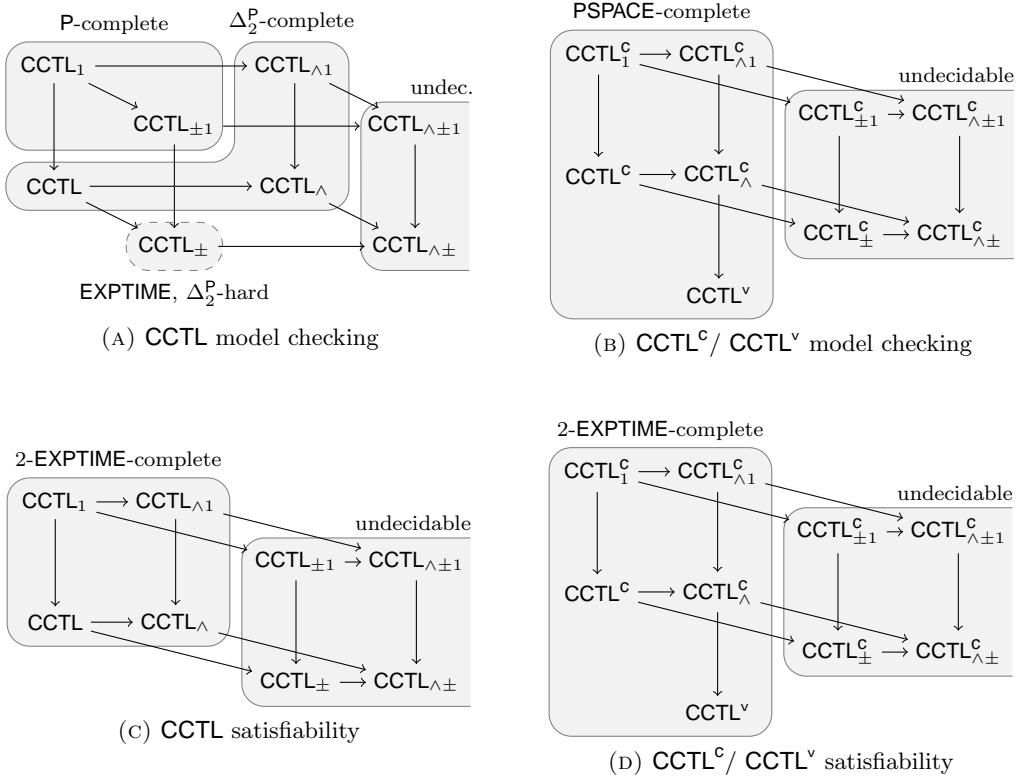

\bibliographystyle{alpha}
\bibliography{local}

\end{document}